\documentclass[a4paper,11pt]{article}
\usepackage[margin=1in,ignorehead, ignorefoot]{geometry}

\usepackage[utf8]{inputenc}

\usepackage{amsmath}
\usepackage{amsfonts}
\usepackage{amsthm}
\usepackage{amssymb}
\usepackage{mathtools}
\mathtoolsset{centercolon}
\usepackage{thmtools, thm-restate}

\usepackage{lmodern}
\usepackage{enumitem}
\usepackage{ifthen}
\usepackage{tikz}
\usepackage{url}
\usepackage{stmaryrd}
\usepackage{xparse}
\usepackage{bbm}
\usepackage{xcolor}
\usepackage{etoolbox}
\usepackage[subrefformat=parens]{subcaption}

\let\pathbibtex\path

\usetikzlibrary{decorations.pathmorphing}
\usetikzlibrary{decorations.pathreplacing}
\usetikzlibrary{calc}
\usetikzlibrary{positioning}
\usetikzlibrary{patterns}
\usetikzlibrary{patterns.meta}

\pgfdeclarelayer{background}
\pgfdeclarelayer{foreground}
\pgfsetlayers{background,main,foreground}

\usepackage[colorlinks]{hyperref}
\definecolor{lightblue}{rgb}{0.5,0.5,1.0}
\definecolor{darkred}{rgb}{0.5,0,0}
\definecolor{darkgreen}{rgb}{0,0.5,0}
\definecolor{darkblue}{rgb}{0,0,0.5}

\hypersetup{
	breaklinks=true,
	colorlinks=true,
	citecolor=darkblue,
	linkcolor=darkred,
	urlcolor=darkred,
	filecolor=darkgreen
	pdflang={en},
	pdftitle={Compressing CFI Graphs and Lower Bounds for the Weisfeiler-Leman Refinements},
	pdfauthor={Martin Grohe, Moritz Lichter, Daniel Neuen, Pascal Schweitzer}
}

\theoremstyle{plain}
\newtheorem{lemma}{Lemma}
\newtheorem{theorem}[lemma]{Theorem}
\newtheorem{corollary}[lemma]{Corollary}
\theoremstyle{definition}
\newtheorem{definition}[lemma]{Definition}
\newtheorem{remark}[lemma]{Remark}

\theoremstyle{plain}
\newtheorem{claim}{Claim}
\usepackage{chngcntr}
\counterwithin*{claim}{lemma}


\AtBeginEnvironment{proof}{\setcounter{claim}{0}}
\newenvironment{claimproof}[1][\proofname]{\begin{claimprooftemp}[#1]}{\end{claimprooftemp}}

\renewcommand{\phi}{\varphi}
\renewcommand{\epsilon}{\varepsilon}
\newcommand{\nat}{\mathbb{N}}
\newcommand{\ZZ}{\mathbb{Z}}
\newcommand{\FF}{\mathbb{F}}

\newcommand{\defining}[1]{\textbf{#1}}

\newcommand{\iso}{\cong}

\DeclarePairedDelimiter\set{\lbrace}{\rbrace}

\DeclarePairedDelimiterX\setcond[2]{\{}{\}}{\mathchoice{\,}{}{}{}#1 \;\delimsize\vert\; #2\mathchoice{\,}{}{}{}}

\newcommand{\msetopen}[0]{\{\hspace{-3pt}\{}
\newcommand{\msetclose}[0]{\}\hspace{-3pt}\}}
\newcommand{\mset}[1]{\msetopen #1 \msetclose}

\newcommand{\Bigmsetopen}[0]{\Big\{\hspace{-4pt}\Big\{}
\newcommand{\Bigmsetclose}[0]{\Big\}\hspace{-4pt}\Big\}}
\newcommand{\Bigmsetcondition}[2]{\Bigmsetopen\mathchoice{\,}{}{}{} #1 \;\Big\vert\; #2 \mathchoice{\,}{}{}{}\Bigmsetclose}

\newcommand{\auto}{\phi}
\newcommand{\autoA}{\phi}
\newcommand{\autoB}{\psi}

\newcommand{\tup}[1]{\bar{#1}}

\newcommand{\vertA}{u}
\newcommand{\vertB}{v}
\newcommand{\vertC}{w}

\DeclareMathSymbol{\shortminus}{\mathbin}{AMSa}{"39}
\newcommand{\inv}[1]{#1^{\shortminus 1}}

\newcommand{\bigO}{\mathcal{O}}

\newcommand{\coloring}{\chi}

\newcommand{\CFIsym}{\mathsf{CFI}}
\newcommand{\CFI}[1]{\CFIsym(#1)}

\newcommand{\compCFI}[2]{\CFI{#1}/_{#2}}
\newcommand{\precompCFI}[2]{{(\CFI{#1},#2)}}

\newcommand{\kwliter}[3]{#2_{#1}^{(#3)}}

\newcommand{\kequiv}[1]{\simeq_{#1}}
\newcommand{\kequivr}[2]{\simeq_{#1}^{#2}}

\newcommand{\compEquiv}{\equiv}
\newcommand{\compExtEquiv}{\compEquiv^*}
\newcommand{\pseudoclass}[1]{\eqclass{\approx}{#1}}

\newcommand{\eqclass}[2]{#2/_{#1}}

\definecolor{colA}{RGB}{0,80,250}
\definecolor{colB}{RGB}{60,200,230}
\definecolor{colC}{RGB}{60,180,75}
\definecolor{colD}{RGB}{230,25,75}
\definecolor{colE}{RGB}{245,130,48}
\definecolor{colF}{RGB}{145,30,180}
\definecolor{colG}{RGB}{240,50,230}
\definecolor{colH}{RGB}{170,110,40}

\usepackage{titling}

\setlength{\droptitle}{-4.5em}

\title{Compressing CFI Graphs and Lower Bounds for the Weisfeiler-Leman Refinements\thanks{
	The first author is funded by the European Union (ERC, SymSim, 101054974).
	The second and fourth author are also funded by the European Union (ERC, EngageS, 820148).
	Views and opinions expressed are however those of the author(s) only and do not necessarily reflect those of the European Union or the European Research Council.
	Neither the European Union nor the granting authority can be held responsible for them.}
}

\author{Martin Grohe\\ \small RWTH Aachen University \and
	Moritz Lichter\\ \small RWTH Aachen University \and
	Daniel Neuen\\ \small University of Bremen \and
	Pascal Schweitzer\\ \small TU Darmstadt
}

\date{}

\begin{document}
\renewcommand{\abstractname}{\vspace{-\baselineskip}}

\maketitle
\vspace{-2.6em}
\begin{abstract}
	\noindent\textbf{Abstract.} The $k$-dimensional Weisfeiler-Leman ($k$-WL) algorithm is a simple combinatorial algorithm that was originally designed as a graph isomorphism heuristic.
	It naturally finds applications in Babai's quasipolynomial-time isomorphism algorithm, practical isomorphism solvers, and algebraic graph theory.
	However, it also has surprising connections to other areas such as logic, proof complexity, combinatorial optimization, and machine learning.

	The algorithm iteratively computes a coloring of the $k$-tuples of vertices of a graph.
	Since Fürer's linear lower bound [ICALP 2001], it has been an open question whether there is a super-linear lower bound for the iteration number for $k$-WL on graphs.
	We answer this question affirmatively, establishing an $\Omega(n^{k/2})$-lower bound for all $k$.
\end{abstract}

\section{Introduction}

The Weisfeiler-Leman algorithm is a simple combinatorial algorithm that
iteratively colors tuples of vertices of a graph, attempting to
assign different colors to tuples that are structurally different (i.e., they belong to
different orbits of the automorphism group). While it is
known that this goal cannot be reached on all graphs~\cite{CaiFI92},
the algorithm still serves as a powerful heuristic for computing
automorphisms and isomorphism of graphs. The 2-dimensional
``classical'' Weisfeiler-Leman algorithm~\cite{WeisfeilerL68},
coloring pairs of vertices, is rooted in algebraic graph theory and
closely linked to structures known as coherent configurations. The
$k$-dimensional version, coloring $k$\nobreakdash-tuples, was introduced
later by Babai and Mathon (see~\cite{CaiFI92}). It plays an important
role in Babai's quasipolynomial-time isomorphism algorithm~\cite{Babai16}.
 Remarkably, the Weisfeiler-Leman algorithm is not only relevant
in the context of the graph isomorphism problem, but has surprising
connections to important and seemingly unrelated concepts in logic~\cite{CaiFI92,ImmermanL90},
combinatorics~\cite{Dvorak10},
combinatorial optimization~\cite{AtseriasM13,GroheKMS14,GroheO15,Malkin14},
proof complexity~\cite{AtseriasF23,BerkholzG15},
and machine learning~\cite{MorrisLMRKGFB21,MorrisRFHLRG19,ShervashidzeSLMB11,XuHLJ19}.

The $k$-dimensional Weisfeiler-Leman algorithm ($k$-WL) colors the
$k$-tuples of vertices of a graph in a sequence of refinement
rounds. Initially, each tuple is colored by the isomorphism type of
the subgraph it induces. In each round the coloring (more
precisely, the partition it induces) is refined by comparing the color patterns
of the neighborhoods of tuples, where the neighborhood of a tuple
consist of all tuples that only differ in one position. The algorithm
stops once a stable coloring is reached, which means that the
partition into color classes is no longer refined in subsequent steps.

The two main parameters of the algorithm are the \emph{dimension} and
the \emph{iteration number}, which is the maximum number of refinement
steps it needs as function of the number of vertices. Both
parameters have been studied to great depth (see~\cite{Kiefer20}). In
this paper, we focus on the iteration number. A trivial upper bound on
the iteration number of $k$-WL is $n^k-1$, where $n$ is the number of vertices.
For $1$-WL, this
trivial upper bound $(n-1)$ can actually be
reached~\cite{KieferM20}. Maybe surprisingly, this is no longer the
case for $2$-WL \cite{KieferS19}. In fact, the iteration number of
$2$-WL is in $\bigO(n\log n)$~\cite{LichterPS19} and, more generally, the
iteration number of $k$-WL is in $\bigO(n^{k-1}\log n)$ for all $k\ge 2$~\cite{GroheLN23}.
Not much is known in terms of lower bounds, at least
not on graphs. The WL algorithm can also be applied to arbitrary
relational structures, and it was proved in~\cite{BerkholzN23} that
the iteration number of $k$-WL on $k$-ary relational structures is
$n^{\Omega(\frac{k}{\log k})}$. This was recently improved to
$n^{\Omega(k)}$~\cite{GroheLN23}. However, for $k$-WL on graphs the
best-known lower bound so far has been $\Omega(n)$~\cite{Furer01}.

\begin{theorem}
	\label{thm:main}
	The maximum iteration number of $k$-WL on graphs is in $\Omega(n^{k/2})$.
\end{theorem}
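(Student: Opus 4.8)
The plan is to build explicit graphs on which $k$-WL needs $\Omega(n^{k/2})$ refinement rounds, and the natural starting point is the CFI construction of Cai, Fürer and Immerman \cite{CaiFI92}, since CFI graphs are the canonical source of WL lower bounds and Fürer's $\Omega(n)$ bound \cite{Furer01} already comes from a careful analysis of them. Recall that $\mathsf{CFI}(G)$ is obtained from a connected base graph $G$ by replacing each vertex with a gadget and each edge with a pair of parallel ``edge wires''; a subset of edges determines an isomorphism type, and two CFI graphs over the same $G$ are isomorphic if and only if the subsets have the same parity. The key known fact is that $k$-WL distinguishes the two CFI graphs over $G$ if and only if the \emph{tree-width}-like parameter (more precisely, the ``cops and robber'' number) of $G$ exceeds $k$, and the number of rounds $k$-WL needs is governed by how long the robber can evade $k$ cops on $G$. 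So to force a large iteration number one wants a base graph $G$ on which $k$ cops need a very long time to catch the robber, even though they eventually can.

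The obstacle with this plain approach is size: to get $n^{k/2}$ rounds on an $n$-vertex CFI graph, the base graph $G$ would need roughly $n^{k/2}$ edges while still having cop number only slightly above $k$, and it is not clear such graphs exist — long evasion usually costs many vertices. This is presumably where the word ``compressing'' in the title enters. The idea I would pursue is to take a base graph $G$ that does admit a long robber strategy against $k+1$ cops (for instance an appropriate subdivided grid, toroidal grid, or a long ``wall'' of bounded width but length $L$, giving evasion time $\Theta(L)$), build $\mathsf{CFI}(G)$, and then \emph{compress} it: identify or merge gadgets that are forced to behave identically, collapsing the blown-up CFI graph back down to a graph $H$ with far fewer vertices — ideally $n = O(L^{2/k})$ vertices — while proving that $k$-WL on $H$ still simulates $k$-WL on the uncompressed $\mathsf{CFI}(G)$ and hence still needs $\Theta(L) = \Omega(n^{k/2})$ rounds. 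Concretely, one shows the quotient map is a ``$k$-WL homomorphism'' in the sense that the stable coloring and, crucially, the \emph{round-by-round} colorings of $H$ and of $\mathsf{CFI}(G)$ correspond, so no rounds are lost under compression, and separately that the compression does not accidentally make the two structures $k$-WL-equivalent earlier.

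The technical heart, and the step I expect to be hardest, is this invariance of the iteration number under compression: one must track the coloring $\kwliter{k}{\coloring}{r}$ of $k$-tuples after $r$ rounds on both graphs and show by simultaneous induction on $r$ that two tuples get the same color in $H$ iff their preimages get the same color in $\mathsf{CFI}(G)$ — the forward direction (compression can only coarsen, so $H$ needs \emph{at most} as many rounds) is easy, but the reverse, that compression does not let WL distinguish things faster, requires a genuine understanding of which local configurations of a CFI graph survive the quotient, and it is where the algebraic structure of the CFI gadgets (the $\FF_2$-vector-space / parity structure) has to be exploited. I would set this up via a pebble-game / Ehrenfeucht–Fraïssé argument: give Duplicator a strategy in the $r$-round $(k+1)$-pebble bijective game on the two compressed CFI structures that lifts a robber strategy surviving $r$ moves on the base graph $G$, so that $k$-WL cannot distinguish them before round $\Theta(L)$. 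Combining the lower bound $\Theta(L)$ on rounds with the size bound $n = O(L^{2/k})$ on the compressed graph, i.e. $L = \Omega(n^{k/2})$, yields the theorem; one should also double-check small cases and the exact relationship between the WL dimension $k$ and the cop number ($k$ vs.\ $k+1$) so the exponent comes out as $k/2$ rather than $(k-1)/2$ or similar.
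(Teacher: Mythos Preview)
Your high-level plan matches the paper's approach closely: start from CFI graphs over a grid-like base graph where $k+1$ cops need $\Theta(L)$ rounds to catch the robber, compress the CFI graph down to $n = \Theta(L^{2/k})$ vertices, and argue via the bijective pebble game that Duplicator still survives $\Theta(L)$ rounds on the compressed graphs. The arithmetic and the reduction to a robber strategy are both right.

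The genuine gap is in \emph{what} the compression is. Your phrase ``identify or merge gadgets that are forced to behave identically'' suggests a canonical quotient, but no such quotient exists here: the paper explicitly reports that arbitrary identification of gadgets destroys the lower bound, and the gadgets it does identify are \emph{not} forced to behave identically. The actual construction takes a $k \times L$ cylindrical grid and, in row $i$, identifies base vertices whose column indices differ by a multiple of $f(k)\,p_i\,p_{i+1}$, where $p_0,\dots,p_{k-1}$ are pairwise coprime integers of size $\Theta(\sqrt{n})$ and $f(k)=2k+2$. Coprimality is essential: it guarantees that the only way $k$ cops (each blocking an entire equivalence class) can form a vertical separator is if their positions line up in $k$ consecutive columns, and the Chinese Remainder Theorem then forces this separator to be unique and to move by at most $2$ columns per round. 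Your proposed analysis---proving a round-by-round correspondence of colorings between $H$ and $\mathsf{CFI}(G)$---is also not what works; such a correspondence is false in general, since a cop on the compressed graph blocks many base vertices at once and can genuinely catch the robber faster than on the uncompressed base graph. Instead the paper defines a \emph{compressed} Cops and Robber game (cops occupy equivalence classes, the robber moves via $\equiv$-compressible twistings) and proves directly that the robber survives $\Omega(L)$ rounds in this stronger game, using a delicate case analysis of ``pseudo-separators'' that relies on the coprime periodic structure. Without that specific compression and the accompanying separator analysis, the outline does not go through.
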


Our lower bound is not only the first super-linear lower bound for the iteration number of $k$-WL on graphs, it even improves the lower bounds that have previously been known for $k$-ary relational structures (while using only the binary edge relation of graphs).
We also remark that, similar to~\cite{BerkholzN23,Furer01,GroheLN23}, our construction actually provides pairs of graphs which are only distinguished by $k$-WL after $\Omega(n^{k/2})$ many rounds.
Finally, we note that our lower bound essentially remains valid even if~$k$ depends on~$n$.
Indeed, our construction works for all $k = \bigO(n^{1/2 - \varepsilon})$ where $\varepsilon > 0$ is some arbitrarily small constant.
However, in this case the lower bound becomes a little weaker due to some factors depending on~$k$ (see Corollary~\ref{cor:lower-bound-iteration-number} for the precise result).

\paragraph{Techniques.}
In a nutshell, our proof starts from the linear lower bound construction by Fürer~\cite{Furer01} and applies a novel compression technique to reduce the size of the graphs.
The lower bounds for $k$-ary relational structures~\cite{BerkholzN23,GroheLN23} follow a similar general strategy, but apply a different compression technique, introduced by Razborov~\cite{Razborov16} in the context of proof complexity.
Our technique not only avoids the introduction of $k$-ary constraints and hence $k$-ary relations, but also leads to a stronger compression.
We note that, after its first publication, our construction has already been used in proof complexity to show treelike-size vs.~width trade-offs for resolution~\cite{BerholzLV24}.

Let us highlight a few more details. Our lower bound construction starts with the classic CFI construction used in~\cite{Furer01} to show a linear lower bound ($\Omega(n)$) for the iteration number of~$k$\nobreakdash-WL.
This construction uses a~$k\times n$ grid as a base graph, in which each vertex is replaced by
a certain gadget. The central idea behind our construction, borrowed from another somewhat unrelated iterative process~\cite{Schweitzer13}, is to reuse gadgets.
We partition the vertices of the base graph into classes, and use for every class the same gadget instead of a new copy for each vertex in the original construction.
At first sight this might seem absurd because the whole point of the gadgets in the CFI construction is that they can function independently as long as they originate from non-adjacent base vertices.
The challenge is thus to partition the base vertices in such a manner that the iteration number remains largely unaffected, but the CFI graph shrinks.
Our initial computer experiments showed that arbitrary reuse does not have the desired effect and instead the iteration number decreases dramatically.
However, reusing gadgets for base vertices far apart showed promising behavior in the experimentally observed number of iterations.
The actual partition must therefore be chosen carefully.
Indeed, in our final theoretical construction, base vertices in each row are partitioned in a periodic manner.
However we need to ensure that reused gadgets are conceptually ``sufficiently far apart'' to not interfere with gadgets of other rows.
One way to achieve this is to ensure that periods for different rows are chosen to be mutually coprime.
However, there is a complication, namely that choosing the periods to be mutually coprime leads to a type of ``global dependency'' among the gadgets.
We fix this with a compromise of choosing the periods to be as coprime as possible while avoiding the global dependency.
(The technical requirement is that every prime-power that appears in some period must appear in at least two of the periods.)

For a theoretical analysis of the iteration number, reusing gadgets for many base vertices leads to a significant increase in complexity.
To handle this complexity, our arguments exploit a connection between cops and robber games and CFI graphs.
For our purposes, we have to extend this connection to take into account the partitions we introduced.
(In the new game rather than blocking a single vertex, a cop now blocks an entire class of the partition.)

\section{Preliminaries}

We denote by $[\ell]$ the set $\set{1,\dots,\ell}$,
by~$[0,\ell]$ the set~$\set{0,\dots,\ell}$,
and by $\mset{a_1,\dots,a_m}$ the multiset containing the elements $a_1,\dots,a_m$.
A \defining{colored graph} is a tuple $G=(V,E,c)$,
where $c \colon V \to \nat$ assigns colors to vertices.
The vertex set~$V$ of~$G$ is denoted by $V(G)$ and the edge set by $E(G)$.
We only consider simple graphs in this paper.
The distance of two sets $W,W' \subseteq V$
is the minimal distance of two vertices $\vertA \in W$ and $\vertA' \in W'$.
Isomorphisms of colored graphs have to be color-preserving,
i.e., have to map vertices of color $i$ to vertices of color $i$.
We also consider (colored) graphs with an equivalence relation on the vertices, which we denote by $(G,\equiv)$ or $(V,E,c,\equiv)$.
The equivalence relation has to be preserved by isomorphisms.

\paragraph{The Weisfeiler-Leman Algorithm.}
We describe the $k$\nobreakdash-dimensional Weisfeiler-Leman algorithm ($k$-WL).
Fix $k \geq 2$ and an arbitrary colored graph $G=(V,E,c)$.
The $k$\nobreakdash-WL algorithm computes an isomorphism-invariant coloring of $k$\nobreakdash-tuples of~$V$.
Such a coloring is a function $\coloring\colon V^k \to C$ for some finite set of colors~$C$.
For two such colorings $\coloring_i\colon V^k \to C_i$ for $i \in [2]$,
the coloring~$\coloring_1$ \defining{refines}~$\coloring_2$
if $\coloring_1(\tup{\vertA}) = \coloring_1(\tup{\vertB})$
implies $\coloring_2(\tup{\vertA}) = \coloring_2(\tup{\vertB})$
for all $\tup{\vertA},\tup{\vertB} \in V^k$.
The colorings are \defining{equivalent} if~$\coloring_1$ refines~$\coloring_2$ and~$\coloring_2$ refines~$\coloring_1$.
In this case, both colorings induce the same partition of~$V^k$ into color classes.

In the \defining{initial coloring} $\kwliter{k}{\coloring}{0}$,
two $k$\nobreakdash-tuples $\tup{\vertA} = (\vertA_1,\dots, \vertA_k),\tup{\vertB} = (\vertB_1,\dots, \vertB_k)\in V^k$ get the same color $\kwliter{k}{\coloring}{0}(\tup{\vertA})=\kwliter{k}{\coloring}{0}(\tup{\vertB})$ if and only if the map defined via $\vertA_i\mapsto\vertB_i$ for every ${i \in [k]}$ is an isomorphism from the induced subgraph $G[\set{\vertA_1,\dots, \vertA_k}]$ to the induced subgraph $G[\set{\vertB_1,\dots, \vertB_k}]$.
The computed coloring is refined iteratively via
\[\kwliter{k}{\coloring}{r+1}(\tup{\vertA}) \coloneqq \Big(\kwliter{k}{\coloring}{r}(\tup{\vertA}), \Bigmsetcondition{\bigl(\kwliter{k}{\coloring}{r}(\tup{\vertA}[\vertC/1]), \dots, \kwliter{k}{\coloring}{r}(\tup{\vertA}[\vertC/k])\bigr)}{\vertC \in V}\Big)\]
for every $\tup{\vertA} \in V^k$.
Here, $\tup{\vertA}[\vertC/i]$ denotes the tuple obtained from~$\tup{\vertA}$
by replacing the $i$-th entry with the vertex~$\vertC$.
Because $\kwliter{k}{\coloring}{r}(\tup{\vertA})$ is included in the new color of~$\tup{\vertA}$,
the coloring~$\kwliter{k}{\coloring}{r+1}$ refines~$\kwliter{k}{\coloring}{r}$.
Because we color $k$\nobreakdash-tuples,
there is an $r < |V|^k$ such that $\kwliter{k}{\coloring}{r}$ is equivalent to~$\kwliter{k}{\coloring}{r+1}$.
We say that $k$\nobreakdash-WL \defining{stabilizes} after at most~$r$ iterations.
The \defining{iteration number} of~$k$\nobreakdash-WL on~$G$ is the minimal number~$r$
such that $k$\nobreakdash-WL stabilizes after~$r$ iterations.

Let~$H$ be another colored graph with vertex set~$W$.
We now refer with $\kwliter{k}{\coloring}{r}[G]$ and $\kwliter{k}{\coloring}{r}[H]$ to the colorings
computed by $k$\nobreakdash-WL after~$r$ iterations on~$G$ and~$H$, respectively.
We say that $k$\nobreakdash-WL \defining{distinguishes the colored graphs~$G$ and~$H$}
after~$r$ rounds if there is a color~$q$ such that
\[\left|\setcond*{\tup{\vertA} \in V^k}{\kwliter{k}{\coloring}{r}[G](\tup{\vertA}) = q}\right| \neq \left|\setcond*{\tup{\vertB} \in W^k}{\kwliter{k}{\coloring}{r}[H](\tup{\vertB}) = q}\right|.\]
Note that if $k$\nobreakdash-WL distinguishes~$G$ and~$H$ after $r+1$ rounds but not after~$r$,
then the iteration number of $k$\nobreakdash-WL on~$G$ and on~$H$ is at least $r+1$.

We also apply the WL algorithm to pairs $(G,\compEquiv)$ of a graph~$G$ extended by an equivalence relation~$\compEquiv$.
The only difference there is that in the initial coloring $\kwliter{k}{\coloring}{0}$ the isomorphisms between the induced subgraphs of tuples also need to preserve the equivalence relation.

\paragraph{The Bijective Pebble Game.}
It is well-known that the distinguishing power of $k$\nobreakdash-WL is captured by the following game.
The \defining{bijective $k$-pebble game}~\cite{Hella96}
is a game played by two players called Spoiler and Duplicator
on two colored graphs~$G$ and~$H$ with vertex sets~$V$ and~$W$, respectively.
There is a pair of pebbles~$p_i$ and~$q_i$ for every~$i\in [k]$.
The pebbles~$p_i$ can be placed on~$V$ and the pebbles~$q_i$ on~$W$.
When~$\ell$ pebble pairs are placed for some~$\ell \in [k]$,
then the \defining{position} in the game is the pair~$\tup{\vertA}, \tup{\vertB}$
for~$\tup{\vertA} = (\vertA_1, \dots, \vertA_\ell) \in V^\ell$
and~$\tup{\vertB} = (\vertB_1, \dots, \vertB_\ell) \in W^\ell$
where corresponding pebbles are placed on~$\vertA_i$ and~$\vertB_i$ for every~$i \in [\ell]$.
Spoiler wins immediately if~$|V| \neq |W|$.
Otherwise, the pebbles are initially placed beside the graphs.
A round of the game is played as follows:
\begin{enumerate}
	\item Spoiler picks up a pair of pebbles $p_i$ and $q_i$.
	\item Duplicator chooses a bijection $h\colon V \to W$.
	\item Spoiler chooses a vertex $\vertA \in V$ and
	places~$p_i$ on~$\vertA$ in~$G$ and~$q_i$ on~$h(\vertA)$ in~$H$.
\end{enumerate}
Spoiler wins in~$r$ rounds,
if after at most~$r$ rounds the current position $\tup{\vertA}, \tup{\vertB}$  does not induce a partial isomorphism.
That is, the mapping $\vertA_i \mapsto \vertB_i$ for every $i \in [\ell]$
is not an isomorphism of the induced subgraphs $G[\set{\vertA_1, \dots, \vertA_\ell}]$ and $H[\set{\vertB_1, \dots, \vertB_\ell}]$.
Duplicator wins in~$r$ rounds
if Spoiler does not win in~$r$ rounds.
Spoiler wins a play if Spoiler wins after some round.
Duplicator wins if Spoiler never wins.
Spoiler (respectively, Duplicator) has a \defining{winning strategy} in~$r$ rounds if they can force a win interdependently of the moves of the other player.
We write $G \kequivr{k}{r} H$ if Duplicator has a winning strategy in $r$ rounds
and $G, \tup{\vertA} \kequivr{k}{r} H, \tup{\vertB}$ in case Duplicator
has a winning strategy when starting in position $\tup{\vertA},\tup{\vertB}$.
When considering the game without a fixed number of rounds,
we write $G \kequiv{k} H$ if Duplicator has a winning strategy.

\begin{lemma}[\cite{Hella96, CaiFI92}]
	\label{lem:k-wl-k-plus-one-bijective-game}
	For every $k\geq 2$, all colored graphs~$G$ and~$H$,
	all $\tup{\vertA}\in V(G)^k$, $\tup{\vertB} \in V(H)^k$, and $r\in \nat$,
	we have $G, \tup{\vertA} \not\kequivr{k+1}{r} H, \tup{\vertB}$
	if and only if $\kwliter{k}{\coloring}{r}[G](\tup{\vertA}) \neq \kwliter{k}{\coloring}{r}[H](\tup{\vertB})$.

	Furthermore, $G \not\kequiv{k+1} H$ if and only if $k$\nobreakdash-WL distinguishes~$G$ and~$H$.
	The required round number of Spoiler and the iteration number of $k$\nobreakdash-WL differ by at most~$k$.
\end{lemma}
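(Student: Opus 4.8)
The statement is a classical fact, essentially \cite{Hella96} combined with \cite{CaiFI92}, and the plan is to reprove it by an induction on the round number that runs in lockstep with the refinement rounds of $k$-WL. The core is the positional equivalence, which I would prove in a slightly strengthened form allowing the starting tuple to have length $k$ \emph{or} $k+1$ (from a $k$-pebble position the only reachable positions have $k$ or $k{+}1$ pebbles, so this is the right scope). To match the game one round at a time I would work with an auxiliary family of colorings $\Xi_j^{(r)}$ for $j\in\{k,k+1\}$, refined exactly by the moves available to Spoiler when $j$ of the $k+1$ pebble pairs are on the board: overwrite any one of the $j$ coordinates, and, for $j=k$, additionally append a $(k{+}1)$-st coordinate. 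The ``append'' move is precisely where Spoiler uses the spare $(k{+}1)$-st pebble, and it is the device that converts the ``$\exists$ slot $\forall$ bijection $\exists$ vertex'' order forced on Spoiler by the rules into the ``$\forall$ bijection $\exists$ vertex'' order in which the multiset defining $k$-WL is read.

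The induction on $r$ is then routine in outline. For $r=0$, Spoiler wins in zero rounds iff the current position fails to be a partial isomorphism, which is exactly when the two tuples have distinct atomic types, i.e.\ distinct $0$-round colors; for the variant on pairs $(G,\compEquiv)$ one simply adds ``preserve $\compEquiv$'' to ``atomic type'' and nothing else changes. For the step $r\to r{+}1$, one round of play is ``Spoiler picks up a pebble pair (fixing a slot to overwrite, or committing to append), Duplicator announces a bijection $h$, Spoiler places that pebble on a pair $(w,h(w))$'', so Spoiler wins in $\le r{+}1$ rounds iff the $0$-round colors already differ or some move has the property that for every bijection $h$ there is a vertex $w$ from which Spoiler wins in $\le r$ rounds. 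By the induction hypothesis the inner condition is a statement about the $r$-round colors of the neighbouring tuples; assuming $|V(G)|=|V(H)|$ (otherwise Spoiler wins before the first round), ``for every bijection some witnessing vertex'' is exactly the negation of ``the two multisets of neighbour-colors are matched by a color-preserving bijection'', and reassembling over the available moves yields precisely the statement that the $(r{+}1)$-round colors differ.

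The step where I expect to spend real effort, and the main obstacle, is identifying $\Xi_k^{(r)}$ with the genuine $k$-WL coloring $\kwliter{k}{\coloring}{r}$, i.e.\ showing that the extra ``append'' component of $\Xi_k$ (the $\Xi_{k+1}^{(r)}$-color of $(\bar u,w)$) carries no information beyond the $k$-WL colors of the $k$-subtuples of $(\bar u,w)$ together with its atomic type, and without a loss of rounds. The idea is that once $w$ has been appended every later move only overwrites coordinates, and overwriting coordinate $i$ recovers $\bar u[w/i]$ up to a permutation of coordinates applied simultaneously in both graphs, under which the WL coloring is invariant; a side induction carrying the round bookkeeping should then show $\Xi_k^{(r)}$ and $\kwliter{k}{\coloring}{r}$ induce the same partition, which upgrades the positional equivalence to the form stated. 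Should this bookkeeping prove delicate, the fallback is to route everything through $(k{+}1)$-variable counting logic: \cite{CaiFI92} matches $k$-WL colors with $C^{k+1}$-types of quantifier rank equal to the round number, and \cite{Hella96} matches $C^{k+1}$-types with the bijective $(k{+}1)$-pebble game.

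For the ``Furthermore'' part, suppose $k$-WL distinguishes $G$ and $H$, let $R$ be the least round at which some color $q$ occupies classes of different size, say with a strictly larger $q$-class $A$ in $G$. Spoiler commits in advance to a tuple $\bar u=(u_1,\dots,u_k)\in A$ and, in round $i$, plays $u_i$ against Duplicator's $i$-th bijection $h_i$, ending at a position $(\bar u,\bar v)$ with $v_i=h_i(u_i)$; since any two distinct tuples of $A$ first differ in a coordinate where the two plays still coincide and hence face the same bijection, the induced map $\bar u\mapsto\bar v$ is injective on $A$, so as $|A|$ exceeds the size of the $q$-class of $H$ some choice of $\bar u$ lands on a $\bar v$ with $\kwliter{k}{\coloring}{R}[H](\bar v)\neq q=\kwliter{k}{\coloring}{R}[G](\bar u)$, and the positional equivalence then lets Spoiler win within $R$ more rounds --- at most $R+k$ in all. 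Conversely, since $k$-WL does not distinguish after $R-1$ rounds, the positional equivalence gives Duplicator a winning strategy in the $(R-1)$-round game from the empty board, so Spoiler needs at least $R$ rounds; thus Spoiler's optimal round number lies in $[R,R+k]$, i.e.\ within $k$ of the round number after which $k$-WL distinguishes $G$ and $H$.
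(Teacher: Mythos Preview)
The paper does not prove this lemma: it is stated with attribution to \cite{Hella96, CaiFI92} and used as a black box, so there is no paper-side argument to compare yours against.

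Your sketch follows the standard route and is correct in outline. The induction on $r$ for the positional equivalence is the expected argument, and you rightly flag as the main technical point that your auxiliary coloring $\Xi_k^{(r)}$---which also records the effect of appending the spare pebble---must be shown to coincide with the genuine $\chi_k^{(r)}$ without a loss of rounds; your observation that appending $w$ and then overwriting a coordinate recovers the substitution tuples $\bar u[w/i]$ up to a simultaneous coordinate permutation is exactly the mechanism that makes this work, and the logical fallback via $C^{k+1}$ is precisely how the cited references organise it.

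Two small points on the ``Furthermore'' clause. In the forward direction your injectivity argument shows that \emph{for every fixed Duplicator strategy} some committed play $\bar u\in A$ lands in a bad position; converting this into a single Spoiler strategy tacitly uses determinacy of the finite $(R{+}k)$-round game, which is unproblematic but worth naming. In the converse direction you invoke ``the positional equivalence'' to obtain a Duplicator strategy from the empty board, but the statement you proved is only for $k$-tuples. The missing step is routine: since the round-$(R{-}1)$ colour multisets agree they agree on the diagonal tuples $(u,\dots,u)$ (a union of $0$-round classes), so Duplicator can choose a first-round bijection matching diagonal colours, and thereafter a position with $\ell<k$ pebbles behaves like its padding to a $k$-tuple by repeating the last entry, whence your positional statement applies. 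This bookkeeping can shift the bound by an additive constant bounded by $k$, so the conclusion ``differ by at most $k$'' is unaffected.
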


We sometimes play the game on pairs $(G,\compEquiv)$ of a graph~$G$ extended by an equivalence relation~$\compEquiv$.
The only difference there is that the partial isomorphisms in the winning condition have to preserve the equivalence relation as well.
Lemma~\ref{lem:k-wl-k-plus-one-bijective-game} remains true under this extension.

\paragraph{The Cops and Robber Game.}
We use the following game later to analyze the iteration number of the WL algorithm on a certain class of graphs.
The \defining{$k$-Cops and Robber game}~\cite{SeymourT93} is played on a graph~$G$ between~$k$ many cops and a robber.
Initially, the robber is placed on some edge of~$G$ and the cops are placed beside the graph (if~$G$ has no edge, the robber loses immediately).
A round of the game is played as follows:
\begin{enumerate}
	\item \label{itm:cop-picked-up} One cop is picked up, and a destination $v\in V(G)$ for this cop is selected.\footnote{This is often described as a cop entering a helicopter and flying to $v$.
	Before deciding on a move, the robber sees the helicopter approaching and knows that the cop will be moving to $v$.}
	\item The robber moves to an edge reachable from the edge currently occupied by the robber via a path in~$G$ that does not use any vertex occupied by a cop.
	\item  The cop that was picked up in Step~\ref{itm:cop-picked-up} is placed on $v$.
\end{enumerate}
The robber is caught if cops are placed on both endpoints of the robber-occupied edge.
If the robber is caught after at most $r$~rounds,
then the cops win in~$r$ rounds.
Otherwise, the robber wins in~$r$ rounds.
The cops (respectively, the robber) have a winning strategy in~$r$ rounds
if they can force a win in~$r$ rounds independently of the action of the other player.

Note that in the original game the robber is also placed on a vertex and is caught if a cop is placed on the same vertex.
The robber loses in one version of game exactly if the robber loses in the other and in exactly the same number of rounds.
Placing the robber on edges is more suitable for our presentation.

\section{CFI Graphs and the Linear Lower Bound for $k$-WL}
\label{sec:lin:lower}

We review the CFI construction~\cite{CaiFI92} and the linear lower bound on the iteration number of $k$-WL~\cite{Furer01}.

\paragraph{The CFI Construction.}
The CFI construction starts with a so-called base graph.
An \defining{(ordered) base graph} is a connected, colored graph such that every vertex has a unique color.
Note that if every vertex has a unique color (which is a natural number), the colors induce a linear order on the vertices.
We make use of this linear order in later constructions.
Also, when defining a base graph, we only specify the linear order on the vertices and implicitly assign colors from $\{1,\dots,n\}$ to the vertices according to the given linear order. We refer to the vertices and edges of a base graph as \defining{base vertices} and \defining{base edges}, respectively.

Let $G=(V,E,c)$ be an ordered base graph
and let $f \colon E \to \FF_2$ be some function.
The \defining{CFI graph} $\CFI{G,f}$
is a colored graph and defined as follows.
The vertices of the CFI gadget for a degree~$d$ base vertex $\vertA\in V$
are the pairs $(\vertA,\tup{a})$ for all $d$-tuples $\tup{a} =(a_1, \dots, a_d) \in \FF_2^d$ with $\sum \tup{a} = a_1 + \cdots + a_d = 0$.
We say that the vertex $(\vertA,\tup{a})$ has \defining{origin}~$\vertA$. Vertices inherit the color of their origin.
The vertices of the same origin form a gadget.
Since every vertex of the base graph has a unique color, the vertices of each gadget form a color class of the CFI graph.
For two adjacent base vertices $\vertA,\vertB \in V$,
we add the following edges between the gadgets for~$\vertA$ and~$\vertB$ to $\CFI{G,f}$:
Assume that~$\vertA$ is the $i$-th neighbor of~$\vertB$
and that~$\vertB$ is the $j$-th neighbor of~$\vertA$ (according to the order on $V$).
An edge between vertices $(\vertA, \tup{a})$ and $(\vertB, \tup{b})$ is added if and only if $a_i + b_j = f(\set{\vertA,\vertB})$,
where $a_i$ is the $i$-th entry of $\tup{a}$ and $b_j$ is the $j$-th entry of $\tup{b}$.
This means that we add two complete bipartite graphs between the gadgets for~$\vertA$ and~$\vertB$.
Figure~\ref{fig:cfi-grid} shows examples of CFI graphs.

\paragraph{Isomorphisms of CFI Graphs.}
Let $g \colon E \to \FF_2$ be another function.
We say that a base edge $e \in E$ is \defining{twisted} with respect to $f$ and $g$ if $f(e) \neq g(e)$.
Twisted edges can be moved by isomorphisms.
We represent isomorphisms by sets of directed base edges.

\begin{definition}[Twisting]
	A set $T \subseteq \setcond{(\vertA,\vertB)}{\set{\vertA,\vertB} \in E}$
	is called a \defining{$G$-twisting}
	if, for every $\vertA \in V$, the set $T \cap (\set{\vertA} \times V)$
	is of even size.
	The twisting $T$
	\begin{itemize}
		\item \defining{twists} an edge $\set{\vertA,\vertB} \in E$
		if the set $T$
		contains exactly one of $(\vertA,\vertB)$ and $(\vertB,\vertA)$ and
		\item \defining{fixes} a vertex $\vertA \in V$ if $T \cap (\set{\vertA} \times V) = \emptyset$.
	\end{itemize}
	With every $G$-twisting $T$, we associate the function $g_T \colon E \to \FF_2$
	defined via $g_T(\set{\vertA,\vertB}) = 1$
	if and only if $T$ twists $\set{\vertA,\vertB}$.
\end{definition}

Twistings are one way to make typical arguments about isomorphisms between CFI graphs more explicit. These arguments all go back to \cite{CaiFI92} and have since been developed in \cite{GradelP19,Hella96,Lichter23}.
We summarize them in the following lemma.

\begin{lemma}\label{lem:cfi-iso}
	For every ordered base graph~$G$,
	every $f,g \colon E(G) \to \FF_2$,
	and every tuple~$\tup{\vertA}$ of vertices of $\CFI{G,f}$ (and thus of $\CFI{G,g}$),
	we have $(\CFI{G,f},\tup{\vertA}) \iso  (\CFI{G,g},\tup{\vertA})$ if and only if there exists a $G$-twisting~$T$
	such that $f = g + g_T$ and~$T$ fixes all vertices in~$\tup{\vertA}$.
\end{lemma}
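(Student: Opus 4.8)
The plan is to prove both directions of the equivalence in Lemma~\ref{lem:cfi-iso} by working directly with the combinatorial description of the edges in a CFI graph.

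For the ``if'' direction, suppose $T$ is a $G$-twisting with $f = g + g_T$ that fixes all vertices in $\tup{\vertA}$. I would construct an explicit isomorphism $\phi\colon \CFI{G,f} \to \CFI{G,g}$ that is the identity on the origin (so it maps gadgets to gadgets) and acts on the tuple $\tup{a}$ inside each gadget by flipping certain coordinates: for a base vertex $\vertA$ of degree $d$ whose $\ell$-th neighbor is $\vertB_\ell$, I flip the $\ell$-th coordinate $a_\ell$ exactly when $(\vertA,\vertB_\ell)\in T$. The even-size condition on $T\cap(\{\vertA\}\times V)$ is exactly what guarantees $\sum\tup{a}$ is preserved, so this is a well-defined map on gadget vertices. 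I would then check it respects edges: for a base edge $\{\vertA,\vertB\}$ with $\vertA$ the $i$-th neighbor of $\vertB$ and $\vertB$ the $j$-th neighbor of $\vertA$, the edge condition in $\CFI{G,f}$ is $a_i + b_j = f(\{\vertA,\vertB\})$, and after applying $\phi$ the left side changes by exactly the number (mod $2$) of $(\vertA,\vertB)$, $(\vertB,\vertA)$ lying in $T$, i.e.\ by $g_T(\{\vertA,\vertB\})$; since $f = g+g_T$ this is precisely the edge condition in $\CFI{G,g}$. Finally, because $T$ fixes every vertex occurring in $\tup{\vertA}$, no coordinate of any vertex in $\tup{\vertA}$ is flipped, so $\phi(\tup{\vertA}) = \tup{\vertA}$, giving $(\CFI{G,f},\tup{\vertA})\iso(\CFI{G,g},\tup{\vertA})$.

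For the ``only if'' direction, suppose $\phi\colon(\CFI{G,f},\tup{\vertA})\to(\CFI{G,g},\tup{\vertA})$ is a color-preserving isomorphism. Since colors are the origins and each gadget is a color class, $\phi$ permutes each gadget internally; in particular it induces the identity on origins. The key structural fact, which I would establish first, is that the automorphisms of a single gadget-plus-its-incident-half-edges are exactly the coordinate-flip maps that flip an even number of coordinates, and these are realized by flipping one coordinate pair at a time along edges. So I can track, for each base edge $\{\vertA,\vertB\}$, whether $\phi$ ``twists'' it, i.e.\ whether the edge condition across that gadget pair requires the constants $f$ and $g$ to differ there; collecting the directed edges where a flip occurs on the $\vertA$-side defines a set $T$. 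The even-size condition at each vertex comes from $\sum\tup{a}=0$ being preserved inside each gadget, so $T$ is a genuine $G$-twisting; reading off the edge conditions shows $g_T = f - g = f+g$, i.e.\ $f = g+g_T$; and $\phi(\tup{\vertA})=\tup{\vertA}$ forces that no coordinate of a vertex in $\tup{\vertA}$ is flipped, which is exactly the statement that $T$ fixes all vertices of $\tup{\vertA}$.

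\textbf{The main obstacle} I anticipate is the ``only if'' direction, specifically arguing cleanly that an arbitrary isomorphism must be a coordinate-flip of the type above and that the local flips assemble consistently into a single globally defined twisting $T$. The subtlety is that the action of $\phi$ on a gadget is determined only up to the gadget's internal symmetry, and one has to check that the induced ``which coordinate is flipped'' data is well-defined and compatible across shared base edges — this is where the connectivity of the base graph and the linear-algebra structure of $\FF_2^d$ with the zero-sum constraint do the real work. Since the excerpt explicitly attributes these arguments to \cite{CaiFI92} and their refinements in \cite{GradelP19,Hella96,Lichter23}, I would lean on that body of work and present the proof at the level of carefully bookkeeping the coordinate flips rather than re-deriving the gadget automorphism structure from scratch.
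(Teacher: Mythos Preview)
The paper does not actually prove this lemma; it merely states it as a summary of standard CFI arguments and attributes the proof to \cite{CaiFI92,GradelP19,Hella96,Lichter23}. Your sketch is correct and is exactly the standard argument found in those references.

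One small comment on your anticipated obstacle: it is less subtle than you suggest. Once you know that a color-preserving isomorphism maps each gadget to itself and must respect, for each incident base edge, the bipartition of the gadget into the two halves $\{a_\ell = 0\}$ and $\{a_\ell = 1\}$ (either preserving or swapping them), the action of~$\phi$ on the gadget is determined \emph{outright}, not merely up to internal symmetry, because the~$d$ coordinates together pin down each gadget vertex uniquely. So the flip data $(\vertA,\vertB_\ell)\in T$ is read off directly from~$\phi$, and there is no separate compatibility condition to verify across base edges: the identity $g_T = f+g$ is simply what the edge relation says once~$T$ is defined this way. The paper itself uses the parenthetical fact that every nontrivial automorphism of a gadget moves every gadget vertex (in the proof of Lemma~\ref{lem:twistsings-equiv-parameters}), which is exactly the observation you need to conclude that $\phi(\tup{\vertA}) = \tup{\vertA}$ forces~$T$ to fix the origins of~$\tup{\vertA}$.
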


Intuitively, the simplest isomorphism of CFI graphs is one moving the twist between two base edges incident to the same base vertex.
These isomorphisms can be composed to isomorphisms moving the twist along a path and in fact to all other isomorphisms:
For every base vertex, for an even number of incident base edges it is changed whether the edge is twisted or not (which follows from the condition $\sum \tup{a} = 0$ on the gadget vertices).
Lemma~\ref{lem:cfi-iso} implies, in particular, that two CFI graphs $\CFI{G,f}$ and $\CFI{G,g}$ are isomorphic
if and only if modulo~$2$ we have $\sum f = \sum_{e \in E} f(e) = \sum_{e \in E} g(e) = \sum g$.

\newcommand{\id}[1]{#1}

\newcommand{\drawgadget}[5] {

	\draw[dashed, thick, gray] (#3,#4) circle (0.65);

	\ifnum #1=1
	\node[#5] (#2\id{0}) at (#3,#4){};
	\else \ifnum #1=2
	\node[#5] (#2\id{11}) at (#3+0.25,#4-0.25){};
	\node[#5] (#2\id{00}) at (#3-0.25,#4+0.25){};

	\else
	\foreach \x in {0,...,1}
	\foreach \y in {0,...,1}
	\foreach \z in {0,...,1}
	{
		\ifodd \numexpr \x + \y + \z \else
		\node[#5] (#2\x\y\z) at ({#3 + 0.5*( \z - 0.3*\x) -0.2}, {#4 + 0.5*(\x + 0.7*\z - 1.7*\y -\x*\z)+0.1}){};
		\fi
	}
	\fi
	\fi

}

\newcounter{memcount}
\newcommand{\countmem}[2]{%
	\setcounter{memcount}{0}%
	\foreach \i in #1{%
		\ifthenelse{\equal{\i}{#2}} {%
			\stepcounter{memcount}%
		}{}%
	}%

}

\newcommand{\suffx}{x}
\newcommand{\suffy}{y}
\newcommand{\suffz}{z}
\newcommand{\suffsum}{sum}
\newcommand{\suffseq}{seq}
\newcommand{\suffrr}{rr}

\newcommand{\foreachdim}[4] {
	\ifnum #1=1
	\expandafter\foreach \csname#2\suffx\endcsname in {0,...,1} {
		\expandafter\def\csname#2\suffsum\endcsname{%
			\csname#2\suffx\endcsname}
		\expandafter\def\csname#2\suffseq\endcsname{%
			\csname#2\suffx\endcsname}
		\expandafter\def\csname#2\suffrr\endcsname{%
			{\csname#2\suffx\endcsname}}
		#3
	}
	\else \ifnum #1=2
	\expandafter\foreach \csname#2\suffx\endcsname in {0,...,1} {
		\expandafter\foreach \csname#2\suffy\endcsname in {0,...,1} {
			\expandafter\def\csname#2\suffsum\endcsname{%
				\csname#2\suffx\endcsname + %
				\csname#2\suffy\endcsname}
			\expandafter\def\csname#2\suffseq\endcsname{%
				\csname#2\suffx\endcsname\csname#2\suffy\endcsname}
			\expandafter\def\csname#2\suffrr\endcsname{%
				{\csname#2\suffx\endcsname,\csname#2\suffy\endcsname}}
			#3
		}
	}
	\else
	\expandafter\foreach \csname#2\suffx\endcsname in {0,...,1} {
		\expandafter\foreach \csname#2\suffy\endcsname in {0,...,1} {
			\expandafter\foreach \csname#2\suffz\endcsname in {0,...,1} {
				\expandafter\def\csname#2\suffsum\endcsname{%
					\csname#2\suffx\endcsname + %
					\csname#2\suffy\endcsname +%
					\csname#2\suffz\endcsname}
				\expandafter\def\csname#2\suffseq\endcsname{%
					\csname#2\suffx\endcsname\csname#2\suffy\endcsname\csname#2\suffz\endcsname}
				\expandafter\def\csname#2\suffrr\endcsname{%
					{\csname#2\suffx\endcsname,\csname#2\suffy\endcsname,\csname#2\suffz\endcsname}}
				#3
			}
		}
	}
	\fi\fi

}

\newcommand{\drawconnectgadget}[9] {

	\foreachdim{#1}{a} {
		\foreachdim{#2}{b}{
			\pgfmathsetmacro\as{\arr[#5]}
			\pgfmathsetmacro\bs{\brr[#6]}
			\ifodd \numexpr \asum \else
			\ifodd \numexpr \bsum \else
			\ifthenelse{
			 \(\(\NOT \equal{#9}{twist}\) \AND \equal{\as}{\bs}\) \OR %
					\(\equal{#9}{twist} \AND \(\NOT \equal{\as}{\bs}\)\)
			}{
			\countmem{{#7}}{\aseq\id{E}\bseq}
			\ifthenelse{\value{memcount} > 0}{
				\path (#3\aseq) edge[bend right=5] (#4\bseq);
			}{
				\countmem{{#8}}{\aseq\id{E}\bseq}
				\ifthenelse{\value{memcount} > 0}{
					\path (#3\aseq) edge[bend left=5] (#4\bseq);
				}{
					\path (#3\aseq) edge (#4\bseq);
				}
			}
			}{}
			\fi\fi
		}
	}

}

\newcommand{\gadgetlabel}[5] {
	\node[fill = none] () at ($(#1#2#3#4) + #5$) {\scriptsize$#1(#2,#3,#4)$};
}

\tikzstyle{vertex} = [circle, fill=black, inner sep=0mm, minimum size = 2mm]

\tikzstyle{basevertex} = [circle, fill=black, inner sep=0mm, minimum size = 3mm]

\colorlet{gcolA}{colH}
\colorlet{gcolB}{colB}
\colorlet{gcolC}{colC}
\colorlet{gcolD}{colD}
\colorlet{gcolE}{colF}
\colorlet{gcolF}{colE}
\colorlet{gcolG}{colG}
\colorlet{gcolH}{colA}

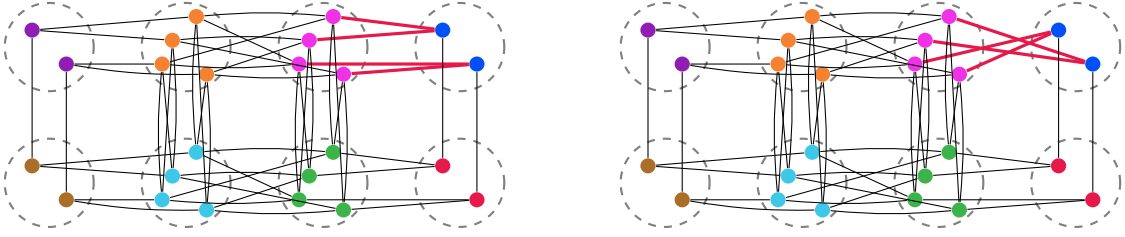
\begin{figure}
	\centering
	\begin{tikzpicture}[>=stealth,on grid,every node/.style={vertex}, scale=0.75]

		\def\spacing{2}

		\begin{scope}[shift={(-4.5,0)}]
		\drawgadget{2}{z1}{-\spacing}{0}{gcolA}
		\drawgadget{2}{z2}{-\spacing}{\spacing}{gcolE}

		\drawgadget{3}{a1}{0}{0}{gcolB}
		\drawgadget{3}{b1}{\spacing}{0}{gcolC}
		\drawgadget{2}{c1}{2*\spacing}{0}{gcolD}
		\drawgadget{3}{a2}{0}{\spacing}{gcolF}
		\drawgadget{3}{b2}{\spacing}{\spacing}{gcolG}
		\drawgadget{2}{c2}{2*\spacing}{\spacing}{gcolH}

		\drawconnectgadget{3}{3}{b1}{a1}{0}{0}{101E101,000E000}{011E011, 110E110}{}
		\drawconnectgadget{3}{2}{b1}{c1}{1}{1}{000E000,011E011}{101E101, 110E110}{}
		\drawconnectgadget{3}{3}{b2}{a2}{0}{0}{101E101,000E000}{110E110,011E011}{}
		\begin{scope}[every edge/.style={draw, very thick, gcolD}]
		\drawconnectgadget{3}{2}{b2}{c2}{1}{1}{}{}{}
		\end{scope}

		\drawconnectgadget{3}{3}{a1}{a2}{2}{2}{000E000,011E011}{101E101,110E110}{}
		\drawconnectgadget{3}{3}{b1}{b2}{2}{2}{000E000,011E011}{101E101,110E110}{}
		\drawconnectgadget{2}{2}{c1}{c2}{1}{0}{011E11}{110E00}{}

		\drawconnectgadget{3}{2}{a1}{z1}{1}{0}{}{011E11}{}
		\drawconnectgadget{3}{2}{a2}{z2}{1}{0}{}{011E11}{}
		\drawconnectgadget{2}{2}{z1}{z2}{1}{1}{}{}{}
		\end{scope}

		\begin{scope}[shift={(4.5,0)}]
			\drawgadget{2}{z1}{-\spacing}{0}{gcolA}
			\drawgadget{2}{z2}{-\spacing}{\spacing}{gcolE}

			\drawgadget{3}{a1}{0}{0}{gcolB}
			\drawgadget{3}{b1}{\spacing}{0}{gcolC}
			\drawgadget{2}{c1}{2*\spacing}{0}{gcolD}
			\drawgadget{3}{a2}{0}{\spacing}{gcolF}
			\drawgadget{3}{b2}{\spacing}{\spacing}{gcolG}
			\drawgadget{2}{c2}{2*\spacing}{\spacing}{gcolH}

			\drawconnectgadget{3}{3}{b1}{a1}{0}{0}{101E101,000E000}{011E011, 110E110}{}
			\drawconnectgadget{3}{2}{b1}{c1}{1}{1}{000E000,011E011}{101E101, 110E110}{}
			\drawconnectgadget{3}{3}{b2}{a2}{0}{0}{101E101,000E000}{110E110,011E011}{}
			\begin{scope}[every edge/.style={draw, very thick, gcolD}]
			\drawconnectgadget{3}{2}{b2}{c2}{1}{1}{}{}{twist}
			\end{scope}

			\drawconnectgadget{3}{3}{a1}{a2}{2}{2}{000E000,011E011}{101E101,110E110}{}
			\drawconnectgadget{3}{3}{b1}{b2}{2}{2}{000E000,011E011}{101E101,110E110}{}
			\drawconnectgadget{2}{2}{c1}{c2}{1}{0}{011E11}{110E00}{}

			\drawconnectgadget{3}{2}{a1}{z1}{1}{0}{}{011E11}{}
			\drawconnectgadget{3}{2}{a2}{z2}{1}{0}{}{011E11}{}
			\drawconnectgadget{2}{2}{z1}{z2}{1}{1}{}{}{}
		\end{scope}

	\end{tikzpicture}
	\caption{Two CFI graphs with the $2\times4$ grid as base graph.
	The connection highlighted in red between the blue and pink gadget is twisted between the two CFI graphs.}
	\label{fig:cfi-grid}
\end{figure}

\paragraph{CFI Graphs and the Cops and Robber Game.}

There is a close connection between the bijective $k$\nobreakdash-pebble game
played on non-isomorphic CFI graphs and the $k$\nobreakdash-Cops and Robber game played on the base graph.
Intuitively, Spoiler has to catch the twist with the pebbles.
Placing a pebble in the bijective $k$\nobreakdash-pebble game corresponds to placing a cop in the Cops and Robber game.
Moving the twisted edge corresponds to moving the robber.
In the bijective $k$\nobreakdash-pebble game, the twistings have to fix the origins of the pebbled vertices.
Analogously, the robber is not allowed to use vertices occupied by cops.
The path the robber takes is exactly the path along which the twist is moved.

This leads to the following two lemmas. A proof of the first lemma is implicit in \cite[Theorem~3]{DawarR07}.

\begin{lemma}
	\label{lem:cops-and-robbers-if-bijective}
	Let~$k\geq2$,~$r\in \nat$,~$G$ be an ordered base graph, and~$f,g \colon E(G) \to\FF_2$.
	If the robber has a winning strategy in the~$r$\nobreakdash-round~$k$\nobreakdash-Cops and Robber game played on~$G$,
	then Duplicator has a winning strategy in the~$r$\nobreakdash-round bijective~$k$\nobreakdash-pebble game played on~$\CFI{G,f}$ and~$\CFI{G,g}$.
\end{lemma}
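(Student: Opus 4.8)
The plan is to convert the robber's winning strategy in the $r$-round $k$-Cops and Robber game on $G$ into a Duplicator strategy in the $r$-round bijective $k$-pebble game on $\CFI{G,f}$ and $\CFI{G,g}$, maintaining after every round a tight correspondence between the two plays. I record a position of the pebble game by the pebbled tuples $\tup{a}\in V(\CFI{G,f})^\ell$ and $\tup{b}\in V(\CFI{G,g})^\ell$, and a position of the cops game by the occupied vertex set $S\subseteq V(G)$ and the robber edge $e_0$. The invariant to maintain is: the origins of $\tup{a}$ and of $\tup{b}$ coincide and equal $S$ (so the $i$-th pebble pair tracks the $i$-th cop); $e_0$ has at most one endpoint in $S$; and there is a $G$-twisting $T$ fixing every vertex of $S$ such that $f+g_T$ agrees with $g$ on every base edge outside the connected component $C$ of $G-S$ containing an endpoint of $e_0$, and such that $\tup{b}$ arises from $\tup{a}$ through an isomorphism $\phi_T\colon\CFI{G,f}\to\CFI{G,f+g_T}$ associated with $T$ as in Lemma~\ref{lem:cfi-iso}. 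Since $C$ is a component of $G-S$, no edge of $C$ is incident to $S$, so $f+g_T$ and $g$ agree on every base edge relevant to the subgraph induced by $\tup{a}$; hence whenever the invariant holds the current pebble position is a partial isomorphism, so Duplicator has not yet lost. Before the first round the invariant holds trivially with $S=\emptyset$, $T=\emptyset$, $C=G$ (connected by assumption), and an arbitrary starting edge $e_0$.

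For the inductive step, suppose Spoiler picks up the pebble pair with index $i$; in the cops game this lifts the cop sitting on the common origin $x_i$ of the $i$-th pebble pair. Duplicator must now commit to a bijection $h\colon V(\CFI{G,f})\to V(\CFI{G,g})$ before learning where Spoiler re-places the pebble. For each possible destination $y\in V(G)$ of the lifted cop, let $e_1(y)$ be the robber's response prescribed by its winning strategy when the helicopter flies to $y$; since this is a legal robber move, $e_0$ and $e_1(y)$ lie in a common component of $G-((S\setminus\{x_i\})\cup\{y\})$. Using the elementary twist-moves underlying Lemma~\ref{lem:cfi-iso} (sliding a twist between two edges at a common vertex, composed along paths), Duplicator re-routes the discrepancy $\{e : (f+g_T)(e)\neq g(e)\}$ — which by the invariant is confined to $E(C)$ — into a new twisting $T_y$ that fixes $(S\setminus\{x_i\})\cup\{y\}$, that coincides with $T$ on all base edges incident to $S\setminus\{x_i\}$, and for which $f+g_{T_y}$ agrees with $g$ outside the component $C_y$ of $G-((S\setminus\{x_i\})\cup\{y\})$ containing $e_1(y)$. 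Duplicator then defines $h$ on the gadget of each base vertex $y$ to be the restriction of $\phi_{T_y}$; this is a bijection of gadget $y$ onto gadget $y$ since isomorphisms of CFI graphs preserve origins and gadgets are color classes, so the pieces assemble into a bijection $h$. When Spoiler places $p_i$ on some vertex of the gadget of a base vertex $y$ and $q_i$ on its $h$-image, and the robber moves to $e_1(y)$, the coincidence of $T_y$ with $T$ near $S\setminus\{x_i\}$ guarantees that the new pebbled tuples $\tup{a}'$ and $\tup{b}'$ satisfy $\tup{b}'=\phi_{T_y}(\tup{a}')$, so the invariant is restored with twisting $T_y$.

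The technical heart — and the step I expect to be the main obstacle — is this re-routing claim: that for every candidate destination $y$ a twisting $T_y$ with all the listed properties actually exists. This is precisely where the rules of the Cops and Robber game enter, and it is a somewhat delicate bookkeeping exercise about how the (cycle-space-type) freedom in the choice of a twisting interacts with the components of $G$ minus the cop set. Lifting the cop on $x_i$ enlarges $C$ to the component $C^{+}$ of $G-(S\setminus\{x_i\})$, through which the discrepancy may now be freely shifted along paths avoiding $S\setminus\{x_i\}$; placing the cop on $y$ and then letting the robber flee to $e_1(y)$ along a path avoiding $(S\setminus\{x_i\})\cup\{y\}$ transports the discrepancy into $C_y$ while keeping all of $(S\setminus\{x_i\})\cup\{y\}$ fixed. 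The point that requires genuine care is a parity argument: the global quantity $\sum f+\sum g$ stays trapped in the robber's component throughout the play, so no twisted edge is ever forced to become incident to a cop. This is ultimately why ``the robber survives $r$ rounds'' already yields ``Duplicator survives $r$ rounds'', even when $\sum f\neq\sum g$ and the two CFI graphs are not isomorphic at all: the robber simply carries this trapped parity around indefinitely.
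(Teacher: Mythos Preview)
There is a fundamental error in your invariant. You require the twisting $T$ to fix every vertex of $S$, and you later require each $T_y$ to fix $(S\setminus\{x_i\})\cup\{y\}$. But if $T_y$ fixes $y$, then by Lemma~\ref{lem:cfi-iso} the isomorphism $\phi_{T_y}$ acts as the identity on the gadget of $y$. Since you define $h$ on the gadget of each base vertex $y$ to be the restriction of $\phi_{T_y}$, your bijection $h$ is the identity on every gadget and hence the identity overall. Your Duplicator therefore always plays the identity, which forces $\tup{b}=\tup{a}$ throughout. This strategy loses in two rounds: Spoiler pebbles the two endpoints of any base edge $e$ with $f(e)\neq g(e)$, and the identity is not a partial isomorphism there. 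So your argument cannot establish the $r$-round bound whenever $r\geq 2$ and $f\neq g$.

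The repair is precisely to \emph{drop} the requirement that $T_y$ fix the destination $y$. The twisting that does the work is the one tracing the robber's path from $e_0$ to $e_1(y)$; that path avoids the currently placed cops $S\setminus\{x_i\}$ but may pass through $y$, since the cop has not yet landed when the robber moves. Accordingly $\phi_{T_y}$ is typically nontrivial on the gadget of $y$, and that nontriviality is exactly what hides the twist from Spoiler. The matching invariant is not ``$T$ fixes $S$'' but rather: there is an isomorphism $\phi\colon\CFI{G,g}\to\CFI{G,g'}$ with $\phi(\tup{b})=\tup{a}$, where $g'$ differs from $f$ only on the robber's current edge. One first reduces to the case that $f$ and $g$ differ on a single edge (if $\sum f=\sum g$ the graphs are isomorphic and Duplicator wins trivially; otherwise replace $f$ by an isomorphic copy differing from $g$ on exactly one edge, and start the robber there, which is harmless since before any cop is placed the robber may relocate freely). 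The paper itself does not spell out a proof of this lemma, citing Dawar and Richerby instead, but its proof of the compressed analogue, Lemma~\ref{lem:compressed-cops-and-robber}, follows exactly this corrected template.
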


The second lemma follows from combining \cite{SeymourT93}, \cite[Theorem 3.13]{Roberson22}, and \cite[Theorem~7]{Dvorak10} (the lemma can also be proved more directly following the above intuition, but we are not aware of an explicit reference).

\begin{lemma}
	\label{lem:cops-and-robbers-then-bijective}
	Let~$k\geq2$,~$G$ be an ordered base graph, and~$f,g \colon E(G) \to\FF_2$ such that~$f$ and~$g$ twist an odd number of edges.
	If the cops have a winning strategy in the~$k$\nobreakdash-Cops and Robber game played on~$G$,
	then Spoiler has a winning strategy in the bijective~$k$\nobreakdash-pebble game played on~$\CFI{G,f}$ and~$\CFI{G,g}$.
\end{lemma}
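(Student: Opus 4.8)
The plan is to turn the intuition stated before the lemma into an argument: Spoiler plays a winning strategy of the cops in the $k$-Cops and Robber game on $G$ ``in their head'' and translates it into the bijective $k$-pebble game via the dictionary ``pebble pair $\leftrightarrow$ cop'' and ``location of the twist $\leftrightarrow$ robber's edge''. As a preliminary I would record two facts. First, since $f$ and $g$ twist an odd number of edges, $\sum_{e\in E}f(e)\neq\sum_{e\in E}g(e)$, so by Lemma~\ref{lem:cfi-iso} we have $\CFI{G,f}\niso\CFI{G,g}$; in particular the two graphs have the same number of vertices, so Spoiler does not win for free. Second, for an edge $e\in E$ let $g+\mathbbm{1}_e$ denote the function agreeing with $g$ everywhere except that it flips the value at $e$; flipping one value changes the parity of $\sum g$, so the hypothesis gives $\CFI{G,f}\iso\CFI{G,g+\mathbbm{1}_e}$ for \emph{every} edge $e$, and moreover $\CFI{G,g}$ and $\CFI{G,g+\mathbbm{1}_e}$ are the same colored graph away from the edges joining the two gadgets at the endpoints of $e$. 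Hence $\CFI{G,g}$ and $\CFI{G,g+\mathbbm{1}_e}$ induce the same colored subgraph on any tuple of vertices whose origins do not contain \emph{both} endpoints of $e$ --- the precise sense in which ``a twist parked at $e$ is invisible as long as at most one endpoint of $e$ is pebbled''.

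The invariant Spoiler maintains is: letting $C\subseteq V(G)$ be the set of origins currently pebbled (equal to the current cop positions), $\tup\vertA,\tup\vertB$ the position in $\CFI{G,f},\CFI{G,g}$, and $e$ the robber's current edge, we have (a) $e$ has at most one endpoint in $C$ and lies in the robber's component of $G\setminus C$, and (b) $(\CFI{G,f},\tup\vertA)\iso(\CFI{G,g+\mathbbm{1}_e},\tup\vertB)$. By (a) and the last observation above, (b) implies that $\tup\vertA,\tup\vertB$ is a partial isomorphism between $\CFI{G,f}$ and $\CFI{G,g}$, so Spoiler has not lost. The point of the invariant is the moment the cops catch the robber: then the last cop moves onto a vertex $\vertB'$ \emph{all} of whose neighbors already lie in $C$ (otherwise the robber, using the freed vertex, would have had an escape edge at $\vertB'$). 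When Spoiler places the corresponding pebble pair into the $\vertB'$-gadget, the shift by which Duplicator's bijection acts on that gadget is forced by the $\deg\vertB'$ already-pebbled neighbors; but summing the resulting constraints, a consistent shift would need odd weight, because the parked twist $\mathbbm{1}_e$ (incident to $\vertB'$) contributes $1$ to the neighborhood sum, whereas gadget shifts always have even weight. So no bijection of Duplicator keeps the position a partial isomorphism, and Spoiler wins this round by placing the pebble on a $\vertB'$-gadget vertex that exposes the mismatch. Since the cops catch the robber in finitely many rounds, Spoiler wins the bijective $k$-pebble game.

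It remains to describe an ordinary round and check the invariant is preserved. The cops' strategy picks up a cop, from $\vertC'\in C$ say (or a fresh cop if $|C|<k$), and declares a destination $\vertC$; Spoiler picks up the corresponding pebble pair, reaching a reduced position $\tup\vertA^-,\tup\vertB^-$ on origins $C\setminus\set{\vertC'}$. Because the cop has been lifted from $\vertC'$, the parked twist may be slid along any path avoiding $C\setminus\set{\vertC'}$ (the ``twists move freely'' direction of Lemma~\ref{lem:cfi-iso} applied to $\tup\vertA^-$): for every edge $e''$ in the robber's component of $G\setminus(C\setminus\set{\vertC'})$ we still have $(\CFI{G,f},\tup\vertA^-)\iso(\CFI{G,g+\mathbbm{1}_{e''}},\tup\vertB^-)$. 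Duplicator now supplies a bijection $h$. If $h$ does not map the $\vertC$-gadget onto itself, Spoiler wins at once on colors; if it maps the gadget onto itself but not by a translation $\bar a\mapsto\bar a+\bar s$, Spoiler wins by the standard CFI local-consistency argument using a pebbled neighbor of $\vertC$. Otherwise Spoiler reads off $\bar s$; the edges $\set{\vertC,z}$ whose coordinate in $\bar s$ is $1$ are where $h$ pushes twist. If some such $z$ is already pebbled, the push is visible and Spoiler wins by a suitable placement in the $\vertC$-gadget; otherwise Spoiler routes the robber onto such an edge $e'=\set{\vertC,z}$ inside its component (taking $e'=e$ if $\bar s=\bar 0$), places the new pebble pair on \emph{any} $\vertC$-gadget vertex, and lets the cop land on $\vertC$. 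One then checks that the new position, the new cop set $(C\setminus\set{\vertC'})\cup\set{\vertC}$, and $e'$ again satisfy (a) and (b).

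I expect the main obstacle to be exactly this last verification: showing that whenever Duplicator's gadget shift is not immediately self-defeating, the twist strands it creates can be absorbed, together with the old parked twist, into a \emph{single} parked twist in the robber's component --- equivalently, that a twisting fixing $(C\setminus\set{\vertC'})\cup\set{\vertC}$ with the prescribed gadget action at $\vertC$ and with twisted-edge set $f+g+\mathbbm{1}_{e'}$ exists for a suitable reachable $e'$. This is where the linear algebra over $\FF_2$ governing CFI isomorphisms (the gadget rigidity: once a gadget and all of its pebbled neighbors are fixed, the only remaining freedom is a global twisting, generated by the cycles of $G$) has to be married to the connectivity bookkeeping of the Cops and Robber game, which is why the argument, though intuitive, is delicate; the degenerate case where $\vertC$ has no pebbled neighbor and the start-of-game bookkeeping (matching ``cop beside the graph'' with ``pebble not yet placed'', e.g.\ by pretending unplaced pebbles sit on a fixed dummy origin) also need care. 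Alternatively, one can bypass all of this and argue as indicated before the statement: by \cite{SeymourT93} the cops winning the $k$-Cops and Robber game on $G$ forces $\mathrm{tw}(G)<k$, and then \cite[Theorem~7]{Dvorak10} together with \cite[Theorem~3.13]{Roberson22} shows that the bijective $k$-pebble game --- equivalently $(k-1)$-WL via Lemma~\ref{lem:k-wl-k-plus-one-bijective-game} --- distinguishes the non-isomorphic graphs $\CFI{G,f}$ and $\CFI{G,g}$, so Spoiler wins.
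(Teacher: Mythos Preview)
Your closing ``alternative'' is exactly what the paper does: it does not give a proof at all, but simply records that the lemma follows from combining \cite{SeymourT93} (cops win the $k$-game iff $\mathrm{tw}(G)<k$), \cite[Theorem~3.13]{Roberson22}, and \cite[Theorem~7]{Dvorak10}. So on that count your proposal already matches the paper verbatim.

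The bulk of your proposal---the direct simulation of the cops' strategy by Spoiler, with the twist parked at the robber's edge---is precisely the ``more direct'' route the paper alludes to in the sentence before the lemma but explicitly declines to write out (``we are not aware of an explicit reference''). Your plan is sound in outline, and the catching step is in fact correct: once the robber is trapped, the destination vertex $\vertB'$ really does have all its neighbors in the current cop set (if some neighbor $z\notin C'$ existed, then $\vertB'$ lies in the robber's component of $G\setminus C'$, hence so does $z$, and the robber could escape to $\{\vertB',z\}$), and the parity contradiction you describe goes through. You can even shortcut that argument: the moment both endpoints of the parked edge $e$ are pebbled, invariant~(b) says $\tup\vertA\mapsto\tup\vertB$ respects adjacency in $\CFI{G,g+\mathbbm{1}_e}$, but the $e$-interface of $\CFI{G,g+\mathbbm{1}_e}$ and $\CFI{G,g}$ differ, so the pebbled pair across $e$ already witnesses that $\tup\vertA\mapsto\tup\vertB$ is not a partial isomorphism for $\CFI{G,g}$.

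The genuine gap is exactly where you locate it: the inductive round. Your claim that if Duplicator's bijection on the $\vertC$-gadget is ``not a translation'' then Spoiler wins ``by the standard CFI local-consistency argument using a pebbled neighbor of $\vertC$'' fails when $\vertC$ has \emph{no} pebbled neighbor, and even when it does, a single pebbled neighbor only pins one coordinate of the shift, not the whole map. More seriously, when the shift $\bar s$ has several $1$'s, the twist gets pushed onto several edges at once, and you then need the $\FF_2$-linear-algebra/connectivity statement that all of these strands, together with the old parked twist, can be recombined (by a twisting fixing the new cop set) into a single twist on one reachable edge---which is exactly the delicate point you flag. None of this is insurmountable, but it is real work, and the paper sidesteps it entirely by taking the citation route.
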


\paragraph{The Linear Lower Bound.}

We recall the linear lower bound on the iteration number of $k$\nobreakdash-WL by Fürer~\cite{Furer01}.
Fix some $k \geq 2$ in this paragraph.
Let $n \geq k$ and define $I \coloneqq [0,k-1]$ and $J \coloneqq [0,n-1]$.
Let $G_{I,J}$ denote the $k \times n$ grid graph with row indices in $I$ and column indices in $J$ (i.e.,~$k$ is the height and~$n$ is the width of the grid).
Formally, $V(G_{I,J}) \coloneqq I \times J$ and
\begin{align*}
	E(G_{I,J}) \coloneqq\quad &\setcond[\big]{\{(i,j),(i,j+1)\}}{i \in I, j \in [0,n-2]}\\
	\cup~&\setcond[\big]{\{(i,j),(i+1,j)\}}{i \in [0,k-2], j \in J}.
\end{align*}
We remark that we choose indices starting from $0$ because this turns out to be more convenient for later constructions.

First note that every vertex in $G_{I,J}$ has degree at most~$4$.
Thus, every CFI gadget contains at most~$8$ vertices and hence $\CFI{G_{I,J}, f}$ has $\Theta(n)$ many vertices (for fixed~$k$).
We argue that $k$\nobreakdash-WL needs $\Omega(n)$ iterations to distinguish non-isomorphic CFI graphs $\CFI{G_{I,J}, f}$ and $\CFI{G_{I,J}, g}$ for $f,g\colon E(G_{I,J}) \to \FF_2$ with $\sum f \neq \sum g$ (cf.~Figure~\ref{fig:cfi-grid} for $k=2$ and $n=4$).
To do so, we can play the $(k+1)$\nobreakdash-Cops and Robber game by Lemmas~\ref{lem:k-wl-k-plus-one-bijective-game},~\ref{lem:cops-and-robbers-if-bijective}, and~\ref{lem:cops-and-robbers-then-bijective}.

The optimal strategy of the cops is to initially form a separator with $k$ cops that cuts the grid into two halves.
One of the two halves contains the robber.
Using the one additional cop, the separator can slowly be moved to the border of the grid (by at most $\bigO(1)$ columns per round).
Thus, after $\bigO(n)$ rounds, the robber is caught.
To see that indeed $\Omega(n)$ many rounds are needed, we see that the cops cannot move the separator faster because there is only one additional cop.
If the left border of the grid is not separated from the right border, then the robber can move between the two borders.
If the cops form a separator again, the robber moves to the larger part of the grid.
Because the cops can catch the robber, we obtain that $k$-WL distinguishes $\CFI{G_{I,J}, f}$ and $\CFI{G_{I,J}, g}$, but only after $\Omega(n)$ iterations.

\section{Compressed CFI Graphs}
\label{sec:comp-cfi}

In this section, we develop techniques to reduce the size of CFI graphs.

\subsection{Basic Construction}

Our construction yields compressed CFI graphs.
We describe isomorphisms of these compressed CFI graphs by twistings again.
Fix an arbitrary ordered base graph~$G$ throughout this subsection.

\begin{definition}[Compression]
	An equivalence relation~$\compEquiv$ on $V(G)$ is a \defining{$G$-compression} if for all $\vertA,\vertA',\vertB, \vertB' \in V(G)$ it satisfies the following two conditions:
	\begin{enumerate}
		\item If $\vertA \compEquiv \vertB$, then
		$\vertA$ and $\vertB$ are non-adjacent and of the same degree.
		\item If $\set{\vertA,\vertB},\set{\vertA',\vertB'} \in E(G)$, $\vertA \compEquiv \vertA'$, $\vertB \compEquiv \vertB'$, and $\vertB$ is the $i$-th neighbor of $\vertA$ (according to the order on~$G$), then $\vertB'$ is the $i$-th neighbor of $\vertA'$.
	\end{enumerate}

\end{definition}

Let ${\compEquiv} \subseteq V^2$ be a $G$-compression.
It induces an equivalence relation on $\CFI{G,f}$ (independently of $f \colon E(G) \to \FF_2$), which we also denote by $\compEquiv$, as follows:
$(\vertA, \tup{a}) \compEquiv (\vertB, \tup{b})$
if and only if $\vertA \compEquiv \vertB$ and $\tup{a} = \tup{b}$.
We denote by $\compCFI{G,f}{\compEquiv}$ the colored graph obtained from
$\CFI{G,f}$ by contracting all $\compEquiv$\nobreakdash-equivalence classes into a single vertex.
Formally, the vertices of $\compCFI{G,f}{\compEquiv}$ are the $\compEquiv$\nobreakdash-equivalence classes $\vertA/_{\compEquiv} \coloneqq \setcond{\vertC \in V(\CFI{G,f})}{\vertC \compEquiv\vertA}$,
and there is an edge between $\vertA/_\compEquiv$ and $\vertB/_\compEquiv$
if there are $\vertA'\compEquiv \vertA$ and $\vertB'\compEquiv \vertB$ such that there is an edge between~$\vertA'$ and~$\vertB'$ in $\CFI{G,f}$.
Observe that $\compCFI{G,f}{\compEquiv}$ is loop-free by our condition on $\compEquiv$ that equivalent vertices of~$G$ are non-adjacent.
The color of a $\compEquiv$\nobreakdash-equivalence class in $\compCFI{G,f}{\compEquiv}$ is the minimal color of one of its members in $\CFI{G,f}$.
We need to require that~$f$ is compatible with~$\compEquiv$ in the following sense to obtain reasonable graphs.

\begin{definition}
	A function $f\colon E(G) \to \FF_2$ is \defining{$\compEquiv$-compressible}
	if, for all $\vertA,\vertB,\vertA',\vertB' \in V$,
	we have that if $\set{\vertA,\vertB},
	\set{\vertA',\vertB'} \in E(G)$,
	$\vertA \compEquiv \vertA'$, and
	$\vertB \compEquiv\vertB'$,
	then $f(\set{\vertA,\vertB}) = f(\set{\vertA',\vertB'})$.
\end{definition}

\begin{definition}[Compressed CFI]
	For a $G$-compression~$\compEquiv$ and a $\compEquiv$-compressible ${f \colon E(G) \to \FF_2}$
	\begin{itemize}
		\item the graph $\precompCFI{G,f}{\compEquiv}$ obtained from extending the colored graph $\CFI{G,f}$ with the relation~$\compEquiv$ is a \defining{precompressed CFI graph} and
		\item the colored graph $\compCFI{G,f}{\compEquiv}$ is a \defining{compressed CFI graph}.
	\end{itemize}
\end{definition}

An example is given in Figure~\ref{fig:compressed-cfi-graphs}.

\begin{lemma}
	\label{lem:compressed-cfi-graph-size}
	Let~$\compEquiv$ be a $G$\nobreakdash-compression
	and $f \colon E(G) \to \FF_2$ be $\compEquiv$\nobreakdash-compressible.
	If~$G$ is of maximum degree $d$
	and there are~$n$ many $\compEquiv$\nobreakdash-equivalence classes,
	then $\compCFI{G,f}{\compEquiv}$
	has at most $2^{d-1} n$ vertices.
\end{lemma}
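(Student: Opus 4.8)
The plan is to count the vertices of $\compCFI{G,f}{\compEquiv}$ directly, grouping them according to their origin class. First I would recall the two basic facts from the construction: every vertex of $\CFI{G,f}$ is a pair $(\vertA,\tup{a})$ with a well-defined origin $\vertA \in V(G)$, and the gadget of a base vertex $\vertA$ of degree $d_\vertA$ consists of exactly those pairs $(\vertA,\tup{a})$ with $\tup{a} \in \FF_2^{d_\vertA}$ and $\sum\tup{a}=0$, of which there are $2^{d_\vertA-1}$ (assuming $d_\vertA \ge 1$, which holds whenever $G$ has an edge; the case of a single isolated base vertex is trivial since then $\compCFI{G,f}{\compEquiv}$ has one vertex).

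Next I would fix a $\compEquiv$-equivalence class $C$ of $V(G)$. By the definition of a $G$-compression, all base vertices in $C$ are pairwise non-adjacent and share a common degree $d_C \le d$. By the definition of the equivalence relation induced on $V(\CFI{G,f})$, two gadget vertices $(\vertA,\tup{a})$ and $(\vertB,\tup{b})$ with $\vertA,\vertB \in C$ satisfy $(\vertA,\tup{a}) \compEquiv (\vertB,\tup{b})$ if and only if $\tup{a}=\tup{b}$, because $\vertA \compEquiv \vertB$ holds automatically. Hence the $\compEquiv$-equivalence classes of vertices of $\CFI{G,f}$ whose origin lies in $C$ correspond bijectively to the tuples $\tup{a} \in \FF_2^{d_C}$ with $\sum\tup{a}=0$, and there are exactly $2^{d_C-1} \le 2^{d-1}$ of them.

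Finally, since every vertex of $\CFI{G,f}$ has its origin in exactly one of the $n$ classes of $\compEquiv$ on $V(G)$, the vertices of $\compCFI{G,f}{\compEquiv}$ (which are precisely the $\compEquiv$-equivalence classes of $V(\CFI{G,f})$) are partitioned by origin class into groups of size at most $2^{d-1}$, so $\compCFI{G,f}{\compEquiv}$ has at most $2^{d-1}n$ vertices. I do not expect a real obstacle here; the only point needing a little care is to observe that the induced equivalence relation identifies precisely those gadget vertices that have $\compEquiv$-equivalent origins \emph{and} identical tuples, so that no two distinct tuples get collapsed and the per-class count is exactly $2^{d_C-1}$ rather than only a bound inherited from the size of one gadget.
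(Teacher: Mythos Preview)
Your proof is correct and follows essentially the same approach as the paper: both count, for each of the $n$ base $\compEquiv$-classes, the number of gadget vertices surviving the contraction, using that a degree-$d'$ gadget has $2^{d'-1}$ vertices and $d' \le d$. Your version is slightly more explicit about why the induced equivalence identifies gadget vertices precisely when their tuples agree, but the argument is the same.
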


\begin{proof}
	After contracting~$\compEquiv$ on $\precompCFI{G,f}{\compEquiv}$,
	there are exactly~$n$ gadgets in $\precompCFI{G,f}{\compEquiv}$ whose vertices
	remain in $\compCFI{G,f}{\compEquiv}$.
	For every base vertex of degree~$d'$, its gadget contains exactly~$2^{d'-1}$ vertices.
	Because every base vertex has degree at most~$d$,
	the graph $\compCFI{G,f}{\compEquiv}$ has at most $2^{d-1} n$ vertices.
\end{proof}

\begin{lemma}
	\label{lem:edges-compressed-precompressed}
	Let~$\compEquiv$ be a $G$\nobreakdash-compression
	and let $f \colon E(G) \to \FF_2$ be $\compEquiv$\nobreakdash-compressible.
	For all $\vertA, \vertB \in V(\CFI{G,f})$,
	the set $\set{\vertA/_\compEquiv,\vertB/_\compEquiv}$ is an edge (or non-edge) in $\compCFI{G,f}{\compEquiv}$
	if and only if, for all (or equivalently some) $\vertA' \compEquiv \vertA$ and $\vertB' \compEquiv \vertB$ such that their origins are adjacent in $G$,
	the set $\set{\vertA',\vertB'}$ is an edge (or non-edge, respectively) in $\precompCFI{G,f}{\compEquiv}$.

	In particular, for all $\vertA, \vertB \in V(\CFI{G,f})$ whose origins are adjacent in~$G$,
	the set $\set{\vertA/_\compEquiv,\vertB/_\compEquiv}$ is an edge in $\compCFI{G,f}{\compEquiv}$
	if and only if $\set{\vertA,\vertB}$ is an edge
	in  $\precompCFI{G,f}{\compEquiv}$.
\end{lemma}
\begin{proof}

	Let~$\vertA$ and~$\vertB$ be vertices of $\CFI{G,f}$,
	whose origins are adjacent in~$G$.
	We first claim that $\set{\vertA,\vertB}$ is an edge (or non-edge) in $\precompCFI{G,f}{\compEquiv}$
	if and only if, for all $\vertA' \compEquiv \vertA$ and $\vertB' \compEquiv \vertB$ such that their origins are adjacent in $G$,
	the set $\set{\vertA',\vertB'}$ is an edge (or non-edge, respectively) in $\precompCFI{G,f}{\compEquiv}$.
	Let $x$ and $y$ be the origins of $\vertA$ and $\vertB$, respectively,
	and $\vertA= (x,\tup{a})$ and $\vertB = (y, \tup{b})$
	for tuples $\tup{a}$ and $\tup{b}$ over $\FF_2$ whose length is the degree of $x$ and $y$, respectively.
	Assume that $\vertA' \compEquiv \vertA$ and $\vertB' \compEquiv \vertB$
	and the origins $x'$ and $y'$ of $\vertA'$ and $\vertB'$, respectively, are adjacent in $G$.
	By the extension of the compression on the vertices of the CFI graphs,
	we have $\vertA' = (x', \tup{a})$ and $\vertB' =  (y', \tup{b})$.
	Let $\vertB$ be the $i$-th neighbor of~$\vertA$ and~$\vertA$ be the $j$-th neighbor of $\vertB$.
	By the properties of a $G$-compression,~$\vertB'$ is the $i$-th neighbor of~$\vertA'$
	and~$\vertA'$ is the $j$-th neighbor of~$\vertB'$.
	By the definition of CFI graphs and because $\set{x,y},\set{x',y'} \in E(G)$,
	the set $\set{\vertA,\vertB}$ is and edge in $\CFI{G,f}$ if and only if $a_i = b_j$ if and only if $\set{\vertA',\vertB'}$ is and edge in $\CFI{G,f}$.
	The claim follows because $\precompCFI{G,f}{\compEquiv}$ and $\CFI{G,f}$ have the same edges.

	We now prove the assertion of the lemma.
	By construction of the compressed CFI graphs,
	$\set{\vertA/_\compEquiv,\vertB/_\compEquiv}$ is an edge in $\compCFI{G,f}{\compEquiv}$
	if and only if
	there are  $\vertA' \compEquiv \vertA$ and $\vertB' \compEquiv \vertB$
	such that $\set{\vertA',\vertB'}$ is an edge in $\precompCFI{G,f}{\compEquiv}$.
	By the former claim,
	this is the case if and only if,
	for all $\vertA' \compEquiv \vertA$ and $\vertB' \compEquiv \vertB$ such that their origins are adjacent in $G$,
	the set $\set{\vertA',\vertB'}$ is an edge in $\precompCFI{G,f}{\compEquiv}$.
	For the non-edge case,
	$\set{\vertA/_\compEquiv,\vertB/_\compEquiv}$ is not an edge in $\compCFI{G,f}{\compEquiv}$
	if and only if
	for all  $\vertA' \compEquiv \vertA$ and $\vertB' \compEquiv \vertB$,
	the set $\set{\vertA',\vertB'}$ is not edge in $\precompCFI{G,f}{\compEquiv}$.
	This is the case if and only if the same statement is true
	only for $\vertA' \compEquiv \vertA$ and $\vertB' \compEquiv \vertB$
	whose origins are adjacent in~$G$,
	because vertices whose origins are not adjacent in~$G$ are never adjacent in $\CFI{G,f}$.
\end{proof}

For two compressed CFI graphs $\compCFI{G,f}{\compEquiv}$ and $\compCFI{G,g}{\compEquiv}$, it is not obvious under which conditions on~$f$,~$g$, and~$\compEquiv$ they are isomorphic.
In particular, the criterion of Lemma~\ref{lem:cfi-iso} does not extend to the compressed CFI graphs.
To address this, we consider a restricted form of twistings, which respect isomorphisms of compressed CFI graphs.

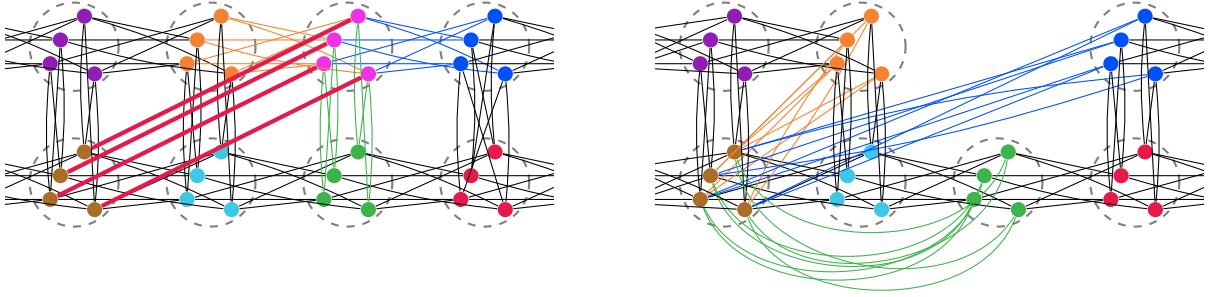
\begin{figure}
	\centering
	\begin{tikzpicture}[>=stealth,on grid,every node/.style={vertex}, scale=0.75]

		\def\spacing{2}

		\begin{scope}

			\clip (-0.5*\spacing, -0.5*\spacing) rectangle (3.5*\spacing, 1.5*\spacing);

			\drawgadget{3}{z1}{-\spacing}{0}{black}
			\drawgadget{3}{z2}{-\spacing}{\spacing}{black}

			\drawgadget{3}{a1}{0}{0}{gcolA}
			\drawgadget{3}{b1}{\spacing}{0}{gcolB}
			\drawgadget{3}{c1}{2*\spacing}{0}{gcolC}
			\drawgadget{3}{d1}{3*\spacing}{0}{gcolD}
			\drawgadget{3}{e1}{4*\spacing}{0}{black}
			\drawgadget{3}{a2}{0}{\spacing}{gcolE}
			\drawgadget{3}{b2}{\spacing}{\spacing}{gcolF}
			\drawgadget{3}{c2}{2*\spacing}{\spacing}{gcolG}
			\drawgadget{3}{d2}{3*\spacing}{\spacing}{gcolH}
			\drawgadget{3}{e2}{4*\spacing}{\spacing}{black}

			\drawconnectgadget{3}{3}{a1}{z1}{0}{1}{011E101}{011E000}{}
			\drawconnectgadget{3}{3}{a2}{z2}{0}{1}{011E101}{011E000}{}
			\drawconnectgadget{3}{3}{b1}{a1}{0}{1}{011E101}{011E000}{}
			\drawconnectgadget{3}{3}{c1}{b1}{0}{1}{011E101}{011E000}{}
			\drawconnectgadget{3}{3}{b2}{a2}{0}{1}{011E101}{011E000}{}
			\begin{scope}[gcolF]
			\drawconnectgadget{3}{3}{c2}{b2}{0}{1}{011E101}{011E000}{}
			\end{scope}
			\begin{scope}[gcolH]
			\drawconnectgadget{3}{3}{d2}{c2}{0}{1}{011E101}{011E000}{}
			\end{scope}
			\drawconnectgadget{3}{3}{d1}{c1}{0}{1}{011E101}{011E000}{}
			\drawconnectgadget{3}{3}{e2}{d2}{0}{1}{011E101}{011E000}{}
			\drawconnectgadget{3}{3}{e1}{d1}{0}{1}{011E101}{011E000}{}

			\drawconnectgadget{3}{3}{a1}{a2}{2}{2}{000E000,011E011}{101E101,110E110}{}
			\drawconnectgadget{3}{3}{b1}{b2}{2}{2}{000E000,011E011}{101E101,110E110}{}
			\begin{scope}[gcolC]
			\drawconnectgadget{3}{3}{c1}{c2}{2}{2}{000E000,011E011}{101E101,110E110}{}
			\end{scope}
			\drawconnectgadget{3}{3}{d1}{d2}{2}{2}{000E000,011E011}{101E101,110E110}{}

			\path[draw, ultra thick, gcolD]
				(a1000) to (c2000)
				(a1011) to (c2011)
				(a1110) to (c2110)
				(a1101) to (c2101);

		\end{scope}

		\begin{scope}[shift={(9.5,0)}]
			\def\spacing{2}

			\clip (-0.5*\spacing, -0.5*\spacing-0.7) rectangle (3.5*\spacing, 1.5*\spacing);

			\drawgadget{3}{z1}{-\spacing}{0}{black}
			\drawgadget{3}{z2}{-\spacing}{\spacing}{black}

			\drawgadget{3}{a1}{0}{0}{gcolA}
			\drawgadget{3}{b1}{\spacing}{0}{gcolB}
			\drawgadget{3}{c1}{2*\spacing}{0}{gcolC}
			\drawgadget{3}{d1}{3*\spacing}{0}{gcolD}
			\drawgadget{3}{e1}{4*\spacing}{0}{black}
			\drawgadget{3}{a2}{0}{\spacing}{gcolE}
			\drawgadget{3}{b2}{\spacing}{\spacing}{gcolF}
			\drawgadget{3}{d2}{3*\spacing}{\spacing}{gcolH}
			\drawgadget{3}{e2}{4*\spacing}{\spacing}{black}

			\drawconnectgadget{3}{3}{b1}{a1}{0}{1}{011E101}{011E000}{}
			\drawconnectgadget{3}{3}{c1}{b1}{0}{1}{011E101}{011E000}{}
			\drawconnectgadget{3}{3}{b2}{a2}{0}{1}{011E101}{011E000}{}
			\begin{scope}[gcolF]
			\drawconnectgadget{3}{3}{a1}{b2}{0}{1}{011E101}{011E000}{}
			\end{scope}
			\begin{scope}[gcolH]
				\drawconnectgadget{3}{3}{d2}{a1}{0}{1}{011E101}{011E000}{}
			\end{scope}
			\drawconnectgadget{3}{3}{d1}{c1}{0}{1}{011E101}{011E000}{}
			\drawconnectgadget{3}{3}{e2}{d2}{0}{1}{011E101}{011E000}{}
			\drawconnectgadget{3}{3}{e1}{d1}{0}{1}{011E101}{011E000}{}

			\drawconnectgadget{3}{3}{a1}{a2}{2}{2}{000E000,011E011}{101E101,110E110}{}
			\drawconnectgadget{3}{3}{b1}{b2}{2}{2}{000E000,011E011}{101E101,110E110}{}
			\begin{scope}[every edge/.style={draw, bend left = 70}, gcolC]
			\drawconnectgadget{3}{3}{c1}{a1}{2}{2}{}{}{}
			\end{scope}
			\drawconnectgadget{3}{3}{d1}{d2}{2}{2}{000E000,011E011}{101E101,110E110}{}

			\drawconnectgadget{3}{3}{a1}{z1}{1}{0}{011E101}{011E000}{}
			\drawconnectgadget{3}{4}{a2}{z2}{1}{0}{011E101}{011E000}{}

		\end{scope}

	\end{tikzpicture}
	\caption{On the left, a precompressed CFI graph in which the vertices of two gadgets are identified by the compression (drawn by red lines).
		On the right, the obtained compressed CFI graph where identified vertices are contracted.
		Some edges are colored for the sake of visual distinguishability.
		}
	\label{fig:compressed-cfi-graphs}
\end{figure}

\begin{definition}[Compressible Twisting]
	For a $G$-compression~$\compEquiv$,
	a $G$-twisting~$T$ is called \linebreak \defining{$\compEquiv$\nobreakdash-compressible} if the following holds for all $\vertA,\vertA' \in V$ with $\vertA \compEquiv \vertA'$:
	Let~$\vertA$ and~$\vertA'$ be of degree~$d$.
	Then for every $i\in[d]$, we have
	$(\vertA,\vertB_i) \in T$ if and only if $(\vertA',\vertB'_i) \in T$,
	where~$\vertB_i$ is the $i$-th neighbor of~$\vertA$ and~$\vertB'_i$ is the $i$-th neighbor of~$\vertA'$ (according to the order on~$G$).
\end{definition}

\begin{lemma}
	\label{lem:twistsings-equiv-parameters}
	For every $G$-compression $\compEquiv$,
	every set of base vertices $W \subseteq V$,
	all $\compEquiv$-compressible $f,g\colon E(G) \to \FF_2$, and
	every $\ell$-tuple $\tup{\vertA}$ of vertices in $\CFI{G,f}$ (and hence of $\CFI{G,g}$) such that the set of origins of all vertices in $\tup{\vertA}$ is~$W$,
	the following are equivalent:
	\begin{enumerate}
		\item $(\compCFI{G,f}{\compEquiv},\tup{\vertA}/_\compEquiv) \iso (\compCFI{G,g}{\compEquiv},\tup{\vertA}/_\compEquiv)$,
		where $\tup{\vertA}/_\compEquiv$ denotes the $\ell$\nobreakdash-tuple
		whose $i$\nobreakdash-th entry is the $\compEquiv$-equivalence class~$\vertA_i/_\compEquiv$ of the $i$-th entry~$\vertA_i$ of~$\tup{\vertA}$ for every $i \in [\ell]$,
		\item $\precompCFI{G,f}{\compEquiv,\tup{\vertA}} \iso \precompCFI{G,g}{\compEquiv, \tup{\vertA}}$, and
		\item there exists a $\compEquiv$-compressible $G$\nobreakdash-twisting
		such that $f = g + g_T$ and~$T$ fixes all vertices in~$W$.
	\end{enumerate}
\end{lemma}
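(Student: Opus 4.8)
The statement is an equivalence of three conditions, so I would prove the cycle of implications (3)$\Rightarrow$(2)$\Rightarrow$(1)$\Rightarrow$(3). The middle implication (2)$\Rightarrow$(1) is essentially free: an isomorphism of precompressed CFI graphs $\precompCFI{G,f}{\compEquiv,\tup{\vertA}} \iso \precompCFI{G,g}{\compEquiv,\tup{\vertA}}$ must preserve the equivalence relation $\compEquiv$ (by definition of isomorphisms of graphs with an equivalence relation) and fix each entry of $\tup{\vertA}$; hence it descends to the quotient by $\compEquiv$, yielding an isomorphism $\compCFI{G,f}{\compEquiv} \iso \compCFI{G,g}{\compEquiv}$ that fixes each class $\vertA_i/_\compEquiv$. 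One should check that the quotient map is functorial enough that the descended map is still a bijection on vertices and respects edges and colors, but since contracting $\compEquiv$-classes is a well-defined operation applied identically on both sides, this is routine.

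**The implication (3)$\Rightarrow$(2).**
Given a $\compEquiv$-compressible $G$-twisting $T$ with $f = g + g_T$ and $T$ fixing all of $W$, Lemma~\ref{lem:cfi-iso} already gives an isomorphism $\auto\colon (\CFI{G,g},\tup{\vertA}) \iso (\CFI{G,f},\tup{\vertA})$. I would recall the explicit form of this isomorphism (as developed in the cited works): on the gadget of a base vertex $\vertA$ of degree $d$ with neighbors $\vertB_1,\dots,\vertB_d$, the map sends $(\vertA,\tup{a})$ to $(\vertA,\tup{a}+\tup{t}_\vertA)$, where $\tup{t}_\vertA \in \FF_2^d$ has $i$-th coordinate $1$ exactly when $(\vertA,\vertB_i)\in T$ (this is a valid gadget vertex precisely because $|T \cap (\{\vertA\}\times V)|$ is even). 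The key point is that the $\compEquiv$-compressibility of $T$ says exactly that $\tup{t}_\vertA = \tup{t}_{\vertA'}$ whenever $\vertA \compEquiv \vertA'$; therefore $\auto$ maps $\compEquiv$-equivalent vertices to $\compEquiv$-equivalent vertices, i.e. $\auto$ is an isomorphism of the extended structures $\precompCFI{G,g}{\compEquiv} \iso \precompCFI{G,f}{\compEquiv}$. Since $T$ fixes $W$, we have $\tup{t}_\vertA = \tup{0}$ for every origin $\vertA$ appearing in $\tup{\vertA}$, so $\auto$ fixes $\tup{\vertA}$ pointwise, giving (2).

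**The implication (1)$\Rightarrow$(3), the main obstacle.**
This is where the work lies: from an abstract isomorphism $\Phi\colon \compCFI{G,g}{\compEquiv} \iso \compCFI{G,f}{\compEquiv}$ fixing each $\vertA_i/_\compEquiv$, I must reconstruct a $\compEquiv$-compressible twisting $T$ with $f = g + g_T$ fixing $W$. The strategy is to first argue that $\Phi$ must respect the gadget structure: each color class of the compressed graph is (the contraction of) one gadget, and $\Phi$ is color-preserving and maps it to the correspondingly-colored gadget-class on the other side. Lifting $\Phi$ back along the contraction maps, one obtains a bijection $\widetilde{\Phi}$ between $\CFI{G,g}$ and $\CFI{G,f}$ that is constant on $\compEquiv$-classes in the sense required for $\compEquiv$-compressibility; the subtlety is that the edge set of the compressed graph only records the \emph{existence} of an edge between classes, so one must check that $\widetilde{\Phi}$ respects all the bipartite connections of the uncompressed graph and not merely some of them. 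Here the $\compEquiv$-compressibility of $f$ and $g$ — the fact that parallel base edges between $\compEquiv$-classes carry the same value — is exactly what makes the bipartite connections between two gadget-classes "uniform", so that preserving one connection forces preserving all of them. Once $\widetilde{\Phi}$ is shown to be a genuine isomorphism $\CFI{G,g} \iso \CFI{G,f}$ fixing $\tup{\vertA}$, Lemma~\ref{lem:cfi-iso} yields a $G$-twisting $T$ with $f = g + g_T$ fixing $W$; and the constancy of $\widetilde\Phi$ on $\compEquiv$-classes translates back into $\compEquiv$-compressibility of $T$. I expect the delicate bookkeeping to be precisely this lifting step — ruling out that $\Phi$ "mixes" vertices from a compressed gadget in a way that has no preimage as a gadget automorphism, and verifying that the uniformity forced by $\compEquiv$-compressibility of $f,g$ suffices — so I would state and prove a small auxiliary claim isolating the structure of $\Phi$ on a single contracted gadget and on a single contracted bipartite connection before assembling $T$.
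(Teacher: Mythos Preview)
Your proposal is correct and uses essentially the same arguments as the paper; the only cosmetic difference is that the paper orders the cycle as (1)$\Rightarrow$(2)$\Rightarrow$(3)$\Rightarrow$(1), so your (1)$\Rightarrow$(3) is the paper's (1)$\Rightarrow$(2) (the lifting step, using $\compEquiv$-compressibility of $f,g$ exactly as you describe) composed with its (2)$\Rightarrow$(3) (extracting the twisting via Lemma~\ref{lem:cfi-iso} and reading off $\compEquiv$-compressibility of $T$ from $\compEquiv$-preservation of the isomorphism), while your (3)$\Rightarrow$(2)$\Rightarrow$(1) is the paper's (3)$\Rightarrow$(1) split into its two natural halves. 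The substantive content --- the explicit gadget-wise form of the twisting isomorphism, the uniqueness of the lift by origin within each $\compEquiv$-class, and the role of $\compEquiv$-compressibility of $f,g$ in making edges lift uniformly --- matches the paper.
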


\begin{proof}
	Let~$\compEquiv$ be a $G$-compression, $W\subseteq V$,
	$f,g\colon E(G) \to \FF_2$ be $\compEquiv$\nobreakdash-compressible,
	and~$\tup{\vertA}$ be an $\ell$\nobreakdash-tuple of vertices of $\CFI{G,f}$
	such that the set of origins of all vertices in~$\tup{\vertA}$ is~$W$.
	We show (1) $\Rightarrow$ (2) $\Rightarrow$ (3) $\Rightarrow$ (1).

	Assume there is an isomorphism
        $\auto\colon \compCFI{G,f}{\compEquiv} \to
        \compCFI{G,g}{\compEquiv}$ such that
        $\auto(\tup{\vertA}/_\compEquiv) = \tup{\vertA}/_\compEquiv$.
        We construct an isomorphism
        $\autoB \colon \precompCFI{G,f}{\compEquiv} \to
        \precompCFI{G,g}{\compEquiv}$.  Let~$\vertC$ be an arbitrary
        vertex of $\precompCFI{G,f}{\compEquiv}$
        and~$\vertC/_\compEquiv$ its
        $\compEquiv$\nobreakdash-equivalence class, that itself is a vertex
        in $\compCFI{G,f}{\compEquiv}$.  The equivalence
        classes~$\vertC/_\compEquiv$ and~$\autoA(\vertC/_\compEquiv)$
        have the same color~$c$.
        Let $\vertB$ be the unique base vertex of color~$c$.
        By definition of the compressed CFI graphs,
        $\vertC/_\compEquiv$ and~$\autoA(\vertC/_\compEquiv)$
        contain exactly one vertex with origin~$\vertB'$
        for every base vertex $\vertB' \compEquiv \vertB$.
        In particular,~$\autoA(\vertC/_\compEquiv)$ contains a
        vertex~$\vertC'$ of the same origin as~$\vertC$.  By
        construction, no $\compEquiv$\nobreakdash-equivalence class
        contains two vertices of the same origin.  Thus,~$\vertC'$ is
        unique.  We define $\autoB(\vertC) \coloneqq \vertC'$.  Again by
        construction,~$\autoB$ is color preserving.
        Let~$\vertB$
        and~$\vertC$ be vertices of $\CFI{G,f}$
        whose origins are adjacent.
        By Lemma~\ref{lem:edges-compressed-precompressed},
        $\set{\vertB,\vertC}$  is
        an edge in $\CFI{G,f}$ if and only if
        $\set{\vertB/_\compEquiv,\vertC/_\compEquiv}$ is an edge in
        $\compCFI{G,f}{\compEquiv}$ if and only if
        $\set{\autoA(\vertB/_\compEquiv),\autoA(\vertC/_\compEquiv)}$
        is an edge in $\compCFI{G,g}{\compEquiv}$ if and only if
        $\set{\autoB(\vertB),\autoB(\vertC)}$ is an edge in
        $\CFI{G,g}$.
        Two vertices, whose origins are not adjacent,
        are never adjacent themselves.  Because $\autoB$ is color-preserving, $\autoB$ is indeed an
        isomorphism from $\precompCFI{G,f}{\compEquiv}$ to
        $\precompCFI{G,g}{\compEquiv}$.  It satisfies
        $\autoB(\tup{\vertA}) = \tup{\vertA}$ because~$\auto$
        satisfies
        $\auto(\tup{\vertA}/_\compEquiv) = \tup{\vertA}/_\compEquiv$
        and no equivalence class contains two vertices of the same
        gadget.

	Assume there is an isomorphism $\auto \colon \precompCFI{G,f}{\compEquiv} \to \precompCFI{G,g}{\compEquiv}$ with $\auto(\tup{\vertA}) = \tup{\vertA}$.
	In particular,~$\autoA$ is an isomorphism $\CFI{G,f} \to \CFI{G,g}$
	corresponding to a $G$\nobreakdash-twisting~$T$ by Lemma~\ref{lem:cfi-iso}.
	We argue that the twisting~$T$ has to be $\compEquiv$\nobreakdash-compressible.
	Let $\vertB\in V(G)$ be an arbitrary base vertex and $\set{\vertB,\vertC} \in E(G)$ be an incident base edge.
	Also suppose that~$\vertC$ is the $i$\nobreakdash-th neighbor of~$\vertB$.
	If~$\autoA$ swaps the two sets $\setcond{(\vertB,\tup{a})}{\sum \tup{a} = 0, a_i = 0}$ and $\setcond{(\vertB,\tup{a})}{\sum \tup{a} = 0, a_i = 1}$,
	then~$\autoA$ has to swap the corresponding sets for all $\vertB' \compEquiv \vertB$.
	This means that, if $(\vertB,\vertC) \in T$, $\vertB' \compEquiv \vertB$,
	and~$\vertC'$ is the $i$\nobreakdash-th neighbor of~$\vertB'$,
	then also $(\vertB', \vertC') \in T$.
	Hence,~$T$ is $\compEquiv$\nobreakdash-compressible
	and satisfies $f = g + g_T$.
	Because we have $\auto(\tup{\vertA}) = \tup{\vertA}$,
	the isomorphism~$\auto$ is actually the identity on all vertices with origin~$W$ (every non-trivial automorphism of a gadget acts non-trivial on every vertex of that gadget).
	This means that~$T$ fixes~$W$.

	Assume there is a $\compEquiv$\nobreakdash-compressible $G$\nobreakdash-twisting~$T$ with $f = g + g_T$ and that fixes~$W$.
	We construct an isomorphism $\autoB\colon \compCFI{G,f}{\compEquiv} \to \compCFI{G,g}{\compEquiv}$.
	Let $\autoA \colon \CFI{G,f} \to \CFI{G,g}$ be the isomorphism corresponding to~$T$ from Lemma~\ref{lem:cfi-iso}.
	Similarly to the case before,
	one verifies that~$\autoA$ preserves~$\compEquiv$,
	that is,~$\autoA$ is an isomorphism from $\precompCFI{G,f}{\compEquiv}$ to $\precompCFI{G,g}{\compEquiv}$.
	Hence,~$\autoA$ maps a $\compEquiv$-equivalence class to a $\compEquiv$-equivalence class.
	This induces a map~$\autoB$ on the $\compEquiv$\nobreakdash-equivalence classes
	for which one easily shows that it is an isomorphism $\compCFI{G,f}{\compEquiv} \to \compCFI{G,g}{\compEquiv}$.
	Because~$T$ fixes~$W$, we have that $\autoA(\tup{\vertA}) = \tup{\vertA}$ and hence that~$\autoB(\tup{\vertA}/_\compEquiv) = \tup{\vertA}/_\compEquiv$.
\end{proof}

\subsection{The WL-Algorithm and Compressed CFI Graphs}

We compare the expressiveness of $k$\nobreakdash-WL (via the $(k+1)$-bijective pebble game)
on CFI graphs, precompressed CFI graphs, and compressed CFI graphs.
We again fix an arbitrary ordered base graph~$G$.

\begin{lemma}
	\label{lem:distinguish-CFI-graphs}
	For every $k \geq 3$, every $r \in \nat$,
	every $G$\nobreakdash-compression~$\compEquiv$,
	and all $\compEquiv$\nobreakdash-compressible functions $f,g\colon E(G) \to \FF_2$,
	\begin{enumerate}
		\item $\CFI{G,f} \not\kequivr{k}{r} \CFI{G,g}$ implies
		$\precompCFI{G,f}{\compEquiv} \not\kequivr{k}{r} \precompCFI{G,g}{\compEquiv}$,
		\item $\precompCFI{G,f}{\compEquiv} \not\kequivr{k}{r} \precompCFI{G,g}{\compEquiv}$ implies
		$\compCFI{G,f}{\compEquiv} \not\kequivr{k}{r} \compCFI{G,g}{\compEquiv}$, and
		\item $\compCFI{G,f}{\compEquiv} \not\kequivr{k}{r} \compCFI{G,g}{\compEquiv}$ implies
		$\precompCFI{G,f}{\compEquiv} \not\kequivr{k}{r+2} \precompCFI{G,g}{\compEquiv}$.
	\end{enumerate}
\end{lemma}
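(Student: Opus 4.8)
The plan is to translate everything into the bijective $k$-pebble game using Lemma~\ref{lem:k-wl-k-plus-one-bijective-game} and then exhibit, for each implication, a way for Spoiler to win on the target pair of structures given a winning strategy on the source pair (or, equivalently, to argue via Duplicator strategies in the contrapositive direction). For item~(1), I would observe that the precompressed CFI graph $\precompCFI{G,f}{\compEquiv}$ has exactly the same vertex set, edge set, and coloring as $\CFI{G,f}$ — the only addition is the equivalence relation $\compEquiv$, which isomorphisms must preserve. Since $f,g$ are $\compEquiv$-compressible, $\compEquiv$ is literally the same relation on both sides, and a partial isomorphism of $\CFI{G,f}$ and $\CFI{G,g}$ need not automatically respect $\compEquiv$; but this only makes Spoiler's job \emph{easier} on the precompressed side. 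Concretely, a Spoiler winning strategy in $r$ rounds witnessing $\CFI{G,f} \not\kequivr{k}{r} \CFI{G,g}$ is a Spoiler winning strategy in $r$ rounds on $\precompCFI{G,f}{\compEquiv}$ versus $\precompCFI{G,g}{\compEquiv}$ verbatim: whenever the position fails to be a partial isomorphism of the underlying graphs, it \emph{a fortiori} fails to be a partial isomorphism of the $\compEquiv$-extended structures. So item~(1) is essentially immediate.

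For item~(2), the idea is that the compressed graph $\compCFI{G,f}{\compEquiv}$ is a quotient of $\precompCFI{G,f}{\compEquiv}$, and each $\compEquiv$-class of a precompressed CFI graph is a ``local'' object: by the definition of $G$-compression, equivalent base vertices are pairwise non-adjacent and have the same degree, so a $\compEquiv$-class of gadget vertices $\vertA/_\compEquiv$ carries exactly the same adjacency information, once we know \emph{which} gadget we are looking at, as any one of its representatives. I would set up a correspondence between positions: given a Spoiler winning strategy on the precompressed pair, Spoiler plays on the compressed pair by interpreting each move on a class $\vertA/_\compEquiv$ through a canonical representative. The subtlety is Duplicator's bijection: a bijection on $V(\compCFI{G,f}{\compEquiv})$ must be lifted to a bijection on $V(\precompCFI{G,f}{\compEquiv})$ respecting origins. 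Here I would use that the $\compEquiv$-class $\vertA/_\compEquiv$ and its image $h(\vertA/_\compEquiv)$ have the same color, hence consist of vertices drawn from exactly the same set of gadgets, so there is a canonical bijection between their members matching origins (as exploited in the proof of Lemma~\ref{lem:twistsings-equiv-parameters}). Transporting the precompressed-game bijection down and the compressed-game bijection up along these canonical identifications lets Spoiler's winning strategy on $\precompCFI{}{}$ be simulated on $\compCFI{}{}$, and a failure of partial isomorphism upstairs forces one downstairs because adjacency in $\compCFI{}{}$ exactly reflects adjacency of representatives in $\precompCFI{}{}$ by $\compEquiv$-compressibility of $f$ and $g$.

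For item~(3) — the only implication that loses rounds, with the loss being exactly $2$ — the direction is reversed: from a Spoiler win on the compressed pair I must produce a Spoiler win on the precompressed pair. The issue is that a position $\tup{\vertA},\tup{\vertB}$ in the precompressed game pins down individual gadget vertices, whereas the compressed game only sees their $\compEquiv$-classes; worse, a single $\compEquiv$-class may be hit by \emph{several} pebbles of the precompressed position if they land on different gadgets of the same class. I expect this to be the main obstacle. The plan is: Spoiler, playing the precompressed game, first uses its pebbles to ``disambiguate'' — but really the clean way is to let Spoiler first project the precompressed position down to the compressed position (legal, since $\tup{\vertA}/_\compEquiv$ is a well-defined compressed position and any partial isomorphism downstairs lifts the failure upstairs, as in item~(2)), run the given compressed-game strategy, and only need the extra rounds to handle the at-most-two ``boundary'' pebbles that, after a compressed move onto a class, must be placed on the \emph{correct} gadget member upstairs. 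More precisely: whenever the compressed strategy picks up pebble $p_i$ and Duplicator answers with a bijection on compressed vertices, Spoiler lifts it to a bijection on precompressed vertices via the canonical per-class identifications above, mirrors the compressed move, and at the very end — when the compressed position ceases to be a partial isomorphism — Spoiler needs at most two additional pebble placements to exhibit, on the precompressed side, two gadget vertices whose $\compEquiv$-classes are adjacent (or non-adjacent) in a way inconsistent between the two sides; these two placements cost the two extra rounds. The verification that two extra moves always suffice rests again on $\compEquiv$-compressibility (adjacency of classes is witnessed by adjacency of \emph{representatives}) together with the fact that each class is an independent set of gadgets, so at most two gadgets — one on each endpoint of the offending base edge — need to be named. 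I would write this out carefully, checking that the $k \ge 3$ hypothesis gives Spoiler enough pebbles to both run the simulation and perform the two disambiguating placements without losing earlier progress.
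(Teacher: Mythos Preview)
Your plan is essentially the paper's own approach: item~(1) is immediate from the extra relation only strengthening Spoiler's winning condition; item~(2) simulates the precompressed-game Spoiler strategy inside the compressed game by lifting Duplicator's compressed bijection to a precompressed one via the origin-matching trick; and item~(3) runs the compressed strategy inside the precompressed game and spends two extra rounds at the end to name concrete representatives witnessing the offending (non-)edge, with $k\ge 3$ guaranteeing a spare pebble pair.

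One slip in your description of item~(3): you write that ``Duplicator answers with a bijection on compressed vertices'' which Spoiler then ``lifts to a bijection on precompressed vertices.'' The direction is reversed. Spoiler is playing the \emph{precompressed} game, so Duplicator hands over a bijection $h$ on $V(\precompCFI{G,f}{\compEquiv})$, and Spoiler must \emph{project} it down to a bijection $h'$ on $V(\compCFI{G,f}{\compEquiv})$ to feed to the compressed-game strategy. This projection is not quite as automatic as the lift in item~(2): a color-preserving $h$ need not send $\compEquiv$-classes to $\compEquiv$-classes. The paper handles this by defining $h'(\vertC/_\compEquiv) \coloneqq h(\vertC)/_\compEquiv$ where $\vertC$ is the \emph{minimal} representative of its class; uniqueness of minima plus color-preservation of $h$ makes $h'$ a bijection. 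Once you fix this direction, the rest of your argument goes through exactly as in the paper.
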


\begin{proof}
	Let $k \geq 3$, $r \in \nat$,~$\compEquiv$ be a $G$\nobreakdash-compression,
	and $f,g\colon E(G)\to \FF_2$ be $\compEquiv$\nobreakdash-compressible.
	\begin{enumerate}
		\item Because $\CFI{G,f}$ and $\CFI{G,g}$ are only extended by an additional relation, the claim directly follows.

		\item Assume that $\precompCFI{G,f}{\compEquiv} \not\kequivr{k}{r} \precompCFI{G,g}{\compEquiv}$,
		that is, Spoiler has a winning strategy in the $r$\nobreakdash-round bijective $k$\nobreakdash-pebble game played on $\precompCFI{G,f}{\compEquiv}$ and $\precompCFI{G,g}{\compEquiv}$.
		We show that Spoiler also has a winning strategy in the $r$\nobreakdash-round game
		on $\compCFI{G,f}{\compEquiv}$ and $\compCFI{G,g}{\compEquiv}$.

		We may assume without loss of generality that Duplicator always plays color-preserving in both games.
		This means that the bijections chosen by Duplicator always preserve colors.
		Observe that, if Duplicator does not do so, Spoiler wins immediately.

		Consider a position $\tup{\vertA}/_\compEquiv,\tup{\vertB}/_\compEquiv$ of the game on $\compCFI{G,f}{\compEquiv}$ and $\compCFI{G,g}{\compEquiv}$.
		That is, we have $\tup{\vertA}/_\compEquiv=(\vertA_1/_\compEquiv,\ldots,\vertA_\ell/_{\compEquiv})$ and $\tup{\vertB}/_\compEquiv=(\vertB_1/_\compEquiv,\ldots,\vertB_\ell/_{\compEquiv})$ for some $\ell\le k$ with pebbles~$p_{i_j}$ placed on $\vertA_{j}/_{\compEquiv}$ and the corresponding~$q_{i_j}$ placed on $\vertB_{j}/_{\compEquiv}$.
		We say that a position $\tup{\vertA}',\tup{\vertB}'$ of the game on $\precompCFI{G,f}{\compEquiv}$ and $\precompCFI{G,g}{\compEquiv}$ is an \defining{$s$-round witness} for
		$\tup{\vertA}/_\compEquiv,\tup{\vertB}/_\compEquiv$ if $\tup{\vertA}'=(\vertA_1',\ldots,\vertA_\ell')$ and
		$\tup{\vertB}'=(\vertB_1',\ldots,\vertB_\ell')$ such that the following conditions are satisfied:
		\begin{enumerate}[label=(\alph*), ref=(\alph*)]
			\item \label{cond:equiv} $\vertA_i'\compEquiv \vertA_i$ and $\vertB_i'\compEquiv \vertB_i$ for all $i\in[\ell]$;
			\item \label{cond:color} $\vertA'_i$ has the same color in $\precompCFI{G,f}{\compEquiv}$ as $\vertB'_i$ has in $\precompCFI{G,g}{\compEquiv}$ for every $i \in [\ell]$;
			\item \label{cond:win} $\tup{\vertA}',\tup{\vertB}'$ is a winning position for Spoiler in the $s$-round game on $\precompCFI{G,f}{\compEquiv}$ and $\precompCFI{G,g}{\compEquiv}$.
		\end{enumerate}
		We claim that if $\tup{\vertA}',\tup{\vertB}'$ is a $0$\nobreakdash-round witness for $\tup{\vertA}/_\compEquiv,\tup{\vertB}_\compEquiv$,
		then Spoiler wins the game on $\compCFI{G,f}{\compEquiv}$ and $\compCFI{G,g}{\compEquiv}$.
		Because $\tup{\vertA}',\tup{\vertB}'$ is a winning position for Spoiler in the $0$\nobreakdash-round game on $\precompCFI{G,f}{\compEquiv}$ and $\precompCFI{G,g}{\compEquiv}$
		by Condition~\ref{cond:win},
		the mapping $\tup{\vertA}'\mapsto\tup{\vertB}'$ is not a partial isomorphism.
		We show that the mapping
		$\tup{\vertA}/_\compEquiv
                \mapsto\tup{\vertB}/_\compEquiv$ is not a partial
                isomorphism, either.

		To see this,
		first note that $\vertA_i/_\compEquiv$ has the same color
		as $\vertB_i/_\compEquiv$
		and $\vertA'_i$ has the same color as $\vertB'_i$
		for every $i \in [\ell]$
		because Duplicator plays color-preserving and because of Condition~\ref{cond:color}.
		Suppose $i,j\in[\ell]$.
		If $\vertA_i'\compEquiv \vertA_j'$ but $\vertB_i'\not\compEquiv  \vertB_j'$, then
		by Condition~\ref{cond:equiv}, $\vertA_i/_{\compEquiv}=\vertA_j/_\compEquiv$ but $\vertB_i/_\compEquiv\neq \vertB_j/_\compEquiv$.
		So Spoiler wins immediately.
		If $\vertA_i'=\vertA_j'$ but $\vertB_i'\neq \vertB_j'$,
		then $\vertB_i'$ has the same color as $\vertB_j'$ (namely the one of $\vertA_i'$)
		and $\vertB_i' \not\compEquiv\vertB_j'$
		because distinct vertices of the same color are never $\compEquiv$-\nobreakdash equivalent.
		Thus, $\vertA_i/_{\compEquiv}=\vertA_j/_\compEquiv$ but $\vertB_i/_\compEquiv\neq \vertB_j/_\compEquiv$ by Condition~\ref{cond:equiv}.
		Spoiler wins immediately again.

		Lastly, suppose that $\set{\vertA_i',\vertA_j'}$ is an edge in $\precompCFI{G,f}{\compEquiv}$
		but $\set{\vertB_i',\vertB_j'}$
		is not an edge in $\precompCFI{G,g}{\compEquiv}$. Then $\set{\vertA_i/_\compEquiv,\vertA_j/_\compEquiv}$ is an edge in $\compCFI{G,f}{\compEquiv}$.
		We prove that $\set{\vertB_i/_\compEquiv,\vertB_j/_\compEquiv}$ is not an edge in $\compCFI{G,g}{\compEquiv}$ and hence Spoiler wins.
		Because $\set{\vertA_i', \vertA_j'}$ is an edge,
		the origins of~$\vertA_i'$ and~$\vertA_j'$ are adjacent base vertices.
		Hence, the origins of~$\vertB_i'$ and~$\vertB_j'$ are also adjacent base vertices,
		because they have the same colors as~$\vertA_i'$ and~$\vertA_j'$, respectively.
		So by Lemma~\ref{lem:edges-compressed-precompressed},
		the set $\set{\vertB_i',\vertB_j'}$
		is not an edge in $\precompCFI{G,g}{\compEquiv}$.

		The same arguments apply in the converse direction if the relations hold between the $\vertB_i'$, but not the $\vertA_i'$.

		By induction on~$s\leq r$ we shall prove that Spoiler has a strategy for the $r$\nobreakdash-round game on $\compCFI{G,f}{\compEquiv}$ and $\compCFI{G,g}{\compEquiv}$ such that the position in round~$s$ has an $(r-s)$\nobreakdash-round witness.
		The base case $s=0$ holds because the empty position in the game on $\precompCFI{G,f}{\compEquiv}$ and $\precompCFI{G,g}{\compEquiv}$ is an $r$\nobreakdash-round witness for the empty position in the game on  $\compCFI{G,f}{\compEquiv}$ and $\compCFI{G,g}{\compEquiv}$.
		For the inductive step, let $s<r$ and suppose that the position $\tup{\vertA}/_\compEquiv,\tup{\vertB}_\compEquiv$ in round~$s$ has an $(r-s)$\nobreakdash-witness $\tup{\vertA}',\tup{\vertB}'$.
		If Spoiler removes a pair of pebbles, then we can remove the corresponding elements for the tuples, and the witness relation remains intact.
		Now suppose Spoiler wants to place a pair of pebbles $p_{i_{\ell+1}},q_{i_{\ell+1}}$.
		Duplicator picks a bijection~$h$ from $V(\compCFI{G,f}{\compEquiv})$ to $V(\compCFI{G,g}{\compEquiv})$.
		Without loss of generality,
		we may assume that $h(\vertA_i/_{\compEquiv})=\vertB_i/_{\compEquiv}$ for all~$i \in [\ell]$.
		Also recall that Duplicator plays color-preserving.
		This implies that~$h$ permutes each color class.
		Hence, we can lift~$h$ to a bijection~$h'$ from $V(\precompCFI{G,f}{\compEquiv})$ to $V(\precompCFI{G,f}{\compEquiv})$:
		We set $h'(\vertA') = \vertB'$ if and only if $\vertA'$ and $\vertB'$ are of the same origin and $h(\vertA'/_\compEquiv) = \vertB'_\compEquiv$.
		This, in particular, implies  $h'(\vertA_i')=\vertB_i'$ for all~$i \in [\ell]$.

		Now suppose that in the $(r-s)$\nobreakdash-round game on $\precompCFI{G,f}{\compEquiv}$ and $\precompCFI{G,g}{\compEquiv}$ in position~$\tup{\vertA}',\tup{\vertB}'$, Duplicator picks the bijection~$h'$, and Spoiler answers (according to Spoiler's winning strategy) by placing the pebble $p_{i_{\ell+1}}$ on the vertex $\vertA_{\ell+1}'$ and the pebble $q_{i_{\ell+1}}$ on the vertex ${\vertB_{\ell+1}'\coloneqq h'(\vertA_{\ell+1}')}$. Then in the game on $\compCFI{G,f}{\compEquiv}$ and $\compCFI{G,g}{\compEquiv}$ with Duplicator choosing the bijection~$h$,
		Spoiler places $p_{i_{\ell+1}}$ on $\vertA_{\ell+1}'/_{\compEquiv}$ and
		$q_{i_{\ell+1}}$ on $\vertB_{\ell+1}'/_{\compEquiv}$.
		Then the pair
		$(\vertA_1',\ldots,\vertA_{\ell+1}'),(\vertB_1',\ldots,\vertB_{\ell+1}')$ is an $(r-s-1)$\nobreakdash-round witness for the new position
		$(\vertA_1/_{\compEquiv},\ldots,\vertA_\ell/_{\compEquiv},\vertA_{\ell+1}'/_{\compEquiv})$, $(\vertB_1/_{\compEquiv},\ldots,\vertB_\ell/_{\compEquiv},\vertB_{\ell+1}'/_{\compEquiv})$.

		\item Assume that $\compCFI{G,f}{\compEquiv} \not\kequivr{k}{r} \compCFI{G,g}{\compEquiv}$.
		We turn a winning strategy of Spoiler in the $r$\nobreakdash-round bijective $k$\nobreakdash-pebble game played on $\compCFI{G,f}{\compEquiv}$ and $\compCFI{G,g}{\compEquiv}$
		into a winning strategy of Spoiler in the $(r+2)$\nobreakdash-round bijective
		$k$\nobreakdash-pebble game played on $\precompCFI{G,f}{\compEquiv}$ and $\precompCFI{G,g}{\compEquiv}$.

		We call a position $\tup{\vertA},\tup{\vertB}$
		of the game played on $\precompCFI{G,f}{\compEquiv}$ and $\precompCFI{G,g}{\compEquiv}$,
		that is, ${\tup{\vertA}=(\vertA_1, \dots, \vertA_\ell)}$
		and ${\tup{\vertB} = (\vertB_1, \dots, \vertB_\ell)}$ for some $\ell \leq k$,
		\defining{$s$\nobreakdash-round witnessed}
		if the position \[\tup{\vertA}/_\compEquiv=(\vertA_1/_\compEquiv, \dots,\vertA_\ell/_\compEquiv) ,\tup{\vertB}/_\compEquiv=(\vertB_1/_\compEquiv, \dots,\vertB_\ell/_\compEquiv)\]
		of the game on $\compCFI{G,f}{\compEquiv}$ and $\compCFI{G,g}{\compEquiv}$ is a winning position for Spoiler in the $s$\nobreakdash-round game on $\compCFI{G,f}{\compEquiv}$ and $\compCFI{G,g}{\compEquiv}$.

		We show that if $\tup{\vertA},\tup{\vertB}$ is $0$-round witnessed,
		then Spoiler wins the game on $\precompCFI{G,f}{\compEquiv}$ and $\precompCFI{G,g}{\compEquiv}$ in at most $2$ additional rounds.
		The position $\tup{\vertA}/_\compEquiv, \tup{\vertB}/_\compEquiv$ in the game on $\compCFI{G,f}{\compEquiv}$ and $\compCFI{G,g}{\compEquiv}$ does not induce a partial isomorphism.
		Because~$\vertA_i$ and~$\vertB_i$ have the same color (Duplicator plays color-preserving),~$\vertA_i/_\compEquiv$ and~$\vertB_i/_\compEquiv$ have the same color for all $i \in[\ell]$.
		If $\vertA_i/_\compEquiv=\vertA_j/_\compEquiv$ but $\vertB_i/_\compEquiv\neq \vertB_j/_\compEquiv$,
		then  $\vertA_i\compEquiv \vertA_j$ but
		$\vertB_i \not\compEquiv \vertB_j$ and Spoiler wins immediately.
		Suppose $\set{\vertA_i/_\compEquiv,\vertA_j/_\compEquiv}$ is an edge in $\compCFI{G,f}{\compEquiv}$
		but $\set{\vertB_i/_\compEquiv,\vertB_j/_\compEquiv}$ is not an edge in $\compCFI{G,g}{\compEquiv}$.
		On the one hand, there are vertices $\vertA'_i\compEquiv\vertA_i$ and $\vertA'_j\compEquiv\vertA_j$
		such that $\set{\vertA'_i,\vertA'_j}$ is an edge in $\precompCFI{G,f}{\compEquiv}$.
		On the other hand, for every $\vertB'_i\compEquiv\vertB_i$ and $\vertB'_j\compEquiv\vertB_j$,
		the set $\set{\vertB'_i,\vertB'_j}$ is not an edge in $\precompCFI{G,f}{\compEquiv}$.
		Spoiler picks up a pebble pair different from the two placed on $\vertA_i,\vertB_i$  and $\vertA_j,\vertB_j$
		(such a pair exists because $k \geq 3$).
		Spoiler places one pebble on~$\vertA_i'$ and the other one on some vertex~$\vertB_i'$ according to Duplicator's bijection.
		If $\vertB_i' \not \compEquiv \vertB_i$, then Spoiler wins.
		Otherwise, Spoiler picks up the pebble pair placed on~$\vertA_j$ and~$\vertB_j$.
		Spoiler places one pebble on~$\vertA_j'$ and the other one on some vertex~$\vertB_j'$ according to Duplicator's bijection.
		If $\vertB_j' \not \compEquiv \vertB_j$,
		then Spoiler wins again.
		Otherwise, as already argued above, $\set{\vertB_i',\vertB_j'}$ is not an edge, but $\set{\vertA_i',\vertA_j'}$ is.
		Thus, Spoiler wins after $2$ additional rounds.

		The same arguments apply in the converse direction if the relations hold between the $\vertB_i$, but not the $\vertA_i$.

		We prove by induction on~$s\leq r$ that Spoiler has a winning strategy
		in the ${(r+2)}$\nobreakdash-round game on $\precompCFI{G,f}{\compEquiv}$ and $\precompCFI{G,g}{\compEquiv}$ such that the position reached in round~$s$ is ${(r-s)}$\nobreakdash-round witnessed.
		Clearly, the initial position is $r$\nobreakdash-round witnessed because
		Spoiler wins the $r$\nobreakdash-round game $\compCFI{G,f}{\compEquiv}$ and $\compCFI{G,g}{\compEquiv}$.
		Assume $s < r$ and assume, by the induction hypothesis,
		that the position $\tup{\vertA},\tup{\vertB}$
		of the game on $\precompCFI{G,f}{\compEquiv}$ and $\precompCFI{G,g}{\compEquiv}$
		is $(r-s)$\nobreakdash-round witnessed.
		If Spoiler removes a pair of pebbles, then we can remove the corresponding elements from the tuples, and the position is still witnessed.
		Now suppose Spoiler wants to place a pair of pebbles $p_{i_{\ell+1}},q_{i_{\ell+1}}$.
		Duplicator picks a color-preserving bijection~$h$ from $V(\precompCFI{G,f}{\compEquiv})$ to $V(\precompCFI{G,g}{\compEquiv})$.
		We construct a bijection~$h'$ from $V(\compCFI{G,f}{\compEquiv})$ to $V(\compCFI{G,g}{\compEquiv})$.
		Assume~$\vertC/_\compEquiv$ is an equivalence class
		such that~$\vertC$ is the unique minimal vertex of that class (with respect to the colors).
		We define $h'(\vertC/_\compEquiv) \coloneqq h(\vertC)/_\compEquiv$.
		Because minimal vertices in equivalence classes are unique,
		because vertices of the same gadget are never in the same equivalence class,
		and because~$h$ is color preserving,
		the map~$h'$ is indeed a bijection.
		Now suppose that in the $(r-s)$\nobreakdash-round game on $\compCFI{G,f}{\compEquiv}$ and $\compCFI{G,g}{\compEquiv}$,
		Duplicator plays~$h'$ as bijection.
		According to the winning strategy, Spoiler places the pebble~$p_{i_{\ell+1}}$ on~$\vertA_{\ell+1}/_\compEquiv$ and~$q_{i_{\ell+1}}$ on $\vertB_{\ell+1}/_\compEquiv = h'(\vertA_{\ell+1}/_\compEquiv)$,
		where we assume without loss of generality that~$\vertA_{\ell+1}$ and $\vertB_{\ell+1}$ are the minimal vertices of their classes.
		In the game on $\precompCFI{G,f}{\compEquiv}$ and $\precompCFI{G,g}{\compEquiv}$,
		where Duplicator chooses the bijection $h$,
		Spoiler places~$p_{i_{\ell+1}}$ on~$\vertA_{\ell+1}$
		and~$q_{i_{\ell+1}}$ on $\vertB_{\ell+1} = h(\vertA_{\ell+1})$.
		Because Spoiler has a winning strategy on $\compCFI{G,f}{\compEquiv}$ and $\compCFI{G,g}{\compEquiv}$,
		the position
		$(\vertA_1,\ldots,\vertA_{\ell+1}),(\vertB_1,\ldots,\vertB_{\ell+1})$
		is $(r-s-1)$\nobreakdash-round witnessed.\qedhere
	\end{enumerate}
\end{proof}

\begin{corollary}
	\label{cor:bounds-precompressed-compressed}
	For every $k \geq 3$, every $r \in \nat$, every $G$\nobreakdash-compression $\compEquiv$,
	and all $\compEquiv$\nobreakdash-compressible $f,g\colon E(G)  \to \FF_2$, we have
	\begin{align*}
		&\text{if } &\precompCFI{G,f}{\compEquiv} &\kequivr{k}{r} \precompCFI{G,g}{\compEquiv} &\text{ and }& & \precompCFI{G,f}{\compEquiv} &\not\kequivr{k}{r+1} \precompCFI{G,g}{\compEquiv},\\
		&\text{then } & \compCFI{G,f}{\compEquiv} &\kequivr{k}{r-2} \compCFI{G,g}{\compEquiv} &\text{ and }& & \compCFI{G,f}{\compEquiv} &\not\kequivr{k}{r+1} \compCFI{G,g}{\compEquiv}.
	\end{align*}
\end{corollary}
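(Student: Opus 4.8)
The plan is to obtain the corollary as a direct bookkeeping consequence of Lemma~\ref{lem:distinguish-CFI-graphs}; only parts~(2) and~(3) of that lemma are needed, and no fresh game-theoretic argument is required. The two conclusions are handled independently, one from each of the two relevant implications.

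First I would establish the non-equivalence claim $\compCFI{G,f}{\compEquiv} \not\kequivr{k}{r+1} \compCFI{G,g}{\compEquiv}$. This is immediate from part~(2) of Lemma~\ref{lem:distinguish-CFI-graphs} instantiated with round parameter $r+1$: the hypothesis $\precompCFI{G,f}{\compEquiv} \not\kequivr{k}{r+1} \precompCFI{G,g}{\compEquiv}$ is exactly the premise of that implication, so the conclusion follows at once.

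Next I would establish the equivalence claim $\compCFI{G,f}{\compEquiv} \kequivr{k}{r-2} \compCFI{G,g}{\compEquiv}$ by contraposition, using part~(3). If instead $\compCFI{G,f}{\compEquiv} \not\kequivr{k}{r-2} \compCFI{G,g}{\compEquiv}$, then part~(3) with round parameter $r-2$ yields $\precompCFI{G,f}{\compEquiv} \not\kequivr{k}{r} \precompCFI{G,g}{\compEquiv}$, contradicting the hypothesis that Duplicator wins the $r$-round bijective $k$-pebble game on the precompressed graphs. Hence $\compCFI{G,f}{\compEquiv} \kequivr{k}{r-2} \compCFI{G,g}{\compEquiv}$, as claimed.

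I do not expect a genuine obstacle here: all the content sits in Lemma~\ref{lem:distinguish-CFI-graphs}, and the corollary is just a rearrangement of its implications, together with the trivial monotonicity that if Spoiler wins an $s$-round game then Spoiler also wins every $s'$-round game with $s' \ge s$. The only point deserving a remark is the degenerate case $r < 2$, where the round count $r-2$ is negative; there one reads an $s$-round game with $s \le 0$ as won by Duplicator precisely when the starting (here empty) position is a partial isomorphism, so the argument still goes through, and in any case the corollary is only applied for sufficiently large $r$.
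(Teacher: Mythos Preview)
Your proposal is correct and matches the paper's intended approach: the corollary is stated without proof precisely because it is an immediate consequence of parts~(2) and~(3) of Lemma~\ref{lem:distinguish-CFI-graphs}, exactly as you derive it. Your handling of the two conclusions---part~(2) at round $r+1$ for the non-equivalence, and the contrapositive of part~(3) at round $r-2$ for the equivalence---is precisely the intended bookkeeping.
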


By the former corollary, we can study precompressed CFI graphs to obtain lower bounds on the iteration number of $k$\nobreakdash-WL on compressed CFI graphs.
To do so, we introduce a variant of the Cops and Robber game suitable for compressions.

\subsection{Cops and Robbers for Precompressed CFI Graphs}

We describe a variant of the $k$-Cops and Robber game suitable for compressed CFI graphs.
The \defining{compressed $k$-Cops and Robber game} is played on an ordered base graph~$G$ and a $G$\nobreakdash-compression~$\compEquiv$.
The cops player places cops on up to~$k$ many $\compEquiv$\nobreakdash-equivalence classes and the robber is placed on one edge of~$G$.
Initially, only the robber is placed.
The game proceeds as follows:
\begin{enumerate}
	\item \label{itm:cop-picked-up-compressed} One cop is picked up, and a destination $\compEquiv$-equivalence class~$c$ for this cop is selected.
	\item The robber moves. To move from the current edge~$e_1$ to another edge~$e_2$, the robber has to provide a $\compEquiv$\nobreakdash-compressible $G$\nobreakdash-twisting
	that only twists the edges~$e_1$ and~$e_2$ and that fixes every vertex contained in a $\compEquiv$\nobreakdash-equivalence class occupied by a cop.
	\item  The cop picked up in Step~\ref{itm:cop-picked-up-compressed} is placed on $c$.
\end{enumerate}
The robber is caught if the two endpoints of the robber-occupied edge are contained in cop-occupied $\compEquiv$\nobreakdash-equivalence classes.
If the robber is caught after~$r$~rounds,
then the cops win in~$r$~rounds.
Otherwise, the robber wins in~$r$~rounds.
Note that the initial position of the robber may matter to decide who wins in the compressed game.

\begin{lemma}
	\label{lem:compressed-cops-and-robber}
	Let $k \geq 3$,
	$\compEquiv$ be a $G$\nobreakdash-compression,
	and $f, g \colon E(G) \to \FF_2$ such that there is exactly one twisted edge~$e$ with respect to $f$ and $g$.
	If the robber has a winning strategy in the $r$\nobreakdash-round
	compressed ${k}$\nobreakdash-Cops and Robber game,
	where the robber is initially placed on~$e$,
	then $\precompCFI{G,f}{\compEquiv} \kequivr{k}{r} \precompCFI{G,g}{\compEquiv}$.
\end{lemma}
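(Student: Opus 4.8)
The plan is to mimic the proof of Lemma~\ref{lem:cops-and-robbers-if-bijective} on the compressed side: Duplicator plays the bijective $k$\nobreakdash-pebble game on $\precompCFI{G,f}{\compEquiv}$ and $\precompCFI{G,g}{\compEquiv}$ while running the robber's winning strategy on the side, keeping at all times an isomorphism between the pebbled induced substructures that is realized by a $\compEquiv$\nobreakdash-compressible $G$\nobreakdash-twisting whose unavoidable ``defect'' --- on a connected graph a single twisted edge cannot be removed by a twisting --- sits on the edge currently occupied by the robber. First I would set up the bookkeeping: for a $G$\nobreakdash-twisting $T$ let $\auto_T\colon\CFI{G,f}\to\CFI{G,f+g_T}$ be the isomorphism from Lemma~\ref{lem:cfi-iso}, which explicitly flips, in the gadget of each base vertex $\vertA$, exactly the coordinates indexed by the out-edges of $\vertA$ in $T$. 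Viewed as permutations of the common vertex set of all CFI graphs over $G$, these satisfy $\auto_{T\mathbin{\triangle}T'}=\auto_{T'}\circ\auto_T$ (symmetric difference); moreover $\auto_T$ is the identity on the gadget of every base vertex fixed by $T$, and if $T$ is $\compEquiv$\nobreakdash-compressible then $\auto_T$ respects $\compEquiv$ (as in the proof of Lemma~\ref{lem:twistsings-equiv-parameters}). Since $f$ and $g$ differ only on $e$, for every edge $e'$ there is a $G$\nobreakdash-twisting $T$ with $g_T=\mathbbm{1}_{\set{e}}+\mathbbm{1}_{\set{e'}}$; then $f+g_T$ differs from $g$ only on $e'$, so $\auto_T$ is an isomorphism from $\CFI{G,f}$ to the graph obtained from $\CFI{G,g}$ by flipping only the edges between the two gadgets at the endpoints of $e'$.

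Next I would set up the induction. Consider the parallel play of the compressed $k$\nobreakdash-Cops and Robber game in which each round's cop move mirrors Spoiler: placing or moving pebble $i$ becomes ``cop $i$ to the class $\vertA_i/_{\compEquiv}$''. The invariant to maintain after round $s$ --- with pebbles on $\tup{\vertA}$ in $\precompCFI{G,f}{\compEquiv}$ and $\tup{\vertB}$ in $\precompCFI{G,g}{\compEquiv}$ sharing origin set $W$, and the robber on edge $e_r$ --- is that there is a $\compEquiv$\nobreakdash-compressible $G$\nobreakdash-twisting $T$ with $g_T=\mathbbm{1}_{\set{e}}+\mathbbm{1}_{\set{e_r}}$ and $\auto_T(\tup{\vertA})=\tup{\vertB}$. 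The empty position satisfies it with $T=\emptyset$ and the robber on $e$. To see the invariant implies Spoiler has not won after round $s$: a robber following a winning strategy is never caught, so at most one endpoint of $e_r$ lies in a cop class, hence $e_r$ is not an edge inside $W$; therefore $\CFI{G,f+g_T}$ (which differs from $\CFI{G,g}$ only on edges crossing $e_r$) and $\CFI{G,g}$ have the same induced substructure together with $\compEquiv$ on the gadgets of $W$, and $\auto_T$ restricts to an isomorphism of these substructures sending $\tup{\vertA}$ to $\tup{\vertB}$; thus $\tup{\vertA}\mapsto\tup{\vertB}$ is a partial isomorphism preserving $\compEquiv$.

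For the inductive step, assume the invariant after round $s<r$; Spoiler picks up $p_i,q_i$, so cop $i$ is removed and the remaining origins form a set $W^-$. For each base vertex $v$, feeding the robber's strategy the move ``cop $i$ to $v/_{\compEquiv}$'' supplies, by the rules of the compressed game, a $\compEquiv$\nobreakdash-compressible $G$\nobreakdash-twisting $S_v$ with $g_{S_v}=\mathbbm{1}_{\set{e_r}}+\mathbbm{1}_{\set{e_v}}$ (where $e_v$ is the robber's resulting edge) that fixes every vertex of every class in $W^-/_{\compEquiv}$, after which the robber is not caught. Put $T_v\coloneqq T\mathbin{\triangle}S_v$; it is $\compEquiv$\nobreakdash-compressible with $g_{T_v}=\mathbbm{1}_{\set{e}}+\mathbbm{1}_{\set{e_v}}$, and because $S_v$ fixes every origin in $W^-$, $\auto_{T_v}$ agrees with $\auto_T$ on the gadgets of $W^-$. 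Duplicator plays the bijection that on the gadget of each $v$ acts as $\auto_{T_v}$ (for $v\in W^-$ this coincides with $\auto_T$, so the pieces assemble into one bijection). If Spoiler pebbles a vertex $\vertA'$ in the gadget of some $v$, Duplicator has answered with $\auto_{T_v}(\vertA')$; we commit the parallel play to destination $v/_{\compEquiv}$, and $T_v$ witnesses the invariant for the new position. After $r$ rounds the robber is still uncaught, so Spoiler never wins, which gives $\precompCFI{G,f}{\compEquiv}\kequivr{k}{r}\precompCFI{G,g}{\compEquiv}$.

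The main obstacle I expect is the familiar one in strategy simulations of this kind: Duplicator must commit to a single bijection before learning on which gadget Spoiler will pebble, whereas the robber's reply depends on the cop's destination. The resolution, as above, rests on the observation that $S_v$ influences $\auto_{T_v}=\auto_{S_v}\circ\auto_T$ only on the gadget of $v$ (elsewhere $\auto_{S_v}$ is the identity because $S_v$ fixes all other occupied classes), so the destination-indexed choices are automatically consistent and glue together. The remaining delicate point --- that every twisting showing up is $\compEquiv$\nobreakdash-compressible and fixes entire cop classes, which is precisely what makes all the induced bijections respect $\compEquiv$ and keeps $\auto_{S_v}$ trivial on the gadgets of $W^-$ --- is exactly what the definition of the compressed Cops and Robber game was tailored to deliver, so it should go through without extra work.
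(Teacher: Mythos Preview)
Your proof is correct and follows essentially the same approach as the paper: simulate the robber's strategy to build Duplicator's bijections, maintaining a $\compEquiv$\nobreakdash-compressible twisting that moves the single defect to the robber's current edge and maps the pebbled tuple correctly. One caveat: the sentence in your final paragraph claiming that ``$S_v$ influences $\auto_{T_v}$ only on the gadget of $v$ (elsewhere $\auto_{S_v}$ is the identity because $S_v$ fixes all other occupied classes)'' is false as stated---$S_v$ is only guaranteed to fix the cop-occupied classes $W^-/_{\compEquiv}$, not every base vertex other than $v$, so $\auto_{S_v}$ can act nontrivially on many gadgets along the robber's route---but this overreach is harmless, since the gadgets partition the vertex set (so the piecewise definition of $h$ needs no cross-gadget consistency) and the only agreement you actually use, namely that $\auto_{T_v}$ coincides with $\auto_T$ on the gadgets of $W^-$, you already established correctly a paragraph earlier.
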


\begin{proof}
	We show that
	Duplicator has a winning strategy in the $r$\nobreakdash-round bijective $k$\nobreakdash-pebble game.
	Duplicator maintains a function $g' \colon E(G) \to \FF_2$
	and an edge $e' \in E(G)$
	such that
	after $s\leq r$ rounds in position $\tup{\vertA},\tup{\vertB}$,
	\begin{enumerate}[label=(\alph*), ref=(\alph*)]
		\item \label{cond:some-pebbles}
		there is an isomorphism $\autoA \colon \precompCFI{G,g}{\compEquiv} \to \precompCFI{G,g'}{\compEquiv}$
		such that $\autoA(\tup{\vertB}) =\tup{\vertA}$ (recall that CFI graphs over the same base graph have the same vertex set),
		\item \label{cond:edge} only the base edge~$e'$ is twisted with respect to~$f$ and~$g'$,
		\item \label{cond:twist-not-caught}  at most one endpoint of~$e'$ is the origin of a vertex in $\tup{\vertA}$, and
		\item \label{cond:robber-wins} the robber has a winning strategy in the $(r - s)$\nobreakdash-round compressed $k$\nobreakdash-Cop and Robber game, where the robber is placed on~$e'$ and the cops on the equivalence classes of the origins of the vertices in~$\tup{\vertA}$.
	\end{enumerate}
	Because initially no pebbles are placed, the invariant clearly holds
	for $g' = g$ and $e' = e$.
	Assume, by the induction hypothesis, that after $s < r$ many rounds the invariant holds.
	Spoiler picks up a pair of pebbles.
	By Condition~\ref{cond:some-pebbles}, these pebbles were placed on vertices with the same origin.
	In the compressed Cop and Robber game, a cop is picked up from the equivalence class of this origin.
	Now assume that Spoiler wants to place a pair of pebbles $p_{i_\ell+1},q_{i_\ell+1}$.
	Duplicator defines a bijection~$h$ from $V(\precompCFI{G,f}{\compEquiv})$ to $V(\precompCFI{G,g}{\compEquiv})$ as follows:
	Let~$\vertC$ be a vertex of $\precompCFI{G,f}{\compEquiv}$ and we want to define the image~$h(\vertC)$.
	Consider the compressed Cop and Robber game where a cop is placed on the equivalence class of the origin of~$\vertC$.
	Let $T_\vertC$ be the $\compEquiv$\nobreakdash-compressible $G$\nobreakdash-twisting by which the robber moves from~$e'$ to~$e'_\vertC$ following the robber's winning strategy.
	Furthermore, let $\autoB_\vertC\colon\precompCFI{G,g'}{\compEquiv} \to \precompCFI{G,g'_\vertC}{\compEquiv}$ be the isomorphism corresponding to $T_\vertC$
	by Lemma~\ref{lem:twistsings-equiv-parameters}.
	The isomorphism~$\autoB_\vertC$ is the identity on all vertices whose origin is fixed by~$T_\vertC$.
	In particular,~$\autoB_\vertC$ is the identity on the pebbled vertices
	because their origins are occupied by the cops.
	Duplicator defines $h(\vertC) \coloneqq \inv{\autoA}(\inv{\autoB_\vertC}(\vertC))$.
	The map~$h$ is clearly color-preserving.
	To prove that~$h$ is a bijection,
	it suffices to show that~$h$ permutes every color class.
	Because the move of the robber only depends on the origin of~$\vertC$,
	we actually have $T_\vertC = T_{\vertC'}$ for all $\vertC,\vertC'$
	with the same origin or, equivalently, of the same color.
	Hence, $\autoB_\vertC = \autoB_\vertC'$ for all $\vertC,\vertC'$ of the same color and $h$ is a bijection.

	Now Spoiler picks a vertex~$\vertC$ of $\precompCFI{G,f}{\compEquiv}$
	and places the pebble $p_{i_\ell+1}$ on~$\vertC$ and the pebble $q_{i_\ell+1}$ on $h(\vertC)=\inv{\autoA}(\inv{\autoB_\vertC}(\vertC))$.
	Set $g'' \coloneqq g'_\vertC$ and $e'' \coloneqq  e'_\vertC$.
	Because~$e'$ is the only edge twisted with respect to $f$ and~$g'$ and~$T_\vertC$ twists~$e'$ and~$e_\vertC$,
	the edge~$e'' = e_\vertC'$ is the only edge twisted with respect to~$f$ and~$g''$.
	So Condition~\ref{cond:edge} holds.
	Because~$\autoA$ is an isomorphism from
	$\precompCFI{G,g}{\compEquiv}$ to $\precompCFI{G,g'}{\compEquiv}$
	and $\autoB_\vertC$ is an isomorphism from
	$\precompCFI{G,g'}{\compEquiv}$ to $\precompCFI{G,g''}{\compEquiv}$,
	the map $\autoA' \coloneqq \autoB_\vertC \circ \autoA$
	is an isomorphism from $\precompCFI{G,g}{\compEquiv}$ to $\precompCFI{G,g'}{\compEquiv}$.
	Hence, $\autoA'(h(\vertC)) = \autoB_\vertC(\autoA(\inv{\autoA}(\inv{\autoB_\vertC}(\vertC)))) = \vertC$.
	Because~$T_\vertC$ fixes the origins of all vertices in~$\tup{\vertA}$,
	the isomorphism~$\autoB_\vertC$ is the identity on~$\tup{\vertA}$.
	Thus, Condition~\ref{cond:some-pebbles} holds.
	Condition~\ref{cond:twist-not-caught} holds because the twisting~$T_\vertC$ was given by a strategy of the robber.
	By placing the cop on the equivalence class of~$\vertC$ and moving the robber accordingly, the robber has a winning strategy in $r-s-1$ rounds
	by Condition~\ref{cond:robber-wins}.
	So the invariant is satisfied after round $s+1$.
	To show that Duplicator has not lost in this round,
	we need to argue that the new position defines a partial isomorphism.
	Corresponding pebbles are placed on vertices of the same color,
	and if $p_i$ and $p_j$ are placed on the same vertex,
	then so are $q_i$ and $q_j$ by Condition~\ref{cond:some-pebbles}.
	By Condition~\ref{cond:twist-not-caught},
	edges between two adjacent origins of pebbled vertices
	are never twisted with respect to $f$ and $g''$.
	So between $\precompCFI{G,f}{\compEquiv}$ and $\precompCFI{G,g''}{\compEquiv}$,
	the identity map on the pebbled vertices in $\precompCFI{G,f}{\compEquiv}$
	is a partial isomorphism between $\precompCFI{G,f}{\compEquiv}$ and $\precompCFI{G,g''}{\compEquiv}$.
	By Condition~\ref{cond:some-pebbles},
	the pebbled vertices also define a partial isomorphism between $\precompCFI{G,f}{\compEquiv}$ and $\precompCFI{G,g}{\compEquiv}$.
	Duplicator updates $g' \leftarrow g''$ and $e' \leftarrow e''$.
\end{proof}

From now on, we focus on compressible twistings and the compressed Cops and Robber game and do not require the details of the CFI construction anymore.

\section{A Lower Bound for the Iteration Number}
\label{sec:lower-bound}

In this section, we give the proof of Theorem \ref{thm:main}.
We start by describing the base graphs and the equivalence relations that are used to construct the compressed CFI graphs.
We also describe several auxiliary objects that are relevant for the analysis.

For the remainder of this section, let us fix an arbitrary integer~$k \geq 3$.
We define
\begin{equation}
	f(k) \coloneqq 2k+2.
\end{equation}
Also let $w$ be an integer such that
\begin{equation}
	\label{eq:w-bound}
	w \geq \max\big(2 \cdot f(k),4k \cdot\ln(4k)\big),
\end{equation}
and let~$p_0,\ldots,p_{k-1}$ be pairwise coprime integers such that
\begin{equation}
	\frac{w}{2} < p_i \leq w
\end{equation}
for all~$i \in [0,k-1]$.
Since $w$ is sufficiently large, such a collection of coprime integers exists.
In fact, we can even choose them to be distinct prime numbers~\cite{Ramanujan19}.
More precisely, we can rely on the following result.

\begin{lemma}[\cite{Sondow09}]
	\label{lem:primes}
	For all integers $k' \geq 1$ and all integers $w \geq 4k \ln(4k)$, there are distinct primes $p_0,\dots,p_{k'-1}$ such that $\frac{w}{2} < p_i \leq w$ for all $i \in [0,k'-1]$.
\end{lemma}

\begin{remark}
	At this point, one may wonder whether a better dependency between $w$ and $k$ can be achieved if, instead of prime numbers, we choose pairwise coprime numbers $p_0,\dots,p_{k-1}$.
	Suppose that $k \geq 2$ and $w \geq k$ such that there are pairwise coprime integers $p_0,\ldots,p_{k-1}$ with $\frac{w}{2} < p_i \leq w$ for all~$i \in [0,k-1]$.
	Since $p_i \geq 2$ for all~$i \in [0,k-1]$, we can pick an arbitrary prime factor $q_i$ of $p_i$ for every~$i \in [0,k-1]$.
	Hence, we obtain distinct prime numbers $q_0,\ldots,q_{k-1}$ with $q_i \leq w$ for all~$i \in [0,k-1]$.
	So $\pi(w) \geq k$, where $\pi$ is the prime-counting function (i.e., $\pi(w)$ denotes the number of primes $q \leq w$).
	By the Prime Number Theorem, $\pi(w) \sim \frac{w}{\ln w}$.
	While this leaves room for some small improvements (over the bound from Lemma \ref{lem:primes}), it is not possible to obtain a linear dependence between $k$ and $w$.
\end{remark}

We use cylindrical grids as base graphs.
We fix \[I \coloneqq [0,k-1]\] as the set of rows.
Let~$\ell$ be a positive integer and suppose $J = [0,\ell-1]$.
We use~$I$ as set of rows and~$J$ as set of columns to construct the
\defining{cylindrical grid}~$C_{I,J}$ with vertex set $V(C_{I,J}) \coloneqq I \times J$.
Starting from the $I \times J$ grid, we connect the top and bottom most vertex in every column, i.e.,
we add the edge $\set{(0,j),(k-1,j)}$ for all~$j\in J$.
We also consider the \defining{toroidal grid}~$C_{I,J}^*$ with the same vertex set $V(C_{I,J}^*) \coloneqq I \times J$.
Starting from the cylindrical grid~$C_{I,J}$, we also connect the first and last vertex in every row, i.e.,
we add the edge $\set{(i,0), (i,\ell-1)}$ for all $i \in I$.
We turn~$C_{I,J}$ and~$C_{I,J}^*$ into ordered base graphs using the
lexicographical order on $I \times J$.

We fix
\begin{align*}
	J &\coloneqq [0,\frac{1}{2} \cdot f(k) \cdot p_1 \cdot p_2 \cdots p_k-1] \text{ and}\\
	J^* &\coloneqq [0,f(k) \cdot p_1 \cdot p_2 \cdots p_k-1].
\end{align*}
We consider the cylindrical grid $C \coloneqq C_{I,J}$ and the toroidal grid $C^* \coloneqq C_{I,J^*}^*$.
Note that~$C$ is a subgraph of~$C^*$ (see also~Figure~\ref{fig:grids}).
We use the cylindrical grid~$C$ as the actual base graph whereas~$C^*$ serves as an auxiliary graph in the analysis.

\newcommand{\drawGridConn}[3]{
	\pgfmathsetmacro{\w}{#1-1}
	\pgfmathsetmacro{\h}{#2-1}
	\foreach \n in {0,...,\h} {
		\draw[#3] (0,\n) to (\w,\n);
	}
}
\newcommand{\drawCycGrid}[3]{
	\pgfmathsetmacro{\w}{#1-1}
	\pgfmathsetmacro{\h}{#2-1}
	\foreach \n in {0,...,\w} {
		\draw[#3] (\n, 0) to (\n, \h);
		\draw[#3] (\n, \h) .. controls (\n+0.5,\h+2) and (\n+0.5,-2) .. (\n,0);
		\foreach \m in {0,...,\h} {
			\draw[#3, fill] (\n,\m) circle (0.15cm);
		}
	}
	\drawGridConn{#1}{#2}{#3}
}

\begin{figure}
	\centering
	\begin{tikzpicture}[scale=0.36]
		\def\height{3}
		\def\width{7}
		\def\sep{2}

		\pgfmathsetmacro{\th}{\height-1}
		\pgfmathsetmacro{\tm}{2*\width+1*\sep-1}
		\pgfmathsetmacro{\tw}{4*\width+2*\sep-1}
		\draw[use as bounding box, draw=none] (-4, -2.2) rectangle (\tw + 4, \th +4);

		\draw[blue!30!white, fill, rounded corners]
		(-0.3,-0.5) rectangle (\height+1+0.5,\height-1+0.5);
		\draw[blue!30!white, fill, rounded corners]
		(2*\width+\sep-\height-2-0.3,-0.5) rectangle (2*\width+\sep-1+0.5,\height-1+0.5);

		\draw [decorate, decoration = {brace}](2*\width+\sep-\height-2-0.3,\height-1+0.8) -- (2*\width+\sep-1+0.3,\height-1+0.8) node[pos=0.5,black, above=0] {\strut\footnotesize $k+2$};

		\begin{scope}[shift={(2*\width+\sep-1,0)}]
			\drawCycGrid{\width+1}{\height}{red}
		\end{scope}
		\begin{scope}[shift={(3*\width+\sep-1,0)}]
			\drawGridConn{\sep+2}{\height}{red, dashed}
		\end{scope}
		\begin{scope}[shift={(3*\width+2*\sep,0)}]
			\drawCycGrid{\width}{\height}{red}
		\end{scope}

		\foreach \n in {0,...,\th} {
			\draw[red] (0,\n) .. controls (-18,\n+\height+1.5) and (\tw+18,\n+\height+1.5) .. (\tw,\n);
		}

		\drawCycGrid{\width}{\height}{black}
		\begin{scope}[shift={(\width-1,0)}]
			\drawGridConn{\sep+2}{\height}{black, dashed}
		\end{scope}
		\begin{scope}[shift={(\width+\sep,0)}]
			\drawCycGrid{\width}{\height}{black}
		\end{scope}

		\draw [decorate, decoration = {brace}] (\tm+0.3,-0.7) -- (-0.3,-0.7) node[pos=0.5,black, below=0] {\strut\footnotesize $|J|$};
		\draw [decorate, red, decoration = {brace}] (\tw+0.3,-0.7) -- (\tm+1-0.3,-0.7) node[pos=0.5,black, below=0, red] {\strut\footnotesize $|J|$};
	\end{tikzpicture}
	\caption{The cylindrical grid $C=C_{I,J}$ is drawn in black.
		It is a subgraph of the toroidal grid $C^*=C^*_{I,J^*}$ for which the additional vertices and edges are drawn in red.
		The blue regions mark the areas of $C$ in which the robber can be located in the proof of Lemma~\ref{ref:lem-robber-lower-bound:new:construction}.}
	\label{fig:grids}
\end{figure}
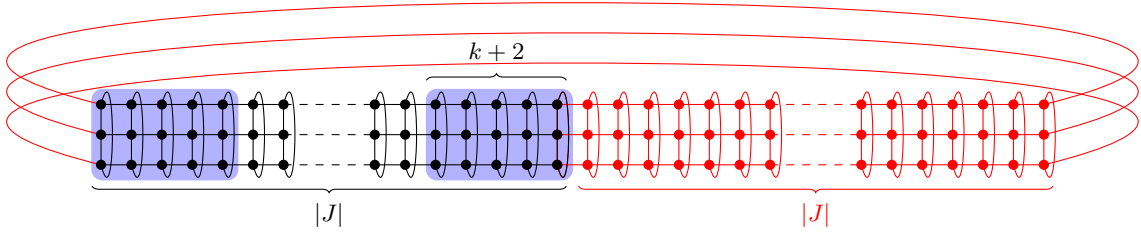

We define multiple equivalence relations on $V(C)$ and $V(C^*)$.
For a graph~$G$, an equivalence relation~$R$ on $V(G)$, and a vertex $\vertA \in V(G)$,
we denote by $\eqclass{R}{\vertA}$ the $R$-equivalence class containing~$\vertA$.
For a set $W \subseteq V(G)$, we define the $R$-class of $W$ via $\eqclass{R}{W} \coloneqq \bigcup_{\vertA \in W} \eqclass{R}{\vertA}$.

\colorlet{equivclassB}{colD}
\colorlet{equivclassA}{colA}
\colorlet{equivclassE}{colC}
\colorlet{equivclassD}{colF}
\colorlet{equivclassC}{colE}
\colorlet{equivclassF}{colB}
\colorlet{equivclassG}{colH}

\newcommand{\ponesize}{15}
\newcommand{\ptwosize}{35}
\newcommand{\pthreesize}{21}
\def\blocklength{{0,15,35,21}}
\def\offsetsmiddle{{0,8,5,12}}
\newcommand{\horizontalscale}{0.2}
\pgfmathparse{\blocklength[0]}\def\length{\pgfmathresult}
\newcommand{\theendoffirsthalf}{42*\horizontalscale}
\newcommand{\thebeginingofsecondhalf}{48*\horizontalscale}
\newcommand{\theendofthisgrid}{70*\horizontalscale}

\newcommand{\arcadedotsnew}[2]{
	\foreach \i in {-0.15,0,0.15} {
		\draw [fill=black] ($(#1,0)!0.5!(#2,0)+(\i,-0.75)$) circle (0.15mm);
	}

}

\tikzstyle{gridnode} = [circle, fill, inner sep=0, minimum size=1.5mm, minimum width = 1.5mm, minimum height =1.5mm]

\renewcommand{\ponesize}{11}
\renewcommand{\ptwosize}{18}
\renewcommand{\pthreesize}{16}
\renewcommand{\blocklength}{{0,11,18,16}}
\def\offsetsmiddle{{0,8,6,12}}
\renewcommand{\theendoffirsthalf}{26*\horizontalscale}
\renewcommand{\thebeginingofsecondhalf}{30.5*\horizontalscale}
\renewcommand{\theendofthisgrid}{51*\horizontalscale}
\newcommand{\theendofthirdhalf}{60*\horizontalscale}

\newcommand{\thebeginoffourthhalf}{64.5*\horizontalscale}
\newcommand{\theendoffourthhalf}{75.5*\horizontalscale}

\begin{figure}
	\centering
	\begin{tikzpicture}[font=\small, scale=0.85]

		\draw[use as bounding box, draw=none] (-0.6,0.8) rectangle (\theendoffourthhalf+0.1, -2.3);

		\foreach \i in {1,...,3} {
			\pgfmathsetmacro\thelengthhere{\blocklength[\i]}
			\pgfmathsetmacro\offsethere{\offsetsmiddle[\i]}
			\fill[black!10!white] (0,-\i*0.5+0.5) rectangle (\thelengthhere*0.5*\horizontalscale,-\i*0.5);
			\fill[black!10!white] (\theendofthisgrid-\offsethere*0.5*\horizontalscale-\thelengthhere*0.5*\horizontalscale,-\i*0.5+0.5) rectangle (\theendofthisgrid,-\i*0.5);
		}

		\draw (0,0) -- (0,-3*0.5);
		\draw (\theendofthisgrid,0) -- (\theendofthisgrid,-3*0.5);
		\draw[red] (\theendoffourthhalf,0) -- (\theendoffourthhalf,-3*0.5);

		\foreach \i in {0,...,3} {
			\draw (\theendoffirsthalf,-\i*0.5) -- (0,-\i*0.5) ;
			\draw (\thebeginingofsecondhalf,-\i*0.5) -- (\theendofthisgrid,-\i*0.5) ;
			\draw[red] (\theendofthisgrid,-\i*0.5) --(\theendofthirdhalf,-\i*0.5) ;
			\draw[red] (\thebeginoffourthhalf,-\i*0.5) --(\theendoffourthhalf,-\i*0.5) ;

		}

		\foreach \a/\z in {1/4,2/2,3/3}
		{
			\pgfmathparse{\blocklength[\a]}\def\thelengthhere{\pgfmathresult}
			\foreach \x in {1,...,\z}
			{

				\draw ($(\thelengthhere*\horizontalscale*0.5*\x,0.5-0.5*\a)$) -- ($(\thelengthhere*\horizontalscale*0.5*\x,0.5-0.5*\a-0.5)$);
			}
		}
		\foreach \a/\z in {1/2,2/1,3/1}
		{
			\pgfmathsetmacro\thelengthhere{\blocklength[\a]}
			\pgfmathsetmacro\offsethere{\offsetsmiddle[\a]}
			\foreach \x in {0,...,\z}
			{

				\draw ($(\theendofthisgrid-\offsethere*\horizontalscale*0.5-\thelengthhere*\horizontalscale*0.5*\x,0.5-0.5*\a)$) -- ($(\theendofthisgrid-\offsethere*\horizontalscale*0.5-\thelengthhere*\horizontalscale*0.5*\x,0.5-0.5*\a-0.5)$);
			}
		}

		\foreach \a/\z in {1/2,2/1,3/1}
		{
			\pgfmathsetmacro\thelengthhere{\blocklength[\a]}
			\pgfmathsetmacro\offsethere{\offsetsmiddle[\a]}
			\foreach \x in {1,...,\z}
			{
				 \draw[red]
				 ($(\theendofthisgrid-\offsethere*\horizontalscale*0.5+\thelengthhere*\horizontalscale*0.5*\x,0.5-0.5*\a)$) -- ($(\theendofthisgrid-\offsethere*\horizontalscale*0.5+\thelengthhere*\horizontalscale*0.5*\x,0.5-0.5*\a-0.5)$);
			}
		}
		\foreach \a/\z in {1/1,2/1,3/1}
		{
			\pgfmathsetmacro\thelengthhere{\blocklength[\a]}
			\foreach \x in {1,...,\z}
			{

				\draw[red] ($(\theendoffourthhalf-\thelengthhere*\horizontalscale*0.5*\x,0.5-0.5*\a)$) -- ($(\theendoffourthhalf-\thelengthhere*\horizontalscale*0.5*\x,0.5-0.5*\a-0.5)$);
			}
		}

		\foreach \x/\y/\theclasshere/\directionhere/\stylehere in {
			1/2/A/left/dotted,1/13/A/left/draw,1/24/A/left/draw,1/35/A/left/draw,1/46/A/left/draw,
			1/6/B/right/dotted,1/17/B/right/draw,1/28/B/right/draw,1/39/B/right/draw,1/50/B/right/draw,
			2/2/C/right/dotted,2/20/C/right/draw,2/38/C/right/draw,
			2/12/E/left/dotted,2/30/E/left/draw,2/48/E/left/draw,
			3/2/F/left/dotted,3/18/F/left/draw,3/34/F/left/draw,3/50/F/left/draw,
			3/9/D/right/dotted,3/25/D/right/draw,3/41/D/right/draw}
		{

			\begin{scope}
				\clip (0,0) rectangle (\theendoffirsthalf,-3);
			\node[gridnode, equivclass\theclasshere] (a) at (\y*\horizontalscale*0.5,-\x*0.5+0.25) {};
			\pgfmathparse{\blocklength[\x]}\def\thelengthhere{\pgfmathresult}
			\node[gridnode, equivclass\theclasshere] (b) at (\y*\horizontalscale*0.5+\thelengthhere*0.5*\horizontalscale,-\x*0.5+0.25) {};
			\path[draw, thick, equivclass\theclasshere, \stylehere]
			(a) to[bend \directionhere, distance = 3mm] (b);

			\end{scope}
		}

		\foreach \x/\y/\theclasshere/\directionhere/\stylehere in {
			1/52/A/left/draw,1/63/A/left/draw,1/74/A/left/dotted,1/85/A/left/dotted,1/96/A/left/dotted,1/107/A/left/dotted,1/118/A/left/dotted,
			1/56/B/right/draw,1/67/B/right/draw,1/78/B/right/dotted,1/89/B/right/dotted,1/100/B/right/dotted,1/111/B/right/dotted,
			2/44/C/right/draw,2/62/C/right/dotted,2/80/C/right/dotted,2/98/C/right/dotted,2/116/C/right/dotted,
			2/54/E/left/draw,2/72/E/left/dotted,2/90/E/left/dotted,2/108/E/left/dotted,
			3/60/F/left/dotted,3/76/F/left/dotted,3/92/F/left/dotted,3/108/F/left/dotted,
			3/41/D/right/draw,3/51/D/right/draw,3/67/D/right/dotted,3/83/D/right/dotted,3/99/D/right/dotted,3/115/D/right/dotted}
		{

			\begin{scope}
				\clip (\thebeginingofsecondhalf,0) rectangle (\theendofthirdhalf,-3);

				\node[gridnode, equivclass\theclasshere] (a) at (\y*\horizontalscale*0.5,-\x*0.5+0.25) {};
				\pgfmathparse{\blocklength[\x]}\def\thelengthhere{\pgfmathresult}
				\node[gridnode, equivclass\theclasshere] (b) at (\y*\horizontalscale*0.5+\thelengthhere*0.5*\horizontalscale,-\x*0.5+0.25) {};
				\path[draw, thick, equivclass\theclasshere, \stylehere]
				(a) to[bend \directionhere, distance = 3mm] (b);

			\end{scope}
		}

		\foreach \x/\y/\theclasshere/\directionhere/\stylehere in {
			1/120/A/left/dotted,1/131/A/left/dotted,
			1/124/B/right/dotted,1/135/B/right/dotted,
			2/117/C/right/dotted,
			2/127/E/left/dotted,
			3/121/F/left/dotted,
			3/128/D/right/dotted}
		{

			\begin{scope}
				\clip (\thebeginoffourthhalf,0) rectangle (\theendoffourthhalf,-3);

				\node[gridnode, equivclass\theclasshere] (a) at (\y*\horizontalscale*0.5,-\x*0.5+0.25) {};
				\pgfmathparse{\blocklength[\x]}\def\thelengthhere{\pgfmathresult}
				\node[gridnode, equivclass\theclasshere] (b) at (\y*\horizontalscale*0.5+\thelengthhere*0.5*\horizontalscale,-\x*0.5+0.25) {};
				\path[draw, thick, equivclass\theclasshere, \stylehere]
				(a) to[bend \directionhere, distance = 3mm] (b);

			\end{scope}
		}

		\arcadedotsnew{\theendoffirsthalf}{\thebeginingofsecondhalf}
		\arcadedotsnew{\theendofthirdhalf}{\thebeginoffourthhalf}

		\draw [decorate, decoration = {brace}] (-0.125,-3*0.5) -- (-0.125,0)node[pos=0.5,black, left=0.1]	{\strut $k$};
		\draw [decorate, decoration = {brace}] (0,0.1) -- (\ponesize*0.5*\horizontalscale,0.1)node[pos=0.5,black, above=0.1,font=\footnotesize, xshift=-2mm] {\strut $f(k)p_1p_2$};
		\draw [decorate, decoration = {brace}] (\ponesize*0.5*\horizontalscale,0.1) -- (\ponesize*0.5*2*\horizontalscale,0.1) node[pos=0.5,black, above=0.1, font=\footnotesize, xshift=2mm] {\strut $f(k)p_1p_2$};
		\draw [decorate, decoration = {brace}] (\pthreesize*0.5*1*\horizontalscale,-3*0.5-0.1) -- (0,-3*0.5-0.1) node[pos=0.5,black, below=0.1, font=\footnotesize] {\strut $f(k)p_kp_1$};
		\draw [decorate, decoration = {brace}] (\pthreesize*0.5*2*\horizontalscale,-3*0.5-0.1) -- (\pthreesize*0.5*1*\horizontalscale,-3*0.5-0.1) node[pos=0.5,black, below=0.1, font=\footnotesize] {\strut $f(k)p_kp_1$};
	\end{tikzpicture}
	\caption{Sketch of the compression $\compEquiv$ and the additional
		equivalence $\compExtEquiv$.
		The figure shows the intervals of the rows of $C$ (drawn in black) and $C^*$ (additional intervals drawn in red as in Figure~\ref{fig:grids})
		that are identified by the equivalences.
		Some $\compExtEquiv$-equivalence classes are shown, each class in its own color.
		Vertices connected by lines are $\compEquiv$-equivalent and $\compExtEquiv$-equivalent.
		Dotted lines indicate $\compExtEquiv$-equivalence only.
		All vertices in gray areas of $C$ form singleton $\compEquiv$-equivalence classes.}
	\label{fig:grid-equivalence-relation}
\end{figure}
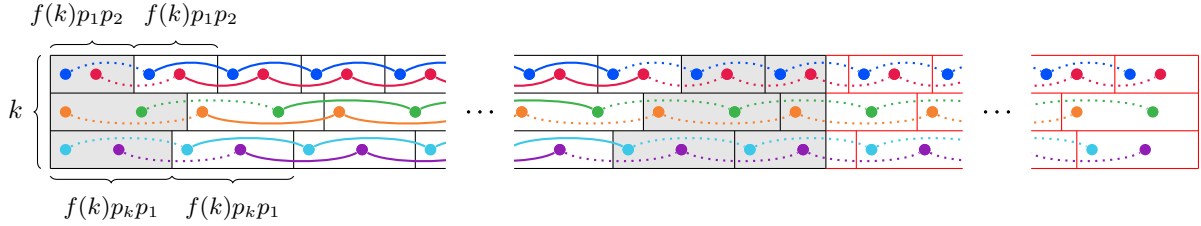

We start by defining an equivalence relation $\compExtEquiv$ on $V(C^*)$.
Vertices are only equivalent to vertices in the same row.
In row~$i$, the distance between equivalent vertices is a multiple of~$f(k) \cdot p_i \cdot p_{i+1}$, where indices are taken modulo~$k$.
More formally, for all~$i,i' \in I$ and~$j, j' \in J^*$, we set
\[(i,j) \compExtEquiv (i',j') \quad \Longleftrightarrow \quad  i = i' \text{ and } (j'-j) \text{ is divisible by } f(k) \cdot p_i \cdot p_{i+1}.\]
From the equivalence~$\compExtEquiv$ we obtain the equivalence $\compEquiv$ on $V(C)$, which is used for the compression (note that $V(C) \subseteq V(C^*)$), as follows.
In row $i$, we use the equivalence $\compExtEquiv$ except that several vertices are made singleton equivalence classes.
Specifically, all vertices in the complete first interval of length $f(k) \cdot p_i \cdot p_{i+1}$ are singletons. Also vertices in the last uncropped repetition of this~$f(k) \cdot p_i \cdot p_{i+1}$ length interval, plus possibly the vertices in a subsequent cropped interval, are singletons.
More formally, for all~$i,i' \in I$ and~$j \leq j' \in J$, we set
\[(i,j) \compEquiv (i',j') \quad \Longleftrightarrow \quad (i,i') \compExtEquiv (i',j') \text { and } j\geq f(k) \cdot p_i \cdot p_{i+1}  \text { and } j' < \lambda_i \cdot f(k) \cdot p_i \cdot p_{i+1},\]
where $\lambda_i$ is the largest integer such that $(\lambda_i+1) \cdot f(k) \cdot p_i \cdot p_{i+1} \leq |J|$.
A visualization is given in Figure~\ref{fig:grid-equivalence-relation}.
In particular, note that all vertices in the first and last~$f(k)$ columns of~$C$ form singleton $\compEquiv$-equivalence classes.

\begin{lemma}
	\label{lem:grid-width}
	The cylindrical grid $C$ has width $(w/2)^k \leq |J| \leq (k+1) \cdot w^{k}$
	and~$\compEquiv$ has $\Theta(k^2 \cdot w^2)$ many equivalence classes.
\end{lemma}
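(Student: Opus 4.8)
The plan is to read $|J|$ off directly from its definition and then to count the $\compEquiv$-equivalence classes one row of $C$ at a time, using that within a fixed row the relation $\compEquiv$ is completely explicit.

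For the width: by definition $|J| = \frac{1}{2}\cdot f(k)\cdot p_1 p_2 \cdots p_k$, where the indices are read modulo~$k$, so that this product runs over all $k$ of the primes $p_0,\dots,p_{k-1}$. Since $k$ is fixed, $f(k)=2k+2$ is a constant, and each factor lies in $(w/2,w]$; hence $p_1\cdots p_k = \Theta(w^k)$ and therefore $|J| = \Theta(w^k)$.

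For the number of classes, fix a row $i \in I$ and write $m_i := f(k)\cdot p_i \cdot p_{i+1}$ (indices modulo~$k$) for the period of $\compExtEquiv$ in that row. From the definition of $\lambda_i$ one has $(\lambda_i+1)m_i \le |J| < (\lambda_i+2)m_i$, and $|J|/m_i = \frac{1}{2}\cdot(p_1\cdots p_k)/(p_i p_{i+1})$ is, up to the constant factor $\frac{1}{2}$, a product of $k-2$ of the distinct primes $p_j$, hence $\Theta(w^{k-2})$; in particular $\lambda_i \ge 2$ once $w$ is large enough. I would then partition the columns $J$ into the three intervals $[0,m_i)$, $[m_i,\lambda_i m_i)$, $[\lambda_i m_i,|J|)$, and observe that, by the conditions $j \ge m_i$ and $j' < \lambda_i m_i$ in the definition of $\compEquiv$, no $\compEquiv$-class meets two of them. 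On the first interval every vertex is a singleton, so it contributes $m_i$ classes. The second interval has length $(\lambda_i-1)m_i$, an exact multiple of $m_i$ with both endpoints multiples of $m_i$, and on it $\compEquiv$ is just congruence modulo~$m_i$; it therefore contributes exactly $m_i$ classes, each of size $\lambda_i-1$. On the third interval every vertex is again a singleton, and by the inequalities above it has between $m_i$ and $2m_i$ columns. So row $i$ contributes $\Theta(m_i) = \Theta(f(k)\,p_i p_{i+1}) = \Theta(w^2)$ classes. Summing over the $k = \Theta(1)$ rows yields $\Theta(w^2)$ equivalence classes in total.

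Most of this is routine bookkeeping; the point that needs a little care is the claim that the middle interval $[m_i,\lambda_i m_i)$ breaks into exactly $m_i$ classes and no more — this uses precisely that its length is an exact multiple of $m_i$ — together with the matching fact that the trailing block of singletons spans only $\Theta(m_i)$ columns, which follows from $(\lambda_i+1)m_i \le |J| < (\lambda_i+2)m_i$. Together these rule out any hidden accumulation of extra classes and thereby secure the upper bound $\Theta(w^2)$; the lower bound $\Omega(w^2)$ is immediate already from the $m_i = \Theta(w^2)$ singletons in the first interval of a single row.
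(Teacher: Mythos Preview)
Your proof is correct and follows essentially the same approach as the paper. The paper simply asserts that each row contributes between $3\,f(k)p_ip_{i+1}$ and $4\,f(k)p_ip_{i+1}$ classes without spelling out why; your three-interval decomposition $[0,m_i)\cup[m_i,\lambda_i m_i)\cup[\lambda_i m_i,|J|)$ is exactly what underlies those bounds, and your more careful bookkeeping (notably the inequalities $(\lambda_i+1)m_i \le |J| < (\lambda_i+2)m_i$ pinning down the tail length) recovers the same $[3m_i,4m_i)$ count. One cosmetic point: you refer to the $p_j$ as ``distinct primes'' once, but the paper only requires pairwise coprimality (though it notes primes suffice); this does not affect the argument.
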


\begin{proof}
	We have $|J| = \frac{1}{2} \cdot f(k) \cdot p_1 \cdot p_2\cdots p_k$.
	Since $w/2 \leq p_i \leq w$ for every $i \in I$, it follows that $(w/2)^k \leq |J| \leq (k+1) \cdot w^{k}$.

	To bound the number of equivalence classes, consider an arbitrary row $i \in I$.
	Then there are at least $3 \cdot f(k) \cdot p_i \cdot p_{i+1}$
	and at most $4 \cdot f(k) \cdot p_i \cdot p_{i+1}$ equivalence classes in the $i$-th row.
	Using again that $w/2 \leq p_i \leq w$ for every $i \in I$, we obtain $\Theta(k^2 \cdot w^2)$ equivalence classes in total.
\end{proof}

\begin{lemma}
	\label{lem:relation-is-compression}
	The equivalence relation~$\compEquiv$ is a $C$-compression.
\end{lemma}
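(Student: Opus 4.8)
The plan is to verify directly the two requirements in the definition of a $C$-compression: that every $\compEquiv$-equivalence class consists of pairwise non-adjacent vertices of $C$, and that all vertices in such a class have the same degree in $C$.

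For non-adjacency, I would argue from the definition of $\compEquiv$, which is a sub-relation of (the restriction to $V(C)$ of) $\compExtEquiv$. Suppose $\vertA = (i,j)$ and $\vertB = (i',j')$ are distinct vertices with $\vertA \compEquiv \vertB$. Then in particular $\vertA \compExtEquiv \vertB$, so $i = i'$ and $|j - j'|$ is a positive multiple of $f(k) \cdot p_i \cdot p_{i+1}$; hence $|j - j'| \geq f(k) \cdot p_i \cdot p_{i+1} \geq f(k) \geq 2$. Since $\vertA$ and $\vertB$ lie in a common row, no vertical edge of $C$ (which always joins vertices in two distinct rows) can connect them, and a horizontal edge of $C$ connects only $(i,j)$ with $(i, j \pm 1)$, which is excluded as $|j - j'| \geq 2$. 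Hence $\vertA$ and $\vertB$ are non-adjacent. It is worth emphasizing that this step uses that the actual base graph is the \emph{cylindrical} grid $C$, which has no wrap-around within a row, in contrast to the auxiliary toroidal grid $C^*$.

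For the degree condition, I would first record the degrees in $C$: since every column of $C$ carries the extra edge $\set{(0,j),(k-1,j)}$ and $k \geq 3$, each vertex has exactly two vertical neighbours, so $(i,j)$ has degree $4$ unless $j \in \set{0,\, |J|-1}$, where it has degree $3$. All of these degree-$3$ vertices lie in the first or last $f(k)$ columns of $C$, and, as noted right after the definition of $\compEquiv$, all such vertices form singleton $\compEquiv$-classes; concretely, a non-singleton class of row $i$ only contains vertices in columns $j$ with $f(k) \cdot p_i \cdot p_{i+1} \leq j < \lambda_i \cdot f(k) \cdot p_i \cdot p_{i+1} \leq |J| - f(k) \cdot p_i \cdot p_{i+1}$, and since $f(k) \cdot p_i \cdot p_{i+1} \geq 2$ this range excludes columns $0$ and $|J|-1$. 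Therefore every non-singleton $\compEquiv$-class is contained in the set of degree-$4$ vertices, while singleton classes trivially have constant degree; in all cases the vertices of a $\compEquiv$-class share a common degree.

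I do not expect a genuine obstacle here; the argument is essentially bookkeeping. The only two points that need a moment of care are (i) invoking the cylindrical rather than toroidal structure of $C$, which is exactly what makes two distinct same-row vertices at column-distance at least $2$ non-adjacent, and (ii) checking that the singletonised initial and final segments of each row really absorb all the degree-$3$ vertices, which is immediate from $f(k) \cdot p_i \cdot p_{i+1} > f(k) \geq 1$ together with the choice of $\lambda_i$.
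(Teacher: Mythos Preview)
Your proposal is correct and follows essentially the same approach as the paper's proof: the paper likewise observes that distinct $\compEquiv$-equivalent vertices lie in the same row at column-distance at least $f(k)$ (hence are non-adjacent), and that such vertices avoid the first and last column (hence both have degree~$4$). Your write-up simply spells out these two observations in more detail.
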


\begin{proof}
	Let $\vertA,\vertB \in V(C)$ such that $\vertA \compEquiv \vertB$ and $\vertA \neq \vertB$.
	We first show that~$\vertA$ and~$\vertB$ are not adjacent and have the same degree.
	By construction,~$\vertA$ and~$\vertB$ have distance at least~$f(k)$, which in particular implies that they are not adjacent.
	Also, $\vertA$ and~$\vertB$ are not in first or last column which implies that both have degree~$4$.

	Next, let $\set{\vertA,\vertB},\set{\vertA',\vertB'}\in E(C)$ such that $\vertA\compEquiv\vertA'$ and $\vertB\compEquiv\vertB'$.
	Then the two edges are either both horizontal edges in the same row or both vertical edges between the same rows.
	Also, $\vertA$ and $\vertA'$ are in the same row and $\vertB$ and $\vertB'$ are in the same row.
	Because the cylindrical grid is ordered lexicographically,
	if $\vertB$ is the $i$-th neighbor of $\vertA$
	then $\vertB'$ is the $i$-th neighbor of $\vertA'$.
\end{proof}

Now that we established that $\compEquiv$ is a compression, we can use all results from Section~\ref{sec:comp-cfi}.
The goal is to prove an $\Omega(w^k)$ lower bound on the iteration number of $k$-WL on compressed CFI graphs over~$C$.
By Lemma \ref{lem:compressed-cops-and-robber},
it suffices to show that the robber has a winning strategy in $\Omega(w^k)$ rounds in the compressed $(k+1)$-Cops and Robber game played on $(C,\compEquiv)$.
Towards this, we define twistings to move the robber.

\begin{definition}[End-to-End Twisting]
	For a set of vertices~$W \subseteq V(C)$, an \defining{end-to-end twisting avoiding~$W$} is a $\compEquiv$-compressible $C$-twisting that
	\begin{enumerate}
		\item twists some edge $e_1$ in the first  column and some edge $e_2$ in the last column of $C$,
		\item twists no other edge, and
		\item fixes every vertex in $\eqclass{\compEquiv}{W}$.
	\end{enumerate}
\end{definition}

We now consider situations in which there are end-to-end twistings avoiding a set~$W\subseteq V(C)$.
For~$\ell \geq 2$, we define the equivalence relation $\approx_\ell$ on $V(C^*)=I \times J^*$ in such a way that,
for all $i,i' \in I$ and $j \leq j' \in J^*$, we have
\[(i,j) \approx_\ell (i',j') \quad \Longleftrightarrow \quad i=i' \text{ and } (j'-j) \text{ is divisible by } \ell.\]
Note that we use the same period in every row for~$\approx_\ell$.
Slightly abusing notation, we also use~$\approx_\ell$ to denote the corresponding equivalence relation on $V(C)$.

\begin{definition}[Pairwise-Separator]
	For a pair of consecutive rows~$i$ and~$i+1$ (indices taken modulo~$k$),
	a set $W \subseteq V(C)$ is a \defining{pairwise-separator for rows $i$ and $i+1$}
	if $\eqclass{\approx_{f(k) p_{i+1}}}{W}$ separates the first and last column in the subgraph of $C$ induced by rows~$i$ and~$i+1$.
	The set $W$ is a \defining{pairwise-separator} if $W$ is a pairwise-separator for rows $i$ and $i+1$ for every $i \in I$.
\end{definition}

A path $P = (\vertA_1, \dots, \vertA_m)$ in $C$ is \defining{$\ell$\nobreakdash-periodic} if
$\vertA_1$ and $\vertA_m$ are in singleton $\compEquiv$\nobreakdash-classes and,
for every $i < m$
and every $\vertB \in V(C)$ such that $\vertB\approx_\ell \vertA_i$
and $\eqclass{\compEquiv}{\vertA_i}$ and $\eqclass{\compEquiv}{\vertB}$
are not singleton classes,
there is a $j < m$ such that
$\vertA_j = \vertB$ and
$\vertA_{i+1} \approx_\ell \vertA_{j+1}$.
The path induces the $C$-twisting
\[T_P \coloneqq \setcond[\big]{(\vertA_i,\vertA_{i-1}),(\vertA_i,\vertA_{i+1})}{1 < i < m}.\]

\begin{lemma}
	\label{lem:periodic-implies-compressible}
	Let $I' \subseteq I$ be a set of rows
	and $P = (\vertA_1, \dots, \vertA_m)$ be a path in $C$
	that only uses vertices from rows in~$I'$.
	Let~$q$ be the greatest common divisor of all~$f(k)p_ip_{i+1}$ for $i \in I'$
	(that is, $q = f(k)p_ip_{i+1}$ if $I' = \set{i}$,
	$q = f(k)p_{i+1}$ if $I'=\set{i,i+1}$, and $q=f(k)$ otherwise).
	If~$P$ is $q$\nobreakdash-periodic,
	then the induced $C$-twisting~$T_P$
	is $\compEquiv$-compressible,
	twists the edges $\set{\vertA_1,\vertA_2}$ and $\set{\vertA_{m-1},\vertA_m}$,
	twists no other edges, and
	fixes all vertices apart from~$u_1,\dots,u_m$.
\end{lemma}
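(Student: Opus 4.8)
The plan is to first pin down exactly which ordered pairs of vertices lie in $T_P$ --- which yields every assertion of the lemma except $\compEquiv$-compressibility --- and then to prove compressibility by transporting path edges along $\approx_q$ via the $q$-periodicity of $P$. For the first part: since $P$ is a path, the vertices $\vertA_1,\dots,\vertA_m$ are pairwise distinct, so every vertex of $C$ equals $\vertA_s$ for at most one index~$s$; hence $(\vertC,\vertC')\in T_P$ iff $\vertC=\vertA_s$ for some $s$ with $1<s<m$ and $\vertC'\in\set{\vertA_{s-1},\vertA_{s+1}}$. Therefore $T_P\cap(\set{\vertC}\times V(C))$ has size~$2$ if $\vertC\in\set{\vertA_2,\dots,\vertA_{m-1}}$ and size~$0$ otherwise, so $T_P$ has even size at every vertex (hence is a $C$-twisting) and fixes every vertex outside $\set{\vertA_1,\dots,\vertA_m}$. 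Moreover, for $1\le t<m$ the pair $(\vertA_t,\vertA_{t+1})$ lies in $T_P$ exactly when $t\ge 2$, and $(\vertA_{t+1},\vertA_t)$ exactly when $t\le m-2$; hence the path edge $\set{\vertA_t,\vertA_{t+1}}$ is twisted iff $t\in\set{1,m-1}$, while no edge off $P$ is touched. This already gives all assertions except $\compEquiv$-compressibility.

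For compressibility, note first that $q$ divides every $f(k)p_ip_{i+1}$ with $i\in I'$, so in any row used by $P$ two $\compEquiv$-equivalent vertices are $\approx_q$-equivalent; also $q$ is a multiple of $f(k)=2k+2$, hence $q>2$. Let $\vertA\compEquiv\vertA'$ with $\vertA\ne\vertA'$; by Lemma~\ref{lem:relation-is-compression} they lie in a common row $i_0$, avoid the first and last column, are non-singleton, and have degree~$4$. If neither $\vertA$ nor $\vertA'$ occurs on $P$, then $T_P$ meets neither $\set{\vertA}\times V(C)$ nor $\set{\vertA'}\times V(C)$ and there is nothing to prove. Otherwise $i_0\in I'$ (as $P$ uses only rows of $I'$), so $\vertA\approx_q\vertA'$, and applying $q$-periodicity at the index of whichever of $\vertA,\vertA'$ lies on $P$ shows that both do. Since the lexicographic order on $I\times J$ compares rows first, for each $j\in[4]$ the $j$-th neighbours of $\vertA$ and of $\vertA'$ point in the same grid direction, and $\vertA'$ arises from $\vertA$ by a column-shift by a multiple of $q$. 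Using the structure of $T_P$ and that a non-singleton vertex is never an endpoint of $P$, we may rewrite $(\vertA,\vertA+d)\in T_P$ as ``$\set{\vertA,\vertA+d}$ is an edge of $P$''; so it remains to show, for each of the four grid directions $d$, that $\set{\vertA,\vertA+d}$ is an edge of $P$ iff $\set{\vertA',\vertA'+d}$ is.

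The core step transports path edges. Suppose $\set{\vertA,\vertA+d}$ is an edge of $P$. If $P$ traverses it as $\vertA\to\vertA+d$, i.e.\ $\vertA=\vertA_s$ and $\vertA_{s+1}=\vertA+d$, apply $q$-periodicity at index~$s$ to $\vertB=\vertA'$: there is $j<m$ with $\vertA_j=\vertA'$ and $\vertA+d\approx_q\vertA_{j+1}$. Among the four neighbours of $\vertA'$, exactly one is $\approx_q$-equivalent to $\vertA+d$, namely $\vertA'+d$: neighbours in distinct rows are never $\approx_q$-equivalent (and since $k\ge 3$, the two ``vertical'' neighbours of $\vertA'$ lie in distinct rows, both different from $i_0$), while the two ``horizontal'' neighbours of $\vertA'$ have column coordinates differing by~$2$, which $q>2$ does not divide. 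Hence $\vertA_{j+1}=\vertA'+d$, so $\set{\vertA',\vertA'+d}$ is an edge of $P$. If $P$ traverses the edge as $\vertA+d\to\vertA$, I would instead pass to the reversed path $Q=(\vertA_m,\dots,\vertA_1)$, which satisfies $T_Q=T_P$; one checks by the same transport-and-match argument that $Q$ is again $q$-periodic, and applying the previous case to $Q$ (where $\vertA\to\vertA+d$ is now a forward step) produces $\set{\vertA',\vertA'+d}$ as an edge of $P$. Swapping $\vertA$ and $\vertA'$ gives the converse, so $T_P$ is $\compEquiv$-compressible.

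The main obstacle is concealed in that last step: the $q$-periodicity condition only constrains steps out of \emph{non-singleton} vertices, so the transport argument --- and the claim that $Q$ is $q$-periodic --- needs separate care precisely when $\vertA+d$, or its $\approx_q$-translate $\vertA'+d$, lies in a singleton $\compEquiv$-class. This can only occur when $\vertA$ or $\vertA'$ is within $\bigO(w^2)$ columns of an end of $C$, where the singleton layout is completely explicit (the first and last $f(k)$ columns are singletons, and in every row the first and last length-$f(k)p_ip_{i+1}$ blocks are singletons; see Figure~\ref{fig:grid-equivalence-relation}), so a direct, somewhat tedious inspection of these boundary configurations closes the argument. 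I expect this bookkeeping near the ends of the grid to be the principal difficulty.
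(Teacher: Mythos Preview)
Your approach is essentially the paper's: read off the twisting and fixing properties directly from the definition of $T_P$ (using that the $\vertA_s$ are distinct), then prove $\compEquiv$-compressibility by taking $\vertA \compEquiv \vertA'$ non-singleton, observing $\vertA \approx_q \vertA'$ because $q$ divides each $f(k)p_ip_{i+1}$ for $i \in I'$, invoking the lexicographic order to see that the $\ell$-th neighbours satisfy $\vertB_\ell \approx_q \vertB'_\ell$, applying $q$-periodicity at $\vertA$ to locate $\vertA'=\vertA_j$ on $P$, and pinning down $\vertA_{j+1}$ as the unique neighbour of $\vertA'$ in the right $\approx_q$-class.

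The only divergence is how the backward-traversal case is handled. The paper dispatches it with a one-line ``because $\vertA$ is neither the first nor the last vertex in $P$, we may assume without loss of generality that $\vertA=\vertA_i$ and $\vertB_\ell=\vertA_{i+1}$'' and does not revisit it. You instead propose reversing the path and checking boundary configurations. You are right to sense that this step is not entirely free---your singleton worry is exactly where the WLOG would need justification---but your reversal route does not actually close the gap: the assertion that $Q$ is again $q$-periodic ``by the same transport-and-match argument'' runs into the very same obstruction, because the periodicity hypothesis is forward-directed (it constrains $\vertA_{i+1}$, not $\vertA_{i-1}$), and transporting predecessors again requires the neighbouring vertex to be non-singleton. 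So you have ended up at the same level of rigor as the paper while adding machinery the paper does not use. You can drop the reversal detour and simply mirror the paper's WLOG; if you want to be more careful than the paper, the cleaner fix in the backward case is to apply periodicity at index $s-1$ (targeting $\vertB'_\ell$) whenever $\vertA_{s-1}$ and $\vertB'_\ell$ are non-singleton, which forces $\vertA_{j-1}=\vertB'_\ell$, leaving only the genuine boundary singleton situation you already isolated.
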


\begin{proof}
	The $C$-twisting $T_P$ twists exactly the edges $\set{\vertA_1,\vertA_2}$ and $\set{\vertA_{m-1}, \vertA_m}$ and fixes all vertices apart from $u_1,\dots,u_m$ by construction.
	To show that~$T_P$ is $\compEquiv$-compressible,
	we have to show that, for all $\vertA \compEquiv \vertA'$ of degree $d$,
	we have $(\vertA,\vertB_\ell) \in T_P$ if and only if $(\vertA',\vertB'_\ell) \in T_P$ for every $\ell \in [d]$,
	where~$\vertB_\ell$ respectively~$\vertB'_\ell$
	is the $\ell$-th neighbor of~$\vertA$ respectively~$\vertA'$.
	It suffices to show one direction of the equivalence,
	the other one follows by swapping $\vertA$ and $\vertA'$.
    So let $\vertA \compEquiv \vertA'$ and~$\vertB_\ell$ respectively~$\vertB'_\ell$
    be the $\ell$-th neighbor of~$\vertA$ respectively~$\vertA'$
    for an arbitrary~$\ell$.
    Assume $(\vertA,\vertB_\ell) \in T_P$.
    We show $(\vertA',\vertB'_\ell) \in T_P$.
	If~$\eqclass{\compEquiv}{\vertA}$ is a singleton class,
	then $\vertA=\vertA'$ and $\vertB_\ell =\vertB'_\ell$,
	and we are done.
	So suppose that~$\eqclass{\compEquiv}{\vertA}$ is not a singleton class.
	In particular,~$\vertA$ and~$\vertA'$ are neither the first nor last vertex in~$P$ because $P$ is $q$-periodic.
	Let $\vertA=(s,t)$ and $\vertA' =(s', t')$.
	We have $s = s'$ because $\vertA \compEquiv \vertA'$.
	Because~$f(k)p_sp_{s+1}$ is divisible by~$q$,
	we also have $\vertA \approx_q \vertA'$.
	Recall that we ordered the vertices lexicographically.
	That is, if~$\vertB_\ell$ is the right neighbor of~$\vertA$,
	then~$\vertB'_\ell$ is the right neighbor of~$\vertA'$ and similar for all other directions.
	Hence, $\vertB_\ell \approx_q \vertB'_\ell$.
	Because $(\vertA,\vertB_\ell) \in T_P$,
	there is an~$i$ such that $\set{\vertA,\vertB_\ell} = \set{\vertA_i, \vertA_{i+1}}$.
	Because~$\vertA$ is neither the first nor the last vertex in~$P$,
	we may assume without loss of generality that $\vertA = \vertA_i$ and $\vertB_\ell = \vertA_{i+1}$.
	Since~$P$ is $q$\nobreakdash-periodic,
	there is a~$j$ such that $\vertA' = \vertA_j$ and $\vertB'_\ell = \vertA_{j+1}$.
	This implies that $(\vertA_j,\vertA_{j+1}) = (\vertA',\vertB'_\ell) \in T_P$.
\end{proof}

We use periodic paths in Lemmas~\ref{lem:no-pair-sep-twisting} and~\ref{lem:no-pseudo-sep-twisting} to construct end-to-end twistings.

\begin{lemma}
	\label{lem:no-pair-sep-twisting}
	Let $W \subseteq V(C)$ be a set of at most $k$ vertices of $C$ such that $W$ is not a pairwise-separator.
	Then there is an end-to-end twisting avoiding~$W$.
\end{lemma}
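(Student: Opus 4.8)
The plan is to produce the required twisting in the form $T_P$ for a suitable path $P$ in $C$ and then invoke Lemma~\ref{lem:periodic-implies-compressible}. Since $W$ is not a pairwise-separator, I fix $i \in I$ for which $W$ is not a pairwise-separator for rows $i$ and $i+1$, and set $\ell \coloneqq f(k)\,p_{i+1}$. Let $L$ be the subgraph of $C$ induced by rows $i$ and $i+1$; because $k \geq 3$ this is a $2 \times |J|$ ladder. By the choice of $i$, the set $S \coloneqq \eqclass{\approx_{\ell}}{W}$ does not separate the first column $\{(i,0),(i+1,0)\}$ from the last column $\{(i,|J|-1),(i+1,|J|-1)\}$ in $L$. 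Moreover, in each of rows $i$ and $i+1$ the period of $\compEquiv$ (that is, $f(k)p_ip_{i+1}$, resp.\ $f(k)p_{i+1}p_{i+2}$, indices mod $k$) is a multiple of $\ell$, so $\compEquiv$-equivalent vertices of $L$ are $\approx_\ell$-equivalent; consequently $\eqclass{\compEquiv}{W}$ meets $V(L)$ only inside $S$. It therefore suffices to build an $\ell$-periodic path $P$ in $C$ that uses only rows $i$ and $i+1$, avoids $S$ except possibly at its two endpoints, starts with the first-column edge $\{(i,0),(i+1,0)\}$, and ends with the last-column edge $\{(i,|J|-1),(i+1,|J|-1)\}$.

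To build $P$ I would first analyse $S$ via its \emph{blocked residues}: for $r \in \ZZ/\ell\ZZ$, say row $i$ (resp.\ row $i+1$) is blocked at $r$ if $S$ meets that row in some column $\equiv r \pmod \ell$. Since $|W| \leq k$, at most $k$ residues are blocked in total. As $S$ is $\approx_\ell$-periodic and does not separate in $L$, a short case analysis of the two consecutive columns that any left-to-right path must cross gives the following ``no wall'' facts: no residue is blocked in both rows, and a residue blocked in one row has distance at least $2$ (mod $\ell$) from every residue blocked in the other row. Combining these with the slack $\ell = (2k+2)p_{i+1} \gg 2k$, I would describe an explicit $\ell$-periodic trajectory across $L$: it runs along row $i$ at residues far from all row-$i$-blocked residues, along row $i+1$ at residues far from all row-$(i+1)$-blocked residues, and weaves between the rows by a single vertical step at the residues adjacent to a blocked residue. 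The ``no wall'' facts ensure that this trajectory never enters $S$ and is a simple path; and since the whole pattern is a function of the column residue mod $\ell$, it satisfies the periodicity condition, the only point needing care being the ``switch'' columns, where one checks that the inserted detour vertex again lies outside $S$ and that its outgoing step agrees with the step taken at every $\approx_\ell$-equivalent column.

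Adjoining to this trajectory the two ``missing'' endpoints in columns $0$ and $|J|-1$ yields a path $P = (u_1,\dots,u_m)$ with $\{u_1,u_2\} = \{(i,0),(i+1,0)\}$ and $\{u_{m-1},u_m\} = \{(i,|J|-1),(i+1,|J|-1)\}$, where in each of these two columns I order the endpoints so that the interior-path vertex ($u_2$, resp.\ $u_{m-1}$) is the one lying on the trajectory and hence outside $S$; this is possible because $S$ cannot block both rows of column $0$ (nor of column $|J|-1$), as otherwise that column would be a wall. Then $P$ uses only rows $i$ and $i+1$, its endpoints $u_1,u_m$ lie in singleton $\compEquiv$-classes (the first and last $f(k)$ columns of $C$ are singletons), $P$ is $\ell$-periodic, and $\ell = f(k)p_{i+1} = \gcd(f(k)p_ip_{i+1},\,f(k)p_{i+1}p_{i+2})$ is exactly the value $q$ appearing in Lemma~\ref{lem:periodic-implies-compressible} for $I' = \{i,i+1\}$ (using pairwise coprimality of the $p_j$); also $\{u_2,\dots,u_{m-1}\}$ avoids $\eqclass{\compEquiv}{W}$. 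By Lemma~\ref{lem:periodic-implies-compressible}, the twisting $T_P$ is $\compEquiv$-compressible, twists exactly the edges $\{u_1,u_2\}$ and $\{u_{m-1},u_m\}$ (in the first and last column, respectively), twists no other edge, and fixes every vertex other than $u_1,\dots,u_m$. Finally, by definition $T_P$ contains no directed edge starting at $u_1$ or at $u_m$, so it fixes $u_1$ and $u_m$ as well, hence fixes all of $\eqclass{\compEquiv}{W}$; thus $T_P$ is an end-to-end twisting avoiding $W$.

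I expect the main obstacle to be the construction of the $\ell$-periodic path. The periodicity condition makes the path's behaviour at a column in the ``bulk'' propagate to all $\approx_\ell$-equivalent columns, so the path is forced to avoid all of $S$ there (not merely $\eqclass{\compEquiv}{W}$) and to weave around the blocked residues in a globally consistent manner, while still having to thread through the boundary regions, where one row already has non-singleton $\compEquiv$-classes but the other is still singleton, in order to reach the prescribed endpoints in columns $0$ and $|J|-1$. The non-separation hypothesis is precisely what guarantees that such a periodic path exists; the delicate part is reformulating it as the ``no wall'' conditions on the blocked residues and then arranging the weaving so that it is simultaneously $S$-avoiding, genuinely $\ell$-periodic, and correctly terminated.
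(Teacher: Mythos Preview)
Your overall plan—build an $\ell$-periodic path in the two rows $i,i+1$ with $\ell=f(k)p_{i+1}$ and invoke Lemma~\ref{lem:periodic-implies-compressible}—is exactly the paper's route. The difference is a case split the paper makes and you omit. The paper first observes that if $W$ misses some row entirely, that row already gives a trivially periodic path; otherwise $|W|=k$ forces \emph{exactly one} vertex of $W$ in each row, so in the two chosen rows there is precisely one blocked residue per row. Non-separation then reduces to the single inequality $(j-j')\bmod \ell \notin\{-1,0,1\}$, and the periodic path has a one-line description: stay in row $i$ except for a three-column detour through row $i+1$ around the unique row-$i$ blocked residue.

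By skipping this split you are forced into the general situation with several blocked residues per row. Your ``no wall'' conditions are indeed equivalent to non-separation in the ladder, but your weaving rule (``a single vertical step at the residues adjacent to a blocked residue'') is underspecified and, read literally, can fail: if row~$i$ is blocked at residues $r$ and $r+2$, your rule tries to return to row $i$ at $r+1$ and immediately leave again, revisiting $(i+1,r+1)$ and destroying simplicity. This is fixable (merge the detours and stay in row $i+1$ across the whole block), but the correct rule is ``be in row $i+1$ on the union of length-$3$ intervals around row-$i$-blocked residues, merged where they overlap''—which is more than you wrote down. Likewise, your ``adjoin the missing endpoint'' step presumes the trajectory hits only one vertex of column~$0$; if column~$0$ is itself a switch column, both vertices are already on the trajectory and the adjunction is ill-defined. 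The paper sidesteps all of this: its path begins at a single vertex of column~$0$ with a horizontal first edge (an ``edge in the first column'' in the paper's sense), so no adjunction is needed.

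In short: your argument can be made to work, but the missing case split is not a cosmetic detail—it is precisely what turns a delicate multi-obstacle weaving problem into a one-parameter explicit path.
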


\begin{figure}
	\centering
	\begin{tikzpicture}[font=\small, scale=0.85]
		\node at (-0.5,0.7) {$i$};
		\node at (-0.5,0) {$i+1$};

		\foreach \i in {1,...,20}{
			\node[gridnode] (v\i) at (0.7*\i,0) {};
			\node[gridnode] (w\i) at (0.7*\i,0.7) {};
		}

		\foreach \i in {1,...,20}{
			\draw (v\i) -- (w\i);
			\draw (v\i) -- ($(v\i.center)-(0,0.3)$);
			\draw (w\i) -- ($(w\i.center)+(0,0.3)$);
		}

		\foreach \i/\j in {1/2,2/3,3/4,4/5,5/6,6/7,7/8,8/9,9/10,10/11,11/12,12/13,13/14,14/15,15/16,16/17,17/18,18/19,19/20}{
			\draw (v\i) -- (v\j);
			\draw (w\i) -- (w\j);
		}

		\draw (v1) -- ($(v1.center)-(0.3,0)$);
		\draw (w1) -- ($(w1.center)-(0.3,0)$);
		\draw (v20) -- ($(v20.center)+(0.3,0)$);
		\draw (w20) -- ($(w20.center)+(0.3,0)$);

		\node[rotate = -90] at (2.8,-0.56) {$j$};
		\node[rotate = -90] at (7.0,-0.9) {$j + q$};
		\node[rotate = -90] at (11.2,-1.0) {$j + 2q$};

		\node at (0.3,0.35) {$\dots$};
		\node at (14.4,0.35) {$\dots$};

		\begin{pgfonlayer}{background}
			\foreach \i in {3,5,9,11,15,17}{
				\draw[line width = 4pt, color = red!60] (v\i.center) -- (w\i.center);
			}

			\foreach \i/\j in {1/2,2/3,5/6,6/7,7/8,8/9,11/12,12/13,13/14,14/15,17/18,18/19,19/20}{
				\draw[line width = 4pt, color = red!60] (w\i.center) -- (w\j.center);
			}

			\foreach \i/\j in {3/4,4/5,9/10,10/11,15/16,16/17}{
				\draw[line width = 4pt, color = red!60] (v\i.center) -- (v\j.center);
			}
			\draw[line width = 4pt, color = red!60] (w1.center) -- ($(w1.center)-(0.3,0)$);
			\draw[line width = 4pt, color = red!60] (w20.center) -- ($(w20.center)+(0.3,0)$);
		\end{pgfonlayer}
	\end{tikzpicture}
	\caption{The figure shows the $q$-periodic path (highlighted in red) for $q = f(k)p_{i+1}$ constructed in the proof of Lemma~\ref{lem:no-pair-sep-twisting}.}
	\label{fig:periodic-path}
\end{figure}
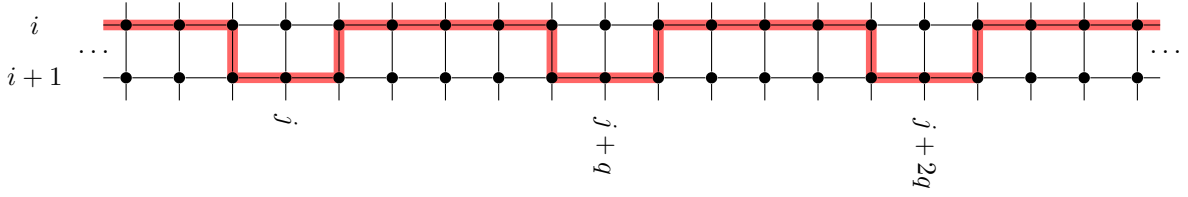

\begin{proof}
	Suppose~$W$ is not a pairwise-separator. If~$W$ does not contain a vertex from every row,
	then there is a row $i$ such that no vertices of row $i$ are in $W$.
	Hence, there is a path from one end to the other that avoids $W$ and only uses vertices of row $i$.
	This path is trivially $(f(k)p_ip_{i+1})$-periodic and thus induces the desired end-to-end twisting by Lemma~\ref{lem:periodic-implies-compressible}.

	So suppose otherwise.
	There is some $i \in I$ such that~$W$ is not a pairwise-separator for rows~$i$ and~$i+1$.
	The set~$W$ contains exactly one element from row~$i$, say~$(i,j)$, and one from row~$i+1$, say~$(i+1,j')$.
	Since $\eqclass{\approx_{f(k) p_{i+1}}}{W}$ does not separate the ends of rows~$i$ and~$i+1$, we conclude that~$(j-j')\bmod f(k) p_{i+1}\notin \set{-1,0,1}$.
	Then the path that contains the vertices
	\begin{align*}
		&\setcond[\big]{(i,\ell) }{ (j-\ell) \bmod f(k) p_{i+1} \neq 0} \text{ and}\\
		&\setcond[\big]{(i+1,\ell) }{ (j-\ell) \bmod f(k) p_{i+1} \in \set{-1,0,1}}
	\end{align*}
	is $(f(k) p_{i+1})$\nobreakdash-periodic (see Figure \ref{fig:periodic-path})
	and induces the desired end-to-end twisting by Lemma~\ref{lem:periodic-implies-compressible}.
\end{proof}

In the cylindrical grid $C$, we are interested in $k$-vertex sets $W\subseteq V(C)$
such that $\eqclass{\compEquiv}{W}$ separates the first from the last column.
To analyze how such separators can be moved,
we consider the toroidal grid $C^*$.
We say that the $j$-th and $j'$-th column ($j,j' \in J^*$) of $C^*$ are \defining{consecutive} if $(j-j') \bmod |J^*| \in \set{-1,0,1}$.

\begin{definition}[Vertical Separator]
	A set $W\subseteq V(C)$ is a \defining{vertical separator}
	if in the graph $C-W$ the first column is separated from the last column.
	A set $W \subseteq V(C^*)$ is a \defining{toroidal vertical separator}
	if there is a number $z \in \ZZ$ such that
	$W$ shifted by $z$ columns in $C^*$, i.e., the set $\setcond{(i,(j+z) \bmod |J^*|)}{(i,j)\in W} \cap V(C)$, is a vertical separator of $C$.
\end{definition}

In particular, every vertical separator is a toroidal vertical separator.
We regularly rely on the following simple properties of $k$-vertex toroidal vertical separators.

\begin{lemma}
	\label{lem:vertical-separator-k-columns}
	Let $S \subseteq V(C^*)$ be a $k$-vertex toroidal vertical separator.
	\begin{enumerate}[label = (\roman*)]
		\item\label{item:vertical-separator-k-columns-1} $S$ contains exactly one vertex per row.
		\item\label{item:vertical-separator-k-columns-2} If $(i,j)$ and $(i+1,j')$ are the vertices in row $i$ and $i+1$ in $S$ (row indices modulo $k$), then the $j$-th and $j'$-th column are consecutive in~$C^*$.
		\item\label{item:vertical-separator-k-columns-3} $S$ spans at most $k$ consecutive  columns of $C^*$,
		i.e., there is column $j\in J^*$ such that $S \subseteq \setcond{(i,(j+\ell)\bmod |J^*|)}{i, \ell \in [0,k-1]}$.
	\end{enumerate}
\end{lemma}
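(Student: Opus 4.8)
The plan is to fix a shift $z \in \ZZ$ witnessing that $S$ is a toroidal vertical separator and to reason about the associated vertical separator $S' \coloneqq \setcond{(i,(j+z)\bmod|J^*|)}{(i,j)\in S}\cap V(C)$ of the cylindrical grid $C$, proving (i)--(iii) for $S'$ inside $C$ and then transporting them back to $S$ via the cyclic column shift. Since $|J^*| > k$, a set occupying at most $k$ cyclically consecutive columns of $C$ is sent to a set occupying at most $k$ consecutive columns of $C^*$, so this last step is routine. For (i): the $k$ rows of $C$ are pairwise vertex-disjoint paths from the first to the last column, so by Menger's theorem every vertical separator of $C$ has at least $k$ vertices, whence $|S'|\ge k$; but intersecting with $V(C)$ after shifting cannot increase cardinality, so $k \le |S'| \le |S| = k$. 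This forces $|S'| = k$ \emph{and}, crucially, that no vertex is lost in the intersection, i.e.\ $S'$ is literally the column shift of $S$. Being a $k$-element hitting set for the $k$ disjoint row-paths, $S'$ meets each row in exactly one vertex, and hence so does $S$; write $(i,a_i)$ for the vertex of $S'$ in row $i$.

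For (ii), let $L$ be the set of vertices of $C-S'$ reachable from the first column and put $R \coloneqq V(C)\setminus(S'\cup L)$; since $S'$ separates the first from the last column, $L$ contains no last-column vertex and no edge of $C$ runs between $L$ and $R$. Inside row $i$ the path $(i,0),\dots,(i,a_i-1)$ avoids $S'$ and starts in the first column, so it lies in $L$, and symmetrically $(i,a_i+1),\dots,(i,|J|-1)$ lies in $R$. If now $a_i \ge a_{i+1}+2$ for some pair of consecutive rows $i,i+1$ (indices mod $k$; recall that $C$ contains the vertical edges $\set{(k-1,c),(0,c)}$), then the column $c \coloneqq a_{i+1}+1$ satisfies $1\le c\le a_i-1$, so $(i,c)\in L$, while $c\le |J|-1$ gives $(i+1,c)\in R$; but $\set{(i,c),(i+1,c)}$ is an edge of $C$ with neither endpoint in $S'$, a contradiction. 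By symmetry $\lvert a_i-a_{i+1}\rvert\le 1$ for every consecutive pair, which under the shift by $z$ is exactly the asserted consecutiveness of columns in $C^*$.

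For (iii), by (ii) the sequence $a_0,\dots,a_{k-1}$ is a cyclic integer sequence whose consecutive terms (including the wrap-around pair $a_{k-1},a_0$) differ by at most $1$; travelling around the cycle from an index minimizing $a_i$ to one maximizing it takes at most $\lfloor k/2\rfloor\le k-1$ steps, so $\max_i a_i-\min_i a_i\le k-1$, and $S'$ is contained in the at most $k$ consecutive columns $\min_i a_i,\dots,\max_i a_i$ of $C$ (which do not wrap, lying in $[0,|J|-1]$). Shifting back by $z$ places $S$ in at most $k$ consecutive columns of $C^*$. I expect the only genuinely delicate point to be the cardinality count in (i): everything afterwards relies on $S'$ being exactly the shift of $S$ so that properties shown for $S'$ transfer to all of $S$; the remaining reasoning is the familiar min-cut / two-sides argument for grid-like separators.
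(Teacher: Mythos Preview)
Your proof is correct and follows essentially the same route as the paper: shift by $z$ to obtain a vertical separator $S'$ of $C$, establish (i)--(iii) for $S'$, and transport back. Your treatment is in fact more careful than the paper's in two places. First, in (i) you explicitly argue via the $k$ disjoint row-paths that $|S'|\ge k$, forcing $|S'|=|S|=k$ and hence that no vertex is lost in the intersection with $V(C)$; the paper's one-line proof of (i) leaves this implicit. Second, for (ii) you use the standard $L/R$ cut-sides argument, whereas the paper exhibits an explicit first-to-last path through the gap; these are equivalent. One small remark: your ``symmetrically $(i,a_i{+}1),\dots,(i,|J|{-}1)$ lies in $R$'' deserves a word of justification---it holds because $(i,|J|{-}1)$ (when not in $S'$) is a last-column vertex, hence not in $L$ by the separator property, and the right segment of the row is connected in $C-S'$.
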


\begin{proof}
	Because there are $k$ rows and $S$ contains $k$ vertices, Part~\ref{item:vertical-separator-k-columns-1} follows.
	For Part~\ref{item:vertical-separator-k-columns-2}, suppose $(i,j), (i+1,j') \in S$ are the unique vertices in rows $i$ and $i+1$ in $S$.
	Since $S$ is a toroidal vertical separator, there is some $z \in \ZZ$ such that $S_z \coloneqq \setcond{(i,(j+z) \bmod |J^*|)}{(i,j)\in W} \cap V(C)$ is a vertical separator of $C$.
	Suppose towards a contradiction that the $j$-th and $j'$-th columns are not consecutive.
	If one of $(i, (j + z) \bmod |J^*|)$ and $(i',(j'+z) \bmod |J^*|)$ is not contained in $V(C)$,
	then there is clearly a path from the first to the last column in $C - S_z$.
	Otherwise, assume by symmetry $(j + z) \bmod |J^*| \leq (j'+z) \bmod |J^*|$.
	Because the $j$-th and $j'$-th column are not consecutive, we have $j + 1 < j'$.
	Hence, $((i+1,0), \dots,(i+1,j+1), (i,j+1),\dots, (i,|J|-1))$ is a path in $C-S_z$.
	In both cases, we obtain a contradiction.
	Finally, Part~\ref{item:vertical-separator-k-columns-3} immediately follows from Part~\ref{item:vertical-separator-k-columns-2}.
\end{proof}

We define~${\approx} \coloneqq {\approx_{f(k)}}$ (as before, slightly abusing notation, we use~${\approx}$ on both graphs~$C$ and~$C^*$).

\begin{definition}[Pseudo-Separator]
	A set $W \subseteq V(C)$ is a \defining{pseudo-separator}
	if $W$ is a pairwise-separator and $\pseudoclass{W}$ is a vertical separator.
\end{definition}

\begin{lemma}
	\label{lem:no-pseudo-sep-twisting}
	Let $W \subseteq V(C)$ be a set of at most $k$ vertices such that $W$ is not a pseudo-separator.
	Then there is an end-to-end twisting avoiding~$W$.
\end{lemma}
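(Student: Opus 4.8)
The plan is to reduce the statement to Lemma~\ref{lem:no-pair-sep-twisting}. Concretely, I claim that if $W$ is not a pseudo-separator then it already fails to be a pairwise-separator, after which Lemma~\ref{lem:no-pair-sep-twisting} directly supplies the desired end-to-end twisting avoiding $W$. Since a pseudo-separator is, by definition, a pairwise-separator $W$ for which $\pseudoclass{W}$ is a vertical separator, it suffices to prove the contrapositive: \emph{every} pairwise-separator $W$ with $|W|\le k$ satisfies that $\pseudoclass{W}$ is a vertical separator of $C$.

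So assume $W$ is a pairwise-separator with $|W|\le k$. First I would determine the shape of $W$. If, for some pair of consecutive rows $i,i+1$ (indices modulo $k$), the set $W$ misses one of these two rows, then that entire row is disjoint from $\eqclass{\approx_{f(k)p_{i+1}}}{W}$, so it contains a path between the first and last column of the subgraph induced by rows $i$ and $i+1$ --- contradicting that $W$ is a pairwise-separator for these rows. Hence $W$ meets every row, and since $|W|\le k=|I|$ it contains exactly one vertex per row; write $W=\set{(i,j_i):i\in I}$. Fixing a pair $i,i+1$, the trace of $\eqclass{\approx_{f(k)p_{i+1}}}{W}$ on these rows consists of the columns $\equiv j_i\pmod{f(k)p_{i+1}}$ in row $i$ and $\equiv j_{i+1}\pmod{f(k)p_{i+1}}$ in row $i+1$; if $(j_i-j_{i+1})\bmod f(k)p_{i+1}\notin\set{-1,0,1}$, then the weaving path constructed in the proof of Lemma~\ref{lem:no-pair-sep-twisting} yields a first-to-last path in this $2\times|J|$ grid avoiding that trace, contradicting the separator property. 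Hence $(j_i-j_{i+1})\bmod f(k)p_{i+1}\in\set{-1,0,1}$, and in particular $(j_i-j_{i+1})\bmod f(k)\in\set{-1,0,1}$, for every $i\in I$.

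Next I would build a staircase barrier inside $\pseudoclass{W}$. Let $\delta_i\in\set{-1,0,1}$ be the representative of $(j_i-j_{i+1})\bmod f(k)$. Telescoping gives $\sum_{i\in I}(j_i-j_{i+1})=0$, so $\sum_{i\in I}\delta_i\equiv 0\pmod{f(k)}$; and since $f(k)=2k+2>k\ge\bigl|\sum_{i\in I}\delta_i\bigr|$, this forces $\sum_{i\in I}\delta_i=0$ --- this is exactly the point where the choice $f(k)=2k+2$ is used. Now pick a column $c_0\equiv j_0\pmod{f(k)}$ strictly inside $C$ (say $k<c_0<|J|-1-k$) and set $c_{i+1}\coloneqq c_i-\delta_i$ for $i=0,\dots,k-1$. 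Then $c_i\equiv j_i\pmod{f(k)}$ for all $i$, $c_k=c_0$, and $|c_i-c_{i+1}|\le 1$, so $B\coloneqq\set{(i,c_i):i\in I}\subseteq\pseudoclass{W}$ has exactly one vertex per row, places the vertices of consecutive rows in consecutive columns, and spans at most $k$ columns in the interior of $C$ --- precisely the shape of a $k$-vertex vertical separator as analysed in Lemma~\ref{lem:vertical-separator-k-columns}. It then remains to observe that such a set is a vertical separator of $C$: it forms a barrier winding exactly once around the cyclic (vertical) direction of the cylinder and lying strictly between the first and the last column, so every path from the first to the last column of $C$ must meet it. Consequently $\pseudoclass{W}\supseteq B$ is a vertical separator, $W$ is a pseudo-separator, and the contrapositive is established; together with Lemma~\ref{lem:no-pair-sep-twisting} this proves the lemma.

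The step I expect to be the main obstacle is the last one: verifying rigorously that the staircase $B$ really does separate the two ends of the cylinder. Intuitively it is a wall wrapping once around the vertical cycle, and its shape coincides with that of the $k$-vertex vertical separators described in Lemma~\ref{lem:vertical-separator-k-columns}, but making this airtight needs a small Jordan-curve-type argument inside the cylinder, and one must keep $B$ away from the boundary columns --- which is why $c_0$ has to be chosen in the interior, and why one needs the divisibility-and-size bookkeeping with $f(k)=2k+2$ to ensure the wall closes up rather than drifting by a nonzero multiple of $f(k)$ per turn.
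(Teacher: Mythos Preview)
Your proof is correct and takes a genuinely different route from the paper. The paper splits into two cases: if $W$ is not a pairwise-separator it invokes Lemma~\ref{lem:no-pair-sep-twisting}, and otherwise (when $W$ is a pairwise-separator but $\pseudoclass{W}$ is not a vertical separator) it explicitly constructs an $f(k)$-periodic path by finding a short avoiding path $P'$ within one period and replicating it across all periods. You instead prove that the second case is vacuous: every pairwise-separator of size at most $k$ is automatically a pseudo-separator, so the whole lemma reduces to Lemma~\ref{lem:no-pair-sep-twisting}. Your telescoping argument (using $\sum_i\delta_i\equiv 0\pmod{f(k)}$ together with $|\sum_i\delta_i|\le k<f(k)$ to force $\sum_i\delta_i=0$) is the key new step, and it is sound. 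The advantage of your approach is conceptual: it shows that, for $|W|\le k$, the two separator notions coincide, so the pseudo-separator condition adds nothing beyond the pairwise one. The paper's approach, by contrast, is more constructive in the (vacuous) second case and does not rely on the cyclic cancellation. Regarding the step you flag as the main obstacle, the staircase $B=\{(i,c_i):i\in I\}$ is easily seen to be a vertical separator via the potential $\phi(i,j)=\mathbbm 1[j\ge c_i]$: along any edge of $C$ the value of $\phi$ can change only at a vertex of $B$ (for a horizontal edge $(i,j)\text{--}(i,j{+}1)$ with $\phi$ jumping we get $c_i=j{+}1$; for a vertical edge between rows $i$ and $i'$ with $\phi$ jumping we get $c_{i'}\le j<c_i$, whence $c_{i'}=c_i-1=j$ by $|c_i-c_{i'}|\le 1$). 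This is entirely elementary and requires no genuine Jordan-curve machinery.
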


\begin{proof}
	If $W$ is not a pairwise-separator,
	then there is an end-to-end twisting avoiding~$W$ by Lemma~\ref{lem:no-pair-sep-twisting}.

	Otherwise, $\pseudoclass{W}$ is not a vertical separator.
	In this case, there is a path~$P$ from a vertex in column~$0$ to a vertex in column $2f(k)-1$ that avoids~$\pseudoclass{W}$.
	Since~$W$ has at most~$k$ elements and~$f(k)>k$, there is a column~$j \in [0,f(k)-1]$ that contains no element of~$\pseudoclass{W}$.
	Let~$P'$ be a subpath of~$P$ containing exactly one vertex from column~$j$,
	one vertex from column~$j+f(k)$ and otherwise only vertices in the columns between~$j$ and~$j+f(k)$.
	Such a subpath exists because~$P$ starts in column~$0$ and ends in column~$2f(k)-1$.

	Suppose~$(i,j)$ is the one endpoint of~$P'$ and~$(i',j+f(k))$ is the other.
	We extend~$P'$ to a path~$Q$ by adding as initial segment a path from~$(i',j)$ to~$(i,j)$ in column $j$ (if~$i=i'$ then~$Q=P'$).
	Note that~$Q$ is a path spanning~$f(k)$ columns which starts and ends in the same row.
	Also note that~$Q$ avoids~$\pseudoclass{W}$.

	Finally, we construct the path~$\widehat{Q}$ by shifting~$Q$ by multiples of~$f(k)$,
	that is, if $Q$ uses vertices $V(Q)$ and edges $E(Q)$,
	then~$\widehat{Q}$ has vertices
	\[V(\widehat{Q}) \coloneqq \setcond[\big]{(i,j + z \cdot f(k))}{(i,j) \in V(Q), z\in \mathbb{Z}}\cap V(C)\]
	and edges
	\[E(\widehat{Q}) \coloneqq \setcond[\Big]{\set[\big]{(i,j + z \cdot f(k)),(i',j' + z \cdot f(k))}}{\set[\big]{(i,j),(i',j')} \in E(Q), z\in \mathbb{Z}}\cap E(C).\]
	The path $\widehat{Q}$ is $f(k)$-periodic by construction and induces the desired end-to-end twisting avoiding~$W$ by Lemma~\ref{lem:periodic-implies-compressible}
	(note that $\widehat{Q}$ is trivially $(f(k)p)$-periodic for every positive $p \in \nat$).
\end{proof}

If $W$ is a pseudo-separator, we cannot construct end-to-end twistings avoiding~$W$ in general.
In this case, the goal is to show that pseudo-separators not containing proper vertical separators cannot be used to catch the robber.

\begin{lemma}
	\label{lem:only-one-separator}
	Let $W \subseteq V(C)$ be a set of at most~$k$ vertices.
	\begin{enumerate}
		\item There is at most one $k$-vertex toroidal
                  vertical separator $S_W \subseteq
                  \eqclass{\compExtEquiv}{W}$
		(and thus at most one vertical separator).
		\item If $\eqclass{\compExtEquiv}{W}$ is a toroidal vertical separator,
		then there is a $k$-vertex toroidal vertical separator $S_W \subseteq \eqclass{\compExtEquiv}{W}$.
	\end{enumerate}
\end{lemma}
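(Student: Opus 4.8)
\emph{Overview.} The two parts call for different tools: Part~1 is a number-theoretic rigidity statement proved by a short gcd computation, while Part~2 combines a ``dodging'' argument in two-row ladders with a Chinese-Remainder-style assembly of a staircase. In both parts I first note that a $k$-vertex toroidal vertical separator contained in $\eqclass{\compExtEquiv}{W}$ can only exist when $W$ has exactly one vertex in each row: $\eqclass{\compExtEquiv}{W}$ occupies exactly the rows that $W$ does, and by Lemma~\ref{lem:vertical-separator-k-columns} such a separator meets every row, so with $|W|\le k$ this forces one vertex per row. In particular, for Part~1, if $W$ does not have this form there is nothing to prove.

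\emph{Part 1.} Write $W=\{(i,c_i):i\in I\}$ and let $S=\{(i,a_i)\}$, $S'=\{(i,a_i')\}$ be two $k$-vertex toroidal vertical separators inside $\eqclass{\compExtEquiv}{W}$. By Lemma~\ref{lem:vertical-separator-k-columns} I may choose integer representatives so that $a_{i+1}-a_i=\delta_i\in\{-1,0,1\}$ and $a_{i+1}'-a_i'=\delta_i'\in\{-1,0,1\}$ hold exactly, with all $a_i$ (and all $a_i'$) lying in a single window of $k$ integers; going once around the cycle forces $\sum_i\delta_i=\sum_i\delta_i'=0$. Membership in $\eqclass{\compExtEquiv}{W}$ gives $a_i\equiv a_i'\equiv c_i\pmod{f(k)p_ip_{i+1}}$, so with $e:=a_0-a_0'$, $D_i:=\sum_{\ell<i}\delta_\ell$, $D_i':=\sum_{\ell<i}\delta_\ell'$ we obtain $e\equiv D_i'-D_i\pmod{f(k)p_ip_{i+1}}$ for every $i$. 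Reducing the congruences for $i$ and $i+1$ modulo their gcd $f(k)p_{i+1}$ yields $\delta_i'-\delta_i\equiv 0\pmod{f(k)p_{i+1}}$, and since $|\delta_i'-\delta_i|\le 2<f(k)$ this forces $\delta_i=\delta_i'$, hence $D_i=D_i'$, for all $i$. Then $e\equiv 0\pmod{f(k)p_ip_{i+1}}$ for every $i$; as the $p_i$ are pairwise coprime, the least common multiple of these moduli is exactly $|J^*|=f(k)\prod_{i\in I}p_i$, so $|J^*|\mid e$ and $S=S'$.

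\emph{Part 2.} Here $W$ again has one vertex per row, since if $W$ missed a row $i$ then so would every column-shift of $\eqclass{\compExtEquiv}{W}$, leaving the $i$-th row path of $C$ unblocked and contradicting that $\eqclass{\compExtEquiv}{W}$ is a toroidal vertical separator; so write $W=\{(i,c_i)\}$. The crucial step is to show that being a toroidal vertical separator forces $c_i-c_{i+1}\bmod f(k)p_{i+1}\in\{-1,0,1\}$ for every $i$ (equivalently, that $W$ is a pairwise-separator). I would argue the contrapositive: if this congruence condition fails for some $i$, then by the Chinese Remainder Theorem a column $\equiv c_i\pmod{f(k)p_ip_{i+1}}$ (row $i$) and a column $\equiv c_{i+1}\pmod{f(k)p_{i+1}p_{i+2}}$ (row $i+1$) can never lie within one column of each other, because such a near-coincidence is equivalent, modulo the gcd $f(k)p_{i+1}$, to exactly the forbidden congruence; a routine dodging argument in the two-row ladder of $C$ induced by rows $i$ and $i+1$ then produces, for every shift, a left-to-right path avoiding $\eqclass{\compExtEquiv}{W}$, so no shift of $\eqclass{\compExtEquiv}{W}$ separates $C$.

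\emph{Part 2, assembly.} Given the $k$ congruences, pick $\delta_i\in\{-1,0,1\}$ with $c_{i+1}-c_i\equiv\delta_i\pmod{f(k)p_{i+1}}$ (unique as $f(k)p_{i+1}>3$) and set $D_0:=0$, $D_{i+1}:=D_i+\delta_i$; summing the congruences modulo $f(k)$ around the cycle gives $\sum_i\delta_i\equiv 0\pmod{f(k)}$, hence $\sum_i\delta_i=0$ as $|\sum_i\delta_i|\le k<f(k)$, so $D$ is well-defined on $\ZZ/k\ZZ$. I then solve the simultaneous congruences $a\equiv c_i-D_i\pmod{f(k)p_ip_{i+1}}$, $i\in I$: the pairwise compatibility condition for $\gcd(f(k)p_ip_{i+1},f(k)p_{i'}p_{i'+1})$ is $c_{i+1}-c_i\equiv\delta_i\pmod{f(k)p_{i+1}}$ for adjacent $i,i'$ (which holds), follows for non-adjacent $i,i'$ by telescoping the mod-$f(k)$ congruences along the cycle, and is trivial otherwise; so a solution $a\in\ZZ/|J^*|\ZZ$ exists. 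Then $S_W:=\{(i,(a+D_i)\bmod|J^*|):i\in I\}$ satisfies $(i,a+D_i)\compExtEquiv(i,c_i)$, hence $S_W\subseteq\eqclass{\compExtEquiv}{W}$, has one vertex per row, and has consecutive columns in consecutive rows (wrap-around included, using $\sum_i\delta_i=0$). Finally, shifting $S_W$ into the interior of $C$: the $k$ vertices joined by the short segments between consecutive rows form a non-contractible closed curve in the cylinder $C$, and every edge of $C$ that crosses such a segment is incident to a vertex of $S_W$, so $C$ minus this shift of $S_W$ has no left-to-right path; thus $S_W$ is a $k$-vertex toroidal vertical separator. I expect the dodging step of Part~2 together with organizing the Chinese-Remainder case analysis to be the main work; Part~1 and the assembly step are short once the integer representatives and the increments $\delta_i$ are set up.
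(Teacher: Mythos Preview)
Your Part~1 is correct and close in spirit to the paper's argument, but organized differently: you set up the increment sequence $\delta_i$ first and deduce $\delta_i=\delta_i'$ from a single mod-$f(k)p_{i+1}$ comparison, whereas the paper first bounds $(t-t')\bmod f(k)p_ip_{i+1}$ into $\{-2k,\dots,2k\}$ using that both separators span at most $k$ columns, then combines this with the $f(k)$-divisibility (coming from the row-$s$ congruence) to force the residue to be $0$. Both routes land on ``$(t-t')\equiv 0$ modulo every $f(k)p_ip_{i+1}$'' and finish by coprimality; yours is a bit more streamlined.

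Your Part~2 is correct but substantially more elaborate than what the paper does. The paper's argument is a five-line minimality step: once $W$ has one vertex per row, take a \emph{minimal} toroidal vertical separator $S_W\subseteq\eqclass{\compExtEquiv}{W}$; if $|S_W|>k$ some row contains two vertices $(i,j),(i,j')$, and since any window of $f(k)$ consecutive columns meets at most $k$ vertices of $\eqclass{\compExtEquiv}{W}$ (one per row), there is a completely $S_W$-free column near each of $j,j'$, so one of the two row-$i$ vertices is redundant, contradicting minimality. No pairwise-separator analysis and no Chinese Remainder assembly are needed. Your route---first proving that $\eqclass{\compExtEquiv}{W}$ being a toroidal vertical separator forces $c_i-c_{i+1}\bmod f(k)p_{i+1}\in\{-1,0,1\}$, then solving the simultaneous congruences---does give an explicit formula for $S_W$ and essentially anticipates the content of the next lemma (Lemma~\ref{lem:pseudo-adjacent-to-sep-is-sep}), but for the present statement it is overkill. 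If you want to keep your approach, the ``dodging'' contrapositive and the CRT compatibility checks you outline are fine; just be aware that the examiner may expect the short minimality argument instead.
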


\begin{proof}
	We prove the first claim.
	If~$\eqclass{\compExtEquiv}{W}$ contains some $k$-vertex toroidal vertical separator~$S_W$,
	then~$W$ contains exactly one vertex from every row using Lemma \ref{lem:vertical-separator-k-columns}\ref{item:vertical-separator-k-columns-1}. Suppose~$S'_W\subseteq \eqclass{\compExtEquiv}{W}$ is a second $k$-vertex toroidal vertical separator.
	Note that~$S_W$ and~$S'_W$ also contain exactly one vertex from every row.  Let us make three observations:

	\begin{enumerate}[label = (\alph*)]
		\item\label{item:only-one-separator-1}
		$S_W$ is contained in $k$ consecutive columns of $C^*$ by Lemma~\ref{lem:vertical-separator-k-columns}\ref{item:vertical-separator-k-columns-3}.
		So, if~$(s,t),(i,j)\in S_W$, then~$((t-j) \bmod |J^*|) < k$.
		\item\label{item:only-one-separator-2} Similarly, if~$(s,t'),(i,j')\in S'_W$, then~$((t'-j') \bmod |J^*|) < k$.
		\item\label{item:only-one-separator-3} Suppose~$(i,j)$ and~$(i,j')$ are vertices in row~$i$ in~$S_W$ and~$S'_W$, respectively.
			Both vertices are in particular in~$\eqclass{\compExtEquiv}{W}$.
			Thus,~$(j-j') \bmod f(k) p_{i} p_{i+1} = 0 $.
	\end{enumerate}
	We claim that, for all $i \in I$, all~$(s,t) \in S_W$, and all $(s,t') \in S_W'$, we have
	\begin{equation}
		\label{eq:only-one-separator-1}
		(t-t') \bmod f(k) p_{i} p_{i+1} \in \{-2k,\ldots,2k\}.
	\end{equation}
	To see this, let $(i,t_i)$ be the vertex of row~$i$ in~$S_W$, and let $(i,t_i')$ be the vertex of row~$i$ in~$S_W'$.
	By \ref{item:only-one-separator-3}, we get $(t_i-t_i') \bmod f(k)p_ip_{i+1} = 0$, and by \ref{item:only-one-separator-1} and \ref{item:only-one-separator-2}, we get $((t-t_i) \bmod |J^*|) < k$ and $((t'-t_i') \bmod |J^*|) < k$, respectively.
	This implies the claim since $f(k)p_ip_{i+1}$ divides $|J^*|$.

	Furthermore, by \ref{item:only-one-separator-3} (using~$s$ in place of~$i$) we get that
	\[(t-t') \bmod f(k) p_{s} p_{s+1} = 0\]
	which in particular implies that
	\begin{equation}
		\label{eq:only-one-separator-2}
		(t-t') \bmod f(k) = 0.
              \end{equation}
	Since~$f(k)> 2k$, we can combine Equations~\eqref{eq:only-one-separator-1} and~\eqref{eq:only-one-separator-2} and obtain that
	\[(t-t') \bmod f(k) p_i p_{i+1} = 0\]
	for all $i \in [k]$.
	It follows that
	\[(t-t') \bmod  f(k) p_{1} p_{2}\cdots p_{k} = 0\]
	since all $p_i$ are pairwise coprime.
	This implies $t=t'$.
	So overall, for all~$(s,t)\in S_W$ and all~$(s,t')\in S'_W$, we have~$t=t'$.
	This means that~$S_W = S_W'$.

	It remains to show the second claim.
	Suppose that $\eqclass{\compExtEquiv}{W}$ is a toroidal vertical separator.
	Note that $\eqclass{\compExtEquiv}{W}$ contains a vertex from every row.
	Because~$W$ is of size at most~$k$,
	the set $\eqclass{\compExtEquiv}{W}$ consists of exactly one $\compExtEquiv$-equivalence class per row.
	Let~$S_W$ be a minimal toroidal vertical separator such that $S_W\subseteq \eqclass{\compExtEquiv}{W}$.
	For the sake of contradiction suppose that $|S_W| > k$.
	Assume that~$S_W$ contains two vertices $(i,j)$ and $(i,j')$ of row~$i$.
	Consider the columns $j, \dots, j + f(k)-1$ (column indices modulo $|J^*|$).
	Then $S_W$ contains at most $k$ vertices of these columns
	because $\compExtEquiv$-equivalent vertices in the same row have distance at least $f(k)$ and  $\eqclass{\compExtEquiv}{W}$ contains
	at most one $\compEquiv$-equivalence class per row.
	That is, there is a column between the $j$-th and the $(j+f(k))$-th column
	of which no vertex is contained in~$S_W$.
	Similarly, there is also a column between the $j'$-th and the $(j'+f(k))$-th column of which no vertex is contained in~$S_W$.
	So, for both horizontal directions between columns $j$ and $j'$ in the toroidal grid (one towards to the next larger column index and the other one towards the next smaller column index modulo $|J^*|$),
	there is column  of which no vertex is contained in~$S_W$.
	Hence, at least one of $(i,j)$ and $(i,j')$ can be removed from~$S_W$
	such that we still obtain a vertical toroidal separator.
	This contradicts~$S_W$ being minimal.
\end{proof}

Using these unique $k$-vertex toroidal vertical separators, we show that a pseudo-separator~$W$ that is not a toroidal vertical separator cannot be turned into a toroidal vertical separator by exchanging a single vertex in~$W$.

\begin{lemma}
	\label{lem:pseudo-adjacent-to-sep-is-sep}
	Let~$W, W' \subseteq V(C)$ be pseudo-separators each of size~$k$ differing by at most one vertex~(i.e.,~$|W\cap W'|\geq k-1$).
	Suppose that $\eqclass{\compExtEquiv}{W}$ is a toroidal vertical separator. Then
	\begin{enumerate}
        \item\label{item:pseudo-adjacent-to-sep-is-sep-1}
        there is a $k$-vertex toroidal separator $S_{W'} \subseteq \eqclass{\compExtEquiv}{W'}$,
		\item\label{item:pseudo-adjacent-to-sep-is-sep-2} the unique $k$-vertex toroidal vertical separators~$S_W \subseteq \eqclass{\compExtEquiv}{W}$ and~$S_{W'} \subseteq \eqclass{\compExtEquiv}{W'}$ (see Lemma~\ref{lem:only-one-separator}) differ by at most one vertex~(i.e.,~$|S_W\cap S_{W'}|\geq k-1$), and
		\item\label{item:pseudo-adjacent-to-sep-is-sep-3} every $\vertA \in S_{W'}$ has distance at most~$2$ to~$S_W$ in~$C^*$.
	\end{enumerate}
\end{lemma}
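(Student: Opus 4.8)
\emph{Plan.}
The plan is to reduce all three claims to the behaviour of a single \emph{staircase} separator attached to~$W$. First I would dispose of the degenerate cases: if $W=W'$ everything is trivial, and if $W\neq W'$ but the unique vertex in $W'\setminus W$ is $\compExtEquiv$-equivalent to the unique vertex in $W\setminus W'$, then $\eqclass{\compExtEquiv}{W}=\eqclass{\compExtEquiv}{W'}$ and all three claims are immediate from Lemma~\ref{lem:only-one-separator}. So assume $W\neq W'$ and that these two vertices are not $\compExtEquiv$-equivalent. From $\eqclass{\compExtEquiv}{W}$ being a toroidal vertical separator, Lemma~\ref{lem:only-one-separator}(2) and Lemma~\ref{lem:vertical-separator-k-columns}\ref{item:vertical-separator-k-columns-1} produce the unique $k$-vertex toroidal vertical separator $S_W=\setcond{(i,t_i)}{i\in I}\subseteq\eqclass{\compExtEquiv}{W}$ with exactly one vertex per row, so $W=\setcond{(i,a_i)}{i\in I}$ has one vertex per row with $t_i\equiv a_i\pmod{f(k)p_ip_{i+1}}$; and $W'=\setcond{(i,a'_i)}{i\in I}$ has one vertex per row too (a row of $C$ avoided by $W'$ would leave a free horizontal path in the corresponding two-row ladder, contradicting that $W'$ is a pairwise-separator). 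Hence $W$ and $W'$ differ in exactly one row, say $i_0$, with $a_i=a'_i$ for $i\neq i_0$. By Lemma~\ref{lem:vertical-separator-k-columns}\ref{item:vertical-separator-k-columns-2}--\ref{item:vertical-separator-k-columns-3} the columns $t_i$ and $t_{i+1}$ are consecutive in $C^*$ for every~$i$ and $S_W$ occupies at most $k$ consecutive columns; passing to integer representatives of column coordinates on that short window (harmless since $f(k)p_ip_{i+1}$ divides $|J^*|$), I may regard $t_{i_0-1}-t_{i_0}$ and $t_{i_0+1}-t_{i_0}$ as integers in $\set{-1,0,1}$.

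The core step is to locate the row-$i_0$ vertex of $\eqclass{\compExtEquiv}{W'}$ that should replace $(i_0,t_{i_0})$ in~$S_W$. Using that $W'$ is a pairwise-separator for the row pairs $\set{i_0-1,i_0}$ and $\set{i_0,i_0+1}$, together with the mechanism behind Lemma~\ref{lem:no-pair-sep-twisting} (the snaking $q$-periodic path avoids $\eqclass{\approx_{q}}{W'}$ whenever the two diagonal vertices of a pair differ by more than one column modulo the period~$q$), I get
\[(a_{i_0-1}-a'_{i_0})\bmod f(k)p_{i_0}\in\set{-1,0,1}\quad\text{and}\quad(a'_{i_0}-a_{i_0+1})\bmod f(k)p_{i_0+1}\in\set{-1,0,1}.\]
Combining the first with $t_{i_0-1}\equiv a_{i_0-1}\pmod{f(k)p_{i_0}}$ and the second with $t_{i_0+1}\equiv a_{i_0+1}\pmod{f(k)p_{i_0+1}}$ and absorbing the integer offsets $t_{i_0\pm1}-t_{i_0}$, I obtain $\alpha,\beta\in\set{-2,\dots,2}$, each a sum of a pairwise-separator slack and a staircase offset, with $a'_{i_0}-t_{i_0}\equiv\alpha\pmod{f(k)p_{i_0}}$ and $a'_{i_0}-t_{i_0}\equiv\beta\pmod{f(k)p_{i_0+1}}$. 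Reducing both congruences modulo $f(k)$ and using $f(k)=2k+2>4$ forces $\alpha=\beta=:\epsilon\in\set{-2,\dots,2}$; since $p_{i_0}$ and $p_{i_0+1}$ are coprime, this lifts to $a'_{i_0}-t_{i_0}\equiv\epsilon\pmod{f(k)p_{i_0}p_{i_0+1}}$. Setting $t'_{i_0}:=t_{i_0}+\epsilon$, the vertex $(i_0,t'_{i_0})$ is $\compExtEquiv$-equivalent to $(i_0,a'_{i_0})$, hence lies in $\eqclass{\compExtEquiv}{W'}$; and because the staircase offsets now cancel, $t'_{i_0}-t_{i_0-1}$ and $t'_{i_0}-t_{i_0+1}$ are back in $\set{-1,0,1}$.

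Now set $S':=(S_W\setminus\set{(i_0,t_{i_0})})\cup\set{(i_0,t'_{i_0})}$. By the preceding paragraph $S'$ has one vertex per row, has consecutive columns between every pair of consecutive rows, and occupies at most $k+2<|J^*|$ consecutive columns. Every such ``staircase'' is a toroidal vertical separator: after a cyclic shift placing all its columns $s_i$ strictly between $f(k)$ and $2f(k)$ in~$C$, the sets $L:=\setcond{(i,j)}{j<s_i}$ and $R:=\setcond{(i,j)}{j>s_i}$ partition $V(C)\setminus S'$, contain column~$0$ and column $|J|-1$ respectively, and have no edge of~$C$ between them (a horizontal $L$--$R$ edge would need an integer strictly between $s_i-1$ and~$s_i$, a vertical one an integer strictly between $s_{i+1}$ and~$s_i$, both impossible since $|s_i-s_{i+1}|\le 1$). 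Since $S'\subseteq\eqclass{\compExtEquiv}{W'}$ — the row-$i_0$ vertex by construction, the remaining ones because $a_i=a'_i$ for $i\neq i_0$ — this proves \ref{item:pseudo-adjacent-to-sep-is-sep-1}. Then Lemma~\ref{lem:only-one-separator} gives a \emph{unique} $k$-vertex toroidal vertical separator inside $\eqclass{\compExtEquiv}{W'}$, so $S_{W'}=S'$; hence $S_W$ and $S_{W'}$ differ only in their row-$i_0$ entries, which is \ref{item:pseudo-adjacent-to-sep-is-sep-2}, and the only entry of $S_{W'}$ outside $S_W$ is $(i_0,t'_{i_0})$, at distance $|\epsilon|\le 2$ from $(i_0,t_{i_0})\in S_W$, which is \ref{item:pseudo-adjacent-to-sep-is-sep-3}.

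\emph{Main obstacle.}
I expect the number-theoretic step of the second paragraph to be the crux: one has to control the several independent $\pm1$ slacks (two from ``consecutive column'', two from the pairwise-separator property) living in the two different moduli $f(k)p_{i_0}$ and $f(k)p_{i_0+1}$, and it is precisely the gap $f(k)=2k+2>2k$ between the ``granularity''~$f(k)$ of these moduli and the total accumulated slack, together with the coprimality of the~$p_i$, that makes the two residues collapse to a single small integer~$\epsilon$ and pins $t'_{i_0}$ down to within two columns of $t_{i_0}$. A secondary, routine nuisance is keeping all the column arithmetic well-defined on the cyclic coordinate of~$C^*$, which is handled by working on the short window spanned by~$S_W$.
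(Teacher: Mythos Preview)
Your proof is correct and follows essentially the same route as the paper: both pin down the row-$i_0$ replacement vertex by combining the pairwise-separator constraints for the two adjacent row pairs with the Chinese Remainder Theorem, using that $f(k)>4$ to collapse the two small residues to a single $\epsilon\in\{-2,\dots,2\}$.

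The one noteworthy difference is how the resulting set $S'=(S_W\setminus\{(i_0,t_{i_0})\})\cup\{(i_0,t'_{i_0})\}$ is shown to be a toroidal vertical separator. The paper first invokes the second half of the pseudo-separator hypothesis for $W'$ (that $\pseudoclass{W'}$ is a vertical separator) to obtain a periodic family of $k$-vertex separators $S'_z\subseteq\pseudoclass{W'}$, picks the shift $z$ aligning with $S_W\setminus\{v_i\}$, and only afterwards uses the pairwise-separator/CRT argument to show the new vertex lies in $\eqclass{\compExtEquiv}{W'}$. You bypass this detour: having already shown via the cancellation $\epsilon=\gamma_1+\delta_1=\gamma_2+\delta_2$ that $t'_{i_0}-t_{i_0\pm1}\in\{-1,0,1\}$, you verify directly that any such staircase is a toroidal vertical separator. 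This is a mild streamlining—your argument uses only the pairwise-separator part of the hypothesis on $W'$, never the condition on $\pseudoclass{W'}$—but the two proofs are otherwise the same.
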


\begin{proof}
	Note that~$W$ and~$W'$ each contain exactly one vertex from every row.
	If~$\eqclass{\compExtEquiv}{W} = \eqclass{\compExtEquiv}{W'}$, there is nothing to show, because the set $S_W$ only depends on $\eqclass{\compExtEquiv}{W}$ (Lemma~\ref{lem:only-one-separator}).

	So assume that~$\eqclass{\compExtEquiv}{W} \neq \eqclass{\compExtEquiv}{W'}$ and let $i \in I$ denote the unique row where~$W$ and~$W'$ differ.
	By Lemma~\ref{lem:only-one-separator}, there is a unique $k$-vertex toroidal vertical separator $S_W \subseteq \eqclass{\compExtEquiv}{W}$.
	Let~$v_i = (i,j_i)$ denote the vertex of~$S_W$ in row~$i$.

	\begin{claim}
		\label{claim:find-replacement}
		There is a vertex~$v_i^*= (i,j_i^*) \in \pseudoclass{W'}$ such that~$(S_W\setminus \{v_i\})\cup \{v_i^*\}$ is a toroidal vertical separator.
	\end{claim}
	\begin{claimproof}
		Since~$W'$ is a pseudo-separator, the set~$\pseudoclass{W'}$ contains some $k$-vertex toroidal vertical separator~$S'$.
		The elements in~$S'$ span at most~$k$ consecutive columns by Lemma~\ref{lem:vertical-separator-k-columns}\ref{item:vertical-separator-k-columns-3}.
		This implies that the set~$\pseudoclass{W'}$ is a periodic set of $k$-vertex toroidal vertical separators that repeat every~$f(k)$ columns.
		More precisely, for every $z \in \mathbb{Z}$, we define
		\[S_z' \coloneqq \setcond[\big]{(i,(j + z \cdot f(k)) \bmod |J^*|)}{(i,j) \in S'}.\]
		Every $S_z'$ is a $k$-vertex toroidal vertical separator because we just shifted $S'$ by $z \cdot f(k)$.

		When regarding the columns of elements from different toroidal vertical separators, the distance is at least ${f(k)-k}$.
		Conversely, elements of~$\pseudoclass{W'}$ whose columns are less than~${f(k)-k}$ apart are in the same toroidal vertical separator.
		Thus, there exists an index $z \in \ZZ$ such that~${S_W\setminus \{v_i\} \subseteq S_z'}$.
		Hence, there is a vertex~$v_i^*\coloneqq (i,j_i^*) \in \pseudoclass{W'}$ in row~$i$ so that the set~$(S_W\setminus \{v_i\})\cup \{v_i^*\} = S'_z$ is a toroidal vertical separator.
	\end{claimproof}

	Let $v_i^* = (i,j_i^*) \in \pseudoclass{W'}$ be the vertex from Claim \ref{claim:find-replacement} such that
	$S_{W'} \coloneqq (S_W \setminus \{v_i\})\cup \{v_i^*\}$ is a toroidal vertical separator.
	In particular,~$v_i^*$ has distance at most~$2$ to~$v_i$ in~$C^*$ by Lemma~\ref{lem:vertical-separator-k-columns}\ref{item:vertical-separator-k-columns-2}.

	\begin{claim}
		\label{claim:in-ext-eq-class}
		$S_{W'} \subseteq \eqclass{\compExtEquiv}{W'}$.
	\end{claim}
	\begin{claimproof}
		First observe that $S_W \setminus \{v_i\} \subseteq \eqclass{\compExtEquiv}{W'}$ since $W$ and $W'$ only differ in row $i$.
		So it remains to argue that~$v_i^*\in \eqclass{\compExtEquiv}{W'}$.
		Let~$v_i' = (i,j_i')$ denote the unique vertex from~$W'$ in row~$i$.

		The sets $S_W$ and $(S_W\setminus \{v_i\})\cup \{v_i'\}$ contain the same vertex $(i+1,j_{i+1})$ in row $(i+1)$ (row indices are modulo~$k$).
		Since $W,S_W \subseteq \eqclass{\compExtEquiv}{W}$ are both $k$-element sets containing one~vertex from each row, we conclude that $\eqclass{\compExtEquiv}{W} = \eqclass{\compExtEquiv}{S_W}$.
		Similarly, $\eqclass{\compExtEquiv}{W'} = \eqclass{\compExtEquiv}{((S_W\setminus \{v_i\})\cup \{v_i'\})}$.
		Together with the fact that $W$ and $W'$ are pseudo-separators (and in particular pairwise-separators), this implies
		\[(j_i-j_i') \bmod f(k) p_{i+1} \in \{-2,\ldots,2\}.\]
		Similarly, $S_W$ and $(S_W\setminus \{v_i\})\cup \{v_i'\}$ contain the same element~$(i-1,j_{i-1})$ in row~$(i-1)$.
		Thus,
		\[(j_i-j_i') \bmod f(k) p_i \in \{-2,\ldots,2\}.\]
		In fact, we have more strongly that
		\[(j_i-j_i') \bmod f(k) p_{i+1} = (j_i-j_i') \bmod f(k) p_i.\]
		Together these statements imply that
		\[(j_i-j_i') \bmod f(k)p_i p_{i+1} \in \{-2,\ldots,2\}\]
		by the Chinese Remainder Theorem.
		So there is some vertex~$v_i''=(i,j_i'')$ in row~$i$ in~$\eqclass{\compExtEquiv}{W'}$ with~$((j_i''-j_i) \bmod |J^*|) \in \{-2,\ldots,2\}$.

		Now~$v_i^* = (i,j_i^*)$ and~$v_i''=(i,j_i'')$ are both row~$i$ vertices in~$\pseudoclass{W'}$ in a column at most~$k$ away from~$v_i = (i,j_i)$ in~$C^*$.
		This means~$v_i^*= v_i'' \in \eqclass{\compExtEquiv}{W'}$ since distinct vertices in~$\pseudoclass{W'}$ that are in same row have distance at least~$f(k)$.
	\end{claimproof}

	Hence,~$S_{W'} = (S_W \setminus \{v_i\})\cup \{v_i^*\} \subseteq \eqclass{\compExtEquiv}{W'}$ is a toroidal vertical separator by Claims~\ref{claim:find-replacement} and~\ref{claim:in-ext-eq-class}.
	Together with Lemma \ref{lem:only-one-separator}, this proves Parts~\ref{item:pseudo-adjacent-to-sep-is-sep-1} and~\ref{item:pseudo-adjacent-to-sep-is-sep-2} of the lemma.
	To prove Part~\ref{item:pseudo-adjacent-to-sep-is-sep-3}, let $u \in S_{W'}$.
	If $u \in S_W$, then~$u$ has distance~$0$ to~$S_W$.
	Otherwise, $u = v_i^*$ and $v_i^*$ has distance at most~$2$ to~$v_i \in S_W$ in~$C^*$.
\end{proof}

We are finally ready to consider the compressed Cops and Robber game on the cylindrical grid~$C$ and the compression $\compEquiv$ (see Lemma~\ref{lem:relation-is-compression}).

\begin{lemma}
	\label{ref:lem-robber-lower-bound:new:construction}
	The robber has a winning strategy in the $\Omega(|J|)$-round
	compressed $(k+1)$-Cops and Robber game played on $C$ and $\compEquiv$ when the robber is initially placed on an edge in the first column.
\end{lemma}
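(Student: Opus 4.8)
The plan is to build a winning strategy for the robber in the compressed $(k+1)$-Cops and Robber game on $C$ and $\compEquiv$, starting from an edge in the first column, that survives $\Omega(|J|)$ rounds; since $|J| = \Theta(w^k)$ by Lemma~\ref{lem:grid-width}, this is $\Omega(w^k)$. Note that whenever it is the robber's turn to move, exactly $k$ cops are on the board (one having just been picked up), so the ``at most $k$ vertices'' hypothesis of Lemmas~\ref{lem:no-pseudo-sep-twisting}, \ref{lem:only-one-separator} and~\ref{lem:pseudo-adjacent-to-sep-is-sep} always applies to the current cop set. The robber will keep its edge inside one of the two \emph{end regions} of $C$: the subgraph induced by the first $k+2$ columns, or that induced by the last $k+2$ columns. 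Both lie strictly inside the all-singleton zone (the first/last $f(k)=2k+2$ columns), so each vertex there is its own $\compEquiv$-class. Hence the robber may move its edge to any other edge of its current end region whose two endpoints are cop-free, by taking a path $P$ inside the region joining the two edges and playing the induced twisting $T_P$: it twists exactly the two end-edges of $P$, fixes everything else, and is trivially $\compEquiv$-compressible since only singleton classes are involved (cf.\ Lemma~\ref{lem:periodic-implies-compressible}); alternatively the robber plays the empty twisting to stay put. The robber will switch end regions only when forced, using an end-to-end twisting avoiding the current cop set $W$, which exists by Lemma~\ref{lem:no-pseudo-sep-twisting} as long as $W$ is not a pseudo-separator.

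The progress measure is the position of the unique $k$-vertex toroidal vertical separator. Whenever the on-board cop set $W'$ satisfies that $\eqclass{\compExtEquiv}{W'}$ is a toroidal vertical separator, Lemma~\ref{lem:only-one-separator} supplies a unique $k$-vertex toroidal vertical separator $S \subseteq \eqclass{\compExtEquiv}{W'}$, confined to $k$ consecutive columns by Lemma~\ref{lem:vertical-separator-k-columns}; write $\mathrm{col}(W')$ for the index of that column band. The robber maintains the invariant that after each round: its edge lies in an end region with both endpoints cop-free (so it is not caught); and if the $k{+}1$ cops on the board contain such an $S$, then the robber is in the end region on the side from which $S$ is \emph{farther away}, and $S$ has not yet been pushed past the midpoint of the grid toward that side. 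Since $S$ needs $\ge k$ cops, the end region then contains at most the single ``spare'' cop, and one cop can never cover both endpoints of an edge, so the robber can always dodge the spare inside its region. In particular the invariant makes the robber uncatchable until $S$ is driven to within $O(k)$ columns of an end of the grid.

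Maintaining the invariant is routine separator-pushing, as in Fürer's linear argument, the essential point being that $S$ moves slowly and cannot be relocated or duplicated using the compression. In a round Spoiler picks up a cop and announces a destination $v$; the on-board set $W$ differs from both the previous and the subsequent on-board set by $O(1)$ vertices, so by Lemmas~\ref{lem:only-one-separator} and~\ref{lem:pseudo-adjacent-to-sep-is-sep}\ref{item:pseudo-adjacent-to-sep-is-sep-3} the value $\mathrm{col}(\cdot)$ changes by $O(1)$ over the round (uniqueness is exactly what forbids the separator from jumping to a far column, and this is where the pairwise coprimality of $p_0,\dots,p_{k-1}$ is used). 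If $W$ is a pseudo-separator, the robber is far from $S$, hence not threatened, and just dodges within its region or stays put. If $W$ is not a pseudo-separator, then the cops are in the process of assembling or moving a separator and $S$ (if present among the $k{+}1$ cops) lies near the robber's side; the robber then uses its end-to-end twisting — after, if needed, aligning it with its current first/last-column edge by composing with a path-twisting inside the all-singleton boundary — to flee to the opposite end region, which is cop-free, landing on a cop-free edge there. Each such flee restores a column-distance of $\Omega(|J|)$ between $S$ and the robber's end region, and thereafter the cops keep $W$ a pseudo-separator (so the robber cannot flee again) while advancing $S$ by only $O(1)$ columns per round; since $S$ must come within $O(k)$ columns of the relevant end, the robber survives $\Omega(|J|) = \Omega(w^k)$ rounds.

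The main obstacle is precisely this bookkeeping: formulating the invariant so that the permitted side-switch is available exactly when the robber needs it and lands it on a genuinely cop-free edge, handling the two cop-set changes within a single round so that $\mathrm{col}(\cdot)$ stays $O(1)$-Lipschitz, and — the crux — verifying that $k$ cops spread out across the grid by $\compEquiv$ really cannot build a separator prematurely close to an end region nor shift one quickly across the grid. That last point is the entire reason the periods $p_i$ were chosen pairwise coprime, and it is packaged into Lemmas~\ref{lem:only-one-separator} and~\ref{lem:pseudo-adjacent-to-sep-is-sep}; the remaining work is to chain these together into a self-maintaining robber invariant rather than to prove anything fundamentally new beyond Fürer's strategy on a grid of (uncompressed) width $\Theta(w^k)$.
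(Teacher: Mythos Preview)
Your plan is essentially the paper's own proof: the robber shelters in one of the two $(k{+}2)$-column end zones (which sit inside the all-singleton strip), tracks the unique $k$-vertex toroidal vertical separator $S_W\subseteq \eqclass{\compExtEquiv}{W}$ from Lemmas~\ref{lem:only-one-separator} and~\ref{lem:pseudo-adjacent-to-sep-is-sep} as an $O(1)$-Lipschitz progress measure, and crosses to the other end via the end-to-end twisting of Lemma~\ref{lem:no-pseudo-sep-twisting} when the current $k$-cop set $W$ fails to be a pseudo-separator.

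The one place your sketch needs sharpening is exactly the ``bookkeeping'' you flag. The paper uses a \emph{three}-way case split --- $\eqclass{\compExtEquiv}{W}$ is a toroidal vertical separator; $W$ is a pseudo-separator but $\eqclass{\compExtEquiv}{W}$ is not a toroidal vertical separator; $W$ is not a pseudo-separator --- together with a quantitative threshold $|J|/3$ rather than ``midpoint''. In the third case the robber does \emph{not} automatically flee: it inspects the (at most one) pseudo-separator $W'\subseteq W\cup\{x\}$ about to form, computes $S_{W'}$, and flees only if $S_{W'}$ lies within $|J|/3$ of the robber's current end (otherwise it stays). Your rule ``$W$ not a pseudo-separator $\Rightarrow$ flee to the opposite end'' as written could send the robber to the wrong side, or let the cops force oscillation by repeatedly breaking and re-forming a pseudo-separator near the \emph{far} end; the fix is precisely this look-ahead at $S_{W'}$. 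The middle case of the split is also needed, since there Lemma~\ref{lem:pseudo-adjacent-to-sep-is-sep} (in contrapositive) is what guarantees that no toroidal vertical separator can materialize on the next step either, so the distance invariant is vacuously preserved.
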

\begin{proof}
	We set $\ell \coloneqq |J|/6 - (k+2)$.
	Since $k \geq 3$ we conclude that $(k+2) \leq |J|/15$ and thus, $\ell = \Omega(|J|)$.
	We now show that the robber has a strategy to win the $\ell$-round game.

	In the game, the cops can be placed onto at most $k+1$ many $\compEquiv$-equivalence classes.
	Instead of considering sets of $\compEquiv$-equivalence classes,
	we will consider sets of representatives of these classes:
	A set $X \subseteq V(C)$ of size at most $k+1$ is called a \defining{cop position}.
	Similarly, a set $W \subseteq V(C)$ of size at most $k$ is called an
	\defining{intermediate cop position}.
	With these notions,
	the game is played as follows:
	\begin{enumerate}
		\item In the current cop position $X$,
		the cops player picks up one cop
		from a vertex $\widehat{x} \in X$ (or a cop that is not placed if $|X| \leq k$).
		This results in the intermediate cop position $W = X \setminus \set{\widehat{x}}$
		(or $W = X$ if the cop was not placed).
		The cops player selects a destination $x \in V(C)$ for this cop.
		\item The robber moves using a $\compEquiv$-compressible $C$-twisting
		fixing all vertices in $\eqclass{\compEquiv}{W}$.
		\item The cop picked up is placed on $x$
		resulting in the next cop position $\widehat{X} = W \cup \set{x}$.
		If the robber is caught (with respect to $\eqclass{\compEquiv}{\widehat{X}}$), then the robber loses.
	\end{enumerate}
	Since this process repeats,
	we can also consider the following as one round in this proof:
	\begin{enumerate}
		\item In the current intermediate cop position $W$,
		the cops player selects a destination $x \in V(C)$ for the next cop to place.
		\item The robber moves using a $\compEquiv$-compressible $C$-twisting
		fixing all vertices in $\eqclass{\compEquiv}{W}$.
		\item The cops player places a cop on $x$ resulting in the cop position
		$X = W \cup \set{x}$.
		If the robber is caught (with respect to $\eqclass{\compEquiv}{X}$), then the robber loses.
		Otherwise, the cops player picks up a cop
		from a vertex $\widehat{x} \in X$ (or a cop that is not placed if $|X| \leq k$).
		This results in the next intermediate cop position $\widehat{W} =  X\setminus \set{\widehat{x}}$
		(or $\widehat{W} = X$ if the cop was not placed).
	\end{enumerate}
	We say that in a cop position $X$ (so in particular in an intermediate cop position)
	the robber is \defining{located in a cop-free row},
	if the robber is located in a row not containing any vertex of $\eqclass{\compEquiv}{X}$.
	We prove by induction on the number of rounds $r \leq \ell$
	that the robber has a strategy such that the
	intermediate cop position $W$ after $r$ rounds satisfies the following two invariants:
	\begin{enumerate}[label=(I.\arabic*)]
		\item \label{inv-no-cop} The robber is located at a cop-free column among the first $k+2$ or the last $k+2$ columns of the cylindrical grid~$C$.
		\item \label{inv-distance}
			If $\eqclass{\compExtEquiv}{W}$ is a toroidal vertical separator, then the following holds:
			\begin{itemize}
			 \item If the robber is located within the first $k+2$ columns,
				then the distance in the toroidal grid~$C^*$ between column $0$ (i.e., the set $I \times \set{0}$) and
				the unique $k$-vertex toroidal vertical separator $S_W \subseteq \eqclass{\compExtEquiv}{W}$ (Lemma~\ref{lem:only-one-separator}) is at least $|J|/3 - 2r$.
			\item If the robber is located within the last $k+2$ columns,
				then the distance in the toroidal grid~$C^*$ between column $|J|-1$ (i.e., the set $I \times \set{|J|-1}$) and
				the unique $k$-vertex toroidal vertical separator $S_W \subseteq \eqclass{\compExtEquiv}{W}$ (Lemma~\ref{lem:only-one-separator}) is at least $|J|/3 - 2r$.
			\end{itemize}
	\end{enumerate}
	Note that the invariant is independent of the representative vertices
	chosen in the cop position
	because all vertices in the first and last $k+2$ columns of $C$ are
	singleton $\compEquiv$-equivalence classes.
	Moreover, $\eqclass{\compExtEquiv}{W}$ is independent of the chosen cop position because $\compEquiv$-equivalent vertices are always $\compExtEquiv$-equivalent.

	In the beginning of the game, the invariant clearly holds because no cop is placed.
	So assume $r < \ell$,
	let~$W$ be the current intermediate cop position,
	and let $x \in V(C)$ be the chosen destination for the next cop to place,
	which will result in the cop position $X \coloneqq W \cup \set{x}$.
	Now it is the robber's turn to move.

	By Invariant~\ref{inv-no-cop},
	the robber is located within the first or last $k+2$ columns of $C$.
	Assume that the robber is located within the first $k+2$ columns
	(the argument for the last $k+2$ columns is analogous).
	We make the following case distinction:

	\begin{enumerate}
		\item \label{itm:case-separator} Assume $\eqclass{\compExtEquiv}{W}$ is a toroidal vertical separator.
		The robber picks a column among the first $k+2$
		that does not contain a vertex of~$X$.
		Such a column exists because~$X$ contains vertices from at most $k+1$ columns.
		By Invariant~\ref{inv-distance},
		every vertical separator has distance at least $k+2$ to the first column.
		Hence, $\eqclass{\compExtEquiv}{W}$ does not separate the first $k+2$ columns and
		so there is a path from the robber to the picked column within the first $k+2$ columns.
		Hence, this path only uses vertices in singleton $\compEquiv$-equivalence classes and thus induces a $\compEquiv$-compressible $C$-twisting avoiding~$\eqclass{\compEquiv}{W}$.
		The robber moves to this column.
		Now a cop is placed on~$x$.
		By construction, the column of the robber is still cop-free
		in the cop position~$X$.
		In particular, the robber does not lose in this round.

		Now a cop is picked up again
		resulting in the intermediate cop
		position $\widehat{W}\subseteq X$.
		Because
		the robber was in a cop-free row in the cop position $X$,
		the robber is still in a cop-free row
		and hence Invariant~\ref{inv-no-cop} holds.

		Let $S_W\subseteq \eqclass{\compExtEquiv}{W}$ be the unique $k$-vertex vertical separator contained in $\eqclass{\compExtEquiv}{W}$ (see Lemma~\ref{lem:only-one-separator}).
		In particular, $W$ is a pseudo-separator.
		Because $k$-vertex pseudo-separators have to contain exactly one vertex per row,
		there is at most one other $k$-vertex pseudo-separator $W' \subseteq X$
		distinct from~$W$
		($W'$ contains~$x$ but not the vertex in the same row in~$W$).
		We make another case distinction:
		\begin{itemize}
			\item Assume the other pseudo-separator~$W'$ exists.
			Then, by Lemma~\ref{lem:pseudo-adjacent-to-sep-is-sep},
			$\eqclass{\compExtEquiv}{W'}$ contains a unique $k$-vertex toroidal vertical separator $S_{W'} \subseteq \eqclass{\compExtEquiv}{W'}$
			differing by at most one vertex from~$S_W$.
			This vertex has distance at most~$2$ to~$S_W$.
			If $\eqclass{\compExtEquiv}{\widehat{W}}$ is a toroidal vertical separator,
			then $\widehat{W} = W'$ or $\widehat{W} = W$ (because~$\widehat{W}$ is in particular a pseudo-separator)
			and the unique $k$-vertex toroidal vertical separator contained in $\eqclass{\compExtEquiv}{\widehat{W}}$ is either~$S_W$ or~$S_{W'}$.
			The distance between $S_W$ and the first column
			is at least $|J|/3 -2(r-1)$ by the inductive hypothesis.
			Because the new vertex in $S_{W'}$ has distance at most $2$ to $S_{W}$, the distance between $S_{W'}$ and the first column
			is at least $|J|/3 -2r$.
			Hence, Invariant~\ref{inv-distance} holds.
			If otherwise $\eqclass{\compExtEquiv}{\widehat{W}}$ is not a toroidal vertical separator, then Invariant~\ref{inv-distance} trivially holds.

			\item Otherwise~$W$ is the only $k$-vertex pseudo-separator contained in~$X$.
			If $\eqclass{\compExtEquiv}{\widehat{W}}$ is a toroidal vertical separator, then $\widehat{W} = W$.
			By the inductive hypothesis,
			$S_{W}$ has distance $|J|/3 -2(r-1) \geq |J|/3 -2r$ to the first column.
			Hence, Invariant~\ref{inv-distance} holds.
			If otherwise~$\eqclass{\compExtEquiv}{\widehat{W}}$ is not a toroidal vertical separator, then Invariant~\ref{inv-distance} trivially holds.
		\end{itemize}

		\item Assume $\eqclass{\compExtEquiv}{W}$ is a pseudo-separator but not a toroidal vertical separator.
		In particular, $\eqclass{\compEquiv}{W}$ does not separate the first $k+2$ columns of $C$.
		The robber moves to a column among the first $k+2$ not containing a vertex of $X$ exactly as in Case~\ref{itm:case-separator}.
		A cop is placed on~$x$.
		The robber is still in a cop-free column
		and does not lose in this round.
		A cop is picked up again resulting in the intermediate cop position $\widehat{W}\subseteq X$.
		Invariant~\ref{inv-no-cop} holds
		because the robber is still in a cop-free column.

		As in Case~\ref{itm:case-separator},
		there can be at most one other $k$-vertex pseudo-separator $W' \subseteq  X$
		distinct from~$W$.
		If $\widehat{W}$ is not a pseudo-separator,
		then in particular $\eqclass{\compExtEquiv}{\widehat{W}}$ is not a toroidal vertical separator and Invariant~\ref{inv-distance} trivially holds.

		Otherwise, $\widehat{W}$ is a pseudo-separator.
		Then, by Lemmas~\ref{lem:only-one-separator} and~\ref{lem:pseudo-adjacent-to-sep-is-sep},
		$\eqclass{\compExtEquiv}{\widehat{W}}$ cannot be a toroidal vertical separator and Invariant~\ref{inv-distance} trivially holds.

		\item Lastly, assume $\eqclass{\compExtEquiv}{W}$ is not a pseudo-separator.
		There can be at most one $k$-vertex pseudo-separator $W' \subseteq X$
		(because $k$-vertex pseudo-separators contain exactly one vertex per row and $W$ is not a pseudo-separator).
		We make another case distinction:
		\begin{enumerate}
			\item Assume this pseudo-separator $W'$ exists
			and that $\eqclass{\compExtEquiv}{W'}$ is a toroidal vertical separator.
			Then $\eqclass{\compExtEquiv}{W'}$ contains a unique $k$-vertex toroidal vertical separator $S_{W'} \subseteq \eqclass{\compExtEquiv}{W'}$
			by Lemma~\ref{lem:only-one-separator}.
			We distinguish two more cases:
			\begin{itemize}
				\item Assume the first column has distance at least $|J|/3$ to~$S_{W'}$ in the toroidal grid~$C^*$.
				The robber moves to a column among the first $k+2$ not containing a vertex of~$X$.
				A cop is placed on~$x$.
				The robber does not lose in this round and Invariant~\ref{inv-no-cop} holds as before.
				After a cop is picked up, let $\widehat{W} \subseteq X$ be the resulting intermediate cop position.

				If $\eqclass{\compExtEquiv}{\widehat{W}}$ is a toroidal vertical separator,
				then $\widehat{W} = W'$ and the first column has distance
				at least $|J|/3$ to $S_{W'}$
				satisfying Invariant~\ref{inv-distance}.
				Otherwise, $\eqclass{\compExtEquiv}{\widehat{W}}$ is not a toroidal vertical separator and Invariant~\ref{inv-distance} trivially holds.

				\item Otherwise, the first column has distance less than $|J|/3$ to~$S_{W'}$ in the toroidal grid~$C^*$.
				By construction and Lemma~\ref{lem:vertical-separator-k-columns},
				the last column of the cylindrical grid~$C$ has distance
				at least $|J| - |J|/3 - 3k -3 \geq |J|/3$ to $S_{W'}$ in the toroidal grid~$C^*$
				because~$S_{W'}$ spans at most~$k$ consecutive columns in~$C^*$ (see Figure~\ref{fig:grids}).
				Recall that~$C$ is a subgraph of~$C^*$ and that $C$ has length~$|J|$ and~$C^*$ has length $|J^*| = 2\cdot |J|$.

				Because~$W$ is not a pseudo-separator,
				there exists an end-to-end twisting avoiding~$W$
				by Lemma~\ref{lem:no-pseudo-sep-twisting}
				that twists two edges~$e_1$ and~$e_2$.
				Assume that~$e_1$ is the edge in the first column of~$C$.
				Because the robber is in a cop-free column and this twisting exists,
				there is a $W$-avoiding path from the robber to~$e_1$
				within the first $k+2$ columns.
				Via this path and the end-to-end twisting,
				the robber moves to the last column in~$C$.
				From there, the robber moves to a column not containing a vertex of $X$ among the last $k+2$ as seen earlier for the first $k+2$ columns.

				A cop is placed on~$x$.
				Again, the robber does not lose and Invariant~\ref{inv-no-cop} holds.
				Next, a cop is picked up resulting in the intermediate cop position $\widehat{W} \subseteq X$.
				If~$\widehat{W}$ is a toroidal vertical separator,
				then $\widehat{W} = W'$, the robber has distance at least $|J|/3$ to~$S_{W'}$, and thus Invariant~\ref{inv-distance} holds.
				Otherwise, $\widehat{W}$ is not a toroidal vertical separator
				and Invariant~\ref{inv-distance} trivially holds.
			\end{itemize}

			\item Otherwise, the robber moves to a column among the first $k+2$ not containing a vertex of~$X$.
			This is done as in Case~\ref{itm:case-separator}.
			A cop is placed on~$x$.
			The robber does not lose in this round and Invariant~\ref{inv-no-cop} holds as before.
			After a cop is picked up, the resulting intermediate cop position $\widehat{W} \subseteq X$ is not a toroidal vertical separator because, for no $k$-vertex subset $W'\subseteq X$, the set $\eqclass{\compExtEquiv}{W'}$ is a toroidal vertical separator.
			Thus, Invariant~\ref{inv-distance} trivially holds. \qedhere
         \end{enumerate}
	\end{enumerate}
\end{proof}

\begin{lemma}
	\label{lem:game-lower-bound}
	Let~$f,g \colon E \to \FF_2$ twist a single edge contained in the first column.
	Spoiler wins the bijective~$(k+1)$-pebble game played on $\compCFI{C,f}{ \compEquiv}$ and $\compCFI{C,g}{ \compEquiv}$
	but not before $\Omega(|J|)$ many rounds.
\end{lemma}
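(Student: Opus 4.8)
The plan is to prove the two halves separately --- that Spoiler wins at all, and that no winning strategy wins in fewer than $\Omega(w^k)$ rounds --- by assembling results from Sections~\ref{sec:lin:lower} and~\ref{sec:comp-cfi} with the dimension parameter instantiated as~$k+1\ge 4\ge 3$. Before starting I would record one small point: the unique edge~$e$ twisted by~$f$ and~$g$ lies in the first column of~$C$, and every vertex of the first (and last) $f(k)$ columns of~$C$ forms a singleton $\compEquiv$-class; hence~$e$ is $\compEquiv$-related to no other base edge, so~$f$ is $\compEquiv$-compressible exactly when~$g$ is, and --- as is already implicit in the definition of a compressed CFI graph --- I may assume both are. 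This makes all the cited lemmas applicable.

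For ``Spoiler wins'' I would use that~$C=C_{I,J}$ is a cylindrical grid with exactly~$k$ rows, so the vertex set of any column is a separator of size~$k$. Mirroring the $(k+1)$-cop strategy recalled in Section~\ref{sec:lin:lower} for the $k\times n$ grid --- the only change being that a column of~$C$ is a cycle rather than a path, so the separator is advanced one column at a time along a $(k+1)$-vertex ``staircase'' rather than peeled off from the top --- the cops win the $(k+1)$-Cops and Robber game on~$C$. Since~$f$ and~$g$ twist an odd number of edges (namely one), Lemma~\ref{lem:cops-and-robbers-then-bijective} gives Spoiler a winning strategy in the bijective $(k+1)$-pebble game on $\CFI{C,f}$ and $\CFI{C,g}$, hence --- these being finite graphs --- a win in some fixed number~$r_0$ of rounds: $\CFI{C,f}\not\kequivr{k+1}{r_0}\CFI{C,g}$. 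Parts~(1) and~(2) of Lemma~\ref{lem:distinguish-CFI-graphs} then propagate this to $\compCFI{C,f}{\compEquiv}\not\kequivr{k+1}{r_0}\compCFI{C,g}{\compEquiv}$.

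For the lower bound I would invoke Lemma~\ref{ref:lem-robber-lower-bound:new:construction}: with $r\coloneqq|J|/6-(k+2)=\Omega(w^k)$ (using Lemma~\ref{lem:grid-width}), the robber wins the $r$-round compressed $(k+1)$-Cops and Robber game on~$C$ and~$\compEquiv$ whenever it starts on a first-column edge, in particular on~$e$. As~$e$ is the only twisted edge, Lemma~\ref{lem:compressed-cops-and-robber} converts this robber strategy into a Duplicator winning strategy, $\precompCFI{C,f}{\compEquiv}\kequivr{k+1}{r}\precompCFI{C,g}{\compEquiv}$; and the contrapositive of part~(3) of Lemma~\ref{lem:distinguish-CFI-graphs} (equivalently, Corollary~\ref{cor:bounds-precompressed-compressed}) then yields $\compCFI{C,f}{\compEquiv}\kequivr{k+1}{r-2}\compCFI{C,g}{\compEquiv}$, so Duplicator survives $r-2=\Omega(w^k)$ rounds and Spoiler cannot win sooner. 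Combining the two halves gives the lemma.

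The substance is entirely in the earlier lemmas, so I do not expect a genuine obstacle. The only things to check carefully are the $(k+1)$-cop strategy on the cylindrical grid (routine, a column being a size-$k$ separator) and the $O(1)$ additive round losses incurred when passing between the plain, precompressed, and compressed CFI graphs, which the $\Omega(\cdot)$ absorbs so the bound is not weakened. The subtlest point is simply noticing that the hypothesis ``$f,g$ twist a single first-column edge'' is compatible with $\compEquiv$-compressibility, which holds precisely because first-column vertices are $\compEquiv$-singletons.
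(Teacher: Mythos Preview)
Your proposal is correct and follows essentially the same route as the paper's own proof: both halves are obtained by chaining Lemma~\ref{lem:cops-and-robbers-then-bijective} with Lemma~\ref{lem:distinguish-CFI-graphs} for the upper part, and Lemma~\ref{ref:lem-robber-lower-bound:new:construction} with Lemma~\ref{lem:compressed-cops-and-robber} and Corollary~\ref{cor:bounds-precompressed-compressed} for the lower part, after first noting that the first-column twist makes $f,g$ $\compEquiv$-compressible. The only difference is that you spell out the $(k+1)$-cop strategy on the cylindrical grid explicitly, whereas the paper leaves this implicit in its appeal to Lemma~\ref{lem:cops-and-robbers-then-bijective}.
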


\begin{proof}
	The equivalence relation~$\compEquiv$ is a compression by Lemma~\ref{lem:relation-is-compression}.
	Also, the functions~$f$ and~$g$ are~$\compEquiv$\nobreakdash-compressible because in the first column all vertices are in singleton~$\compEquiv$\nobreakdash-equivalence classes.
	By Lemma \ref{lem:cops-and-robbers-then-bijective}, Spoiler wins the bijective $(k+1)$-pebble game on non-isomorphic CFI graphs over cylindrical grids with $k$ rows.
	So Spoiler also wins on the precompressed CFI graphs $\precompCFI{C,f}{ \compEquiv}$ and $\precompCFI{C,g}{ \compEquiv}$
	and hence, on the compressed CFI graphs $\compCFI{C,f}{ \compEquiv}$ and $\compCFI{C,g}{ \compEquiv}$ by Lemma~\ref{lem:distinguish-CFI-graphs}.

	By Lemma \ref{ref:lem-robber-lower-bound:new:construction},
	the robber wins the $\Omega(|J|)$-round $(k+1)$-Cops and Robber game on~$C$ and~$\compEquiv$ when the robber starts in the first column.
	So Duplicator wins the $\Omega(|J|)$-round bijective $(k+1)$-pebble game played on $\precompCFI{C,f}{ \compEquiv}$ and $\precompCFI{C,g}{ \compEquiv}$
	(Lemma~\ref{lem:compressed-cops-and-robber})
	and on $\compCFI{C,f}{ \compEquiv}$ and $\compCFI{C,g}{ \compEquiv}$
	(Corollary~\ref{cor:bounds-precompressed-compressed}).
\end{proof}

We are now ready to prove the main result of this paper.

\begin{theorem}
	\label{thm:lower-bound-round-number}
	There is a constant $c \in \mathbb{R}_{> 0}$ such that, for all integers $k \geq 2$ and $n \geq c(k \log k)^2$, there are graphs~$G_n,H_n$ such that
	\begin{enumerate}[label = (\alph*)]
		\item $|V(G_n)| = |V(H_n)| \leq c \cdot k^2 \cdot n$,
		\item $G_n \kequivr{k+1}{n^{k/2}} H_n$, and
		\item $G_n \not\kequiv{k+1} H_n$.
	\end{enumerate}
\end{theorem}

\begin{proof}
	The statement is already known for~$k\leq 2$ (see Section~\ref{sec:lin:lower}).
	Let $1 > \varepsilon > 0$ be a sufficiently small constant and $c \in \mathbb{R}_{> 0}$ be sufficiently large (in particularly larger than the constant known for the case $k\leq 2$).
	Also, let $k \geq 2$ and $n \geq c(k \log k)^2$ be arbitrary integers.
	We set $w \coloneqq \frac{2}{\varepsilon} \cdot \lceil\sqrt{n}\,\rceil$.
	Then $w \geq \frac{2}{\varepsilon} \cdot \sqrt{c} \cdot k \cdot \log k$ which implies that Equation \eqref{eq:w-bound} is satisfied (since $c$ is sufficiently large), and there are coprime numbers~$p_0,\ldots,p_{k-1}$ such that~$w/2 < p_i \leq w$ (see Lemma \ref{lem:primes}).

	Let $f,g \colon E \to \FF_2$ twist a single edge contained in the first column of~$C_{I,J}$,
	where~$I$ and~$J$ are defined as before with respect to the numbers $p_0, \dots, p_{k-1}$.
	Consider~$G_n \coloneqq \compCFI{C_{I,J},f}{\compEquiv}$ and~$H_n \coloneqq \compCFI{C_{I,J},g}{\compEquiv}$.
	Then~$|V(G_n)| = |V(H_n)| = \Theta(k^2 \cdot  w^2) = \Theta(k^2 \cdot n)$ by Lemmas~\ref{lem:compressed-cfi-graph-size} and~\ref{lem:grid-width}.
	In particular, $|V(G_n)| = |V(H_n)| \leq c \cdot k^2 \cdot n$ since $c$ is sufficiently large.

	By Lemma \ref{lem:game-lower-bound}, Spoiler wins the bijective $(k+1)$-pebble game played on~$G_n$ and~$H_n$
	but not before $\Omega(|J|)$ many rounds.
	We have
	\[|J| \geq (w/2)^{k} \geq \left(\frac{1}{\varepsilon} \cdot \sqrt{n}\right)^k \geq \frac{1}{\varepsilon} \cdot n^{k/2}.\]
	which implies that
	\[\varepsilon \cdot |J| \geq n^{k/2}.\]
	Since $\varepsilon > 0$ is sufficiently small,
	Spoiler has no winning strategy in the $n^{k/2}$-round game,
	which implies that $G_n \simeq_{k+1}^{n^{k/2}} H_n$.
\end{proof}

\begin{corollary}
	\label{cor:lower-bound-iteration-number}
	There is a constant $c \in \mathbb{R}_{> 0}$ such that, for all integers $k \geq 2$ and $n \geq c(k \log k)^2$, there are graphs~$G_n,H_n$ such that
	\begin{enumerate}[label = (\alph*)]
		\item $|V(G_n)| = |V(H_n)| \leq c \cdot k^2 \cdot n$,
		\item $k$-WL distinguishes $G_n$ and $H_n$, and
		\item $k$-WL requires at least~$n^{k/2}$ rounds to distinguish~$G_n$ and~$H_n$.
	\end{enumerate}
\end{corollary}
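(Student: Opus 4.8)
The plan is to derive Corollary~\ref{cor:lower-bound-iteration-number} directly from Theorem~\ref{thm:lower-bound-round-number} by translating from the bijective $(k+1)$-pebble game back to $k$-WL via Lemma~\ref{lem:k-wl-k-plus-one-bijective-game}. All of the substantive work will already have been carried out in the proof of Theorem~\ref{thm:lower-bound-round-number} and in the lemmas of Sections~\ref{sec:comp-cfi} and~\ref{sec:lower-bound}, so this last step is essentially bookkeeping.

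First I would invoke Theorem~\ref{thm:lower-bound-round-number} to obtain, for the given integer $k$, a function $r\colon\nat\to\nat$ with $r=\Omega(n^{k/2})$ together with graphs $G_n,H_n$ satisfying $|V(G_n)|=|V(H_n)|=\Theta(n)$, $G_n\simeq_{k+1}^{r(n)-1}H_n$, and $G_n\not\simeq_{k+1}^{r(n)}H_n$. Property~(a) of the corollary is then exactly the first of these conditions. For property~(b), I would note that $G_n\not\simeq_{k+1}^{r(n)}H_n$ means that Spoiler wins the $r(n)$-round bijective $(k+1)$-pebble game on $G_n$ and $H_n$, and hence also wins the game without a round bound, i.e., $G_n\not\simeq_{k+1}H_n$; the second part of Lemma~\ref{lem:k-wl-k-plus-one-bijective-game} then gives that $k$-WL distinguishes $G_n$ and $H_n$.

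For property~(c), the key observation is that $G_n\simeq_{k+1}^{r(n)-1}H_n$ together with $G_n\not\simeq_{k+1}^{r(n)}H_n$ says precisely that the minimum number of rounds Spoiler needs to win the bijective $(k+1)$-pebble game on $G_n$ and $H_n$ is exactly $r(n)$. By the final sentence of Lemma~\ref{lem:k-wl-k-plus-one-bijective-game}, the number of rounds after which $k$-WL distinguishes $G_n$ and $H_n$ differs from $r(n)$ by at most $k$, and is therefore at least $r(n)-k$. Since $k$ is a fixed constant and $r(n)=\Omega(n^{k/2})$, we have $r(n)-k=\Omega(n^{k/2})$, which is exactly property~(c). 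The only point that needs care is the translation between the number of rounds of the pebble game and the number of iterations of $k$-WL---the additive $k$-slack, and in~(b) the step from a bounded-round Spoiler win to an unbounded-round win---but this is exactly what Lemma~\ref{lem:k-wl-k-plus-one-bijective-game} is designed to absorb, so I anticipate no genuine obstacle here; any real difficulty of the paper lies upstream, in the compression construction and the analysis of the compressed Cops and Robber game in Lemma~\ref{ref:lem-robber-lower-bound:new:construction}.
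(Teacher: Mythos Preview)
Your proposal is correct and matches the paper's own proof, which simply states that the corollary follows directly from Lemma~\ref{lem:k-wl-k-plus-one-bijective-game} and Theorem~\ref{thm:lower-bound-round-number}. Your write-up merely unpacks this reference in the expected way, handling the additive-$k$ slack between the pebble-game round count and the $k$-WL iteration number exactly as intended.
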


\begin{proof}
	This follows from Lemma \ref{lem:k-wl-k-plus-one-bijective-game} and Theorem \ref{thm:lower-bound-round-number}.
\end{proof}

Finally, Theorem \ref{thm:main} follows directly from Corollary \ref{cor:lower-bound-iteration-number}.

\section{Conclusions}

We prove a lower bound of $\Omega(n^{k/2})$ for the iteration number of the $k$-dimensional Weisfeiler-Leman algorithm on graphs.
This is the first improvement over Fürer's linear lower bound from 2001~\cite{Furer01}.
Furthermore, our lower bound even improves the $n^{\Omega(k)}$ lower bound for the iteration number of $k$-WL on $k$-ary relational structures \cite{GroheLN23} (by establishing the fixed constant $k/2$ for the exponent).
Our lower bound is close to the upper bound of $\bigO(n^{k-1}\log n)$, but we leave it as an open problem where the exact bound is situated.

Our main technical contribution is a novel compression technique for CFI graphs.
There is a well-known connection between the CFI graphs and Boolean XOR-formulas (see \cite{BerkholzN23}).
Hence, our construction can also be viewed as a new compression for XOR-formulas.
Furthermore, the WL-algorithm is related to the polynomial calculus, where the dimension $k$ corresponds to the degree of the polynomials in a derivation \cite{BerkholzG15}.
It will be interesting to explore further implications of our construction for proof complexity.
Indeed, after its first publication, our construction has already been used in proof complexity to show treelike-size vs.~width trade-offs for resolution on graph isomorphism formulas~\cite{BerholzLV24} and size vs.~width and depth vs.~width trade-offs for resolution and size vs.~depth trade-offs for the cutting planes proof system~\cite{deRezendeFJNP2024}.

By a known equivalence between the Weisfeiler-Leman algorithm and the finite-variable fragments of first-order logic \cite{CaiFI92} (also see \cite{Kiefer20}), our lower bound also yields an $\Omega(n^{(k-1)/2})$-lower bound on the quantifier depth of sentences in the $k$-variable fragment of first-order logic (with or without counting) needed to distinguish non-isomorphic graphs of order $n$.

Finally, we remark that $(3k)$-WL can distinguish between the constructed instances using only $\bigO(\log n)$ many rounds (for every fixed $k \geq 2$).
Indeed, using the same ``binary search'' strategy already employed in \cite{Furer01}, $(3k)$-WL only requires $\bigO(\log n)$ rounds to distinguish between $\CFI{C_{I,J},f}$ and $\CFI{C_{I,J},g}$ (assuming $\sum f \neq \sum g$).
Using Lemma \ref{lem:distinguish-CFI-graphs}, the same bounds can be achieved for the compressed versions $\compCFI{C_{I,J},f}{ \compEquiv}$ and $\compCFI{C_{I,J},g}{ \compEquiv}$.
It is an interesting open question whether our construction can be modified so that the iteration number remains large even if the dimension increases.

\renewcommand{\path}{\pathbibtex}
\bibliographystyle{plainurl}
\bibliography{lowerboundwl}

\end{document}